\title{log-Coulomb gas with norm-density in $p$-fields}
\author{Joe Webster}
\date{\today}
\newcommand\R{\mathbb{R}}
\newcommand\Q{\mathbb{Q}}
\newcommand\N{\mathbb{N}}
\newcommand\Z{\mathbb{Z}}
\newcommand\C{\mathbb{C}}
\newcommand\E{\mathbb{E}}
\newcommand\F{\mathbb{F}}
\newcommand\spl{{\bm{\pitchfork}}}
\newcommand\ptn{{\pitchfork}}
\DeclareMathOperator\re{Re}
\DeclareMathOperator\charr{char}
\DeclareMathOperator\rank{rank}
\DeclareMathOperator\supp{supp}
\DeclareMathOperator\Sym{Sym}
\DeclareMathOperator\Aut{Aut}
\DeclareMathOperator\Orb{Orb}
\DeclareMathOperator\Stab{Stab}
\theoremstyle{definition}
\newtheorem{theorem}{Theorem}[section]
\newtheorem{proposition}[theorem]{Proposition}
\newtheorem{lemma}[theorem]{Lemma}
\newtheorem{definition}[theorem]{Definition}
\newtheorem{remark}[theorem]{Remark}
\newtheorem{corollary}[theorem]{Corollary}
\newtheorem{example}[theorem]{Example}
\numberwithin{equation}{subsection}
\begin{document}

\maketitle

\begin{abstract}
The main result of this paper is a formula for the integral $$\int_{K^N}\rho(x)\big(\max_{i<j}|x_i-x_j|\big)^a\big(\min_{i<j}|x_i-x_j|\big)^b\prod_{i<j}|x_i-x_j|^{s_{ij}}\,|dx|~,$$ where $K$ is a $p$-field (i.e., a nonarchimedean local field) with canonical absolute value $|\cdot|$, $N\geq 2$, $a,b\in\C$, the function $\rho:K^N\to\C$ has mild growth and decay conditions and factors through the norm $\|x\|=\max_i|x_i|$, and $|dx|$ is the usual Haar measure on $K^N$. The formula is a finite sum of functions described explicitly by combinatorial data, and the largest open domain of values $(s_{ij})_{i<j}\in\C^{\binom{N}{2}}$ on which the integral converges absolutely is given explicitly in terms of these data and the parameters $a$, $b$, $N$, and $K$. We then specialize the formula to $s_{ij}=\mathfrak{q}_i\mathfrak{q}_j\beta$, where $\mathfrak{q}_1,\mathfrak{q}_2,\dots,\mathfrak{q}_N>0$ represent the charges of an $N$-particle log-Coulomb gas in $K$ with background density $\rho$ and inverse temperature $\beta$. From this specialization we obtain a mixed-charge $p$-field analogue of Mehta's integral formula, as well as formulas and low-temperature limits for the joint moments of $\max_{i<j}|x_i-x_j|$ (the diameter of the gas) and $\min_{i<j}|x_i-x_j|$ (the minimum distance between its particles).
\end{abstract}

\let\thefootnote\relax\footnotetext{\emph{Keywords}: nonarchimedean local field, log-Coulomb gas, local zeta function, set partition, tree

\emph{~~Mathematics subject classification 2010}: 05A18, 10K40, 11S40, 11Y99, 12J99, 32A99, 82A99} 

\newpage
\tableofcontents

\newpage
\section{Introduction}

\subsection{log-Coulomb gas in local fields}\label{1_1}

A topological field $K$ is called a \emph{local field} if it is Hausdorff, non-discrete, and locally compact. As discussed in \cite{Weil}, every such field admits an additive Haar measure $\mu$ which is unique up to normalization. Given a measurable set $M\subset K$ with $0<\mu(M)<\infty$, it can be shown that the function $|\cdot|:K\to\R_{\geq 0}$ defined by $$|x|:=\begin{cases}\sqrt{\mu(xM)/\mu(M)}&\text{if $K\cong\C$},\\\mu(xM)/\mu(M)&\text{otherwise},\end{cases}$$ satisfies the axioms of an absolute value on $K$. In fact, $|\cdot|$ is independent of $M$ and the normalization of $\mu$, the metric topology generated by $|\cdot|$ coincides with the intrinsic topology on $K$, and $K$ is complete with respect to $|\cdot|$. Thus we call $|\cdot|$ the \emph{canonical absolute value} on $K$, denote the closed and open unit balls respectively by $$R:=\{x\in K:|x|\leq 1\}\qquad\text{and}\qquad P:=\{x\in K:|x|<1\}~,$$ and fix a normalization of $\mu$ once and for all by declaring $$\mu(R):=\begin{cases}\pi&\text{if }K\cong\C,\\2&\text{if }K\cong\R,\\1&\text{otherwise.}\end{cases}$$ Given a local field $K$, we henceforth reserve the symbols $|\cdot|$, $R$, $P$, and $\mu$ for the items defined above. Following conventions in \cite{DenefI}, when the integer $N\geq 1$ is understood we write generic elements of the $N$-fold product $K^N$ as $x=(x_1,x_2,\dots,x_N)$ and denote the polynomial ring $K[x_1,x_2,\dots,x_N]$ simply by $K[x]$. We also reserve $\|\cdot\|$ for the standard norm on $K^N$, which is defined by $$\|x\|:=\begin{cases}\sqrt{\sum_{i=1}^N|x_i|^2}&\text{if $K\cong\R$ or $K\cong\C$},\\\displaystyle{\max_{1\leq i\leq N}|x_i|}&\text{otherwise.}\end{cases}$$ This norm makes $K^N$ into a locally compact vector space on which $\mu^N$ is a Haar measure, and we shall write $|dx|$ when integrating against $\mu^N$. Following the setup for $K=\R$ given in \cite{Forrester}, we may define log-Coulomb gas in an arbitrary local field $K$.

\begin{definition}
Let $\mathfrak{q}_1,\mathfrak{q}_2,\dots,\mathfrak{q}_N>0$ be fixed charge magnitudes associated to particles with respective random locations $x_1,x_2,\dots,x_N\in K$. Let $\beta>0$ denote the inverse temperature of the system and choose a nonnegative measurable function $\rho$ on $K^N$ such that $$\mathcal{Z}_N(\beta):=\int_{K^N}\rho(x)\prod_{i<j}|x_i-x_j|^{\mathfrak{q}_i\mathfrak{q}_j\beta}\,|dx|$$ is positive and finite for all $\beta>0$. The system is called a \emph{log-Coulomb gas} if, given $\beta>0$, the vectors $x=(x_1,x_2,\dots,x_N)\in K^N$ have probability density $\frac{1}{\mathcal{Z}_N(\beta)}\rho(x)\prod_{i<j}|x_i-x_j|^{\mathfrak{q}_i\mathfrak{q}_j\beta}\,|dx|$. In this case the vectors $x\in K^N$ are called \emph{microstates} of the system, $\rho$ is called the \emph{background density}, $\mathcal{Z}_N$ is called the \emph{canonical partition function}, and the number of distinct values in $\{\mathfrak{q}_1,\mathfrak{q}_2,\dots,\mathfrak{q}_N\}$ is called the number of \emph{components} of the gas.
\end{definition}

The function $\rho$ should be selected to have fast decay (say, sub-exponential) as $\|x\|\to\infty$ if $\mathcal{Z}_N(\beta)$ is to be finite, i.e., $\rho$ should impose a potential well that keeps the charges from scattering to infinity. We will further assume that $\rho$ is a \emph{norm-density}, meaning it factors through the standard norm $\|\cdot\|:K^N\to\R_{\geq 0}$, and henceforth regard $\rho$ as a function on $\|K^N\|$ instead of $K^N$. On the other hand, the factor $\prod_{i<j}|x_i-x_j|^{\mathfrak{q}_i\mathfrak{q}_j\beta}$ increases with each particle pair distance $|x_i-x_j|$, so mutual repulsion between particles is probabilistically favored. This repulsion is favored more if the gas is cold (i.e., $\beta\gg0$) and less if the gas is hot (i.e., $\beta\approx 0$). Thus microstates $x\in K^N$ satisfying $\min_{i<j}|x_i-x_j|\gg 0$ have high probability if the gas' total energy has little fluctuation (i.e., the gas is cold), while microstates distribute more uniformly throughout the potential well if the energy is allowed larger fluctuations (i.e., the gas is hot). The precise variations of the microstate probability densities with $\beta$ are governed by $\mathcal{Z}_N$, and hence finding an explicit formula for $\mathcal{Z}_N(\beta)$ is a central problem in the study of log-Coulomb gases. 

In the mid-1960's Mehta and Dyson showed that the joint probability density functions of the eigenvalues $x_1,x_2,\dots,x_N\in\R$ for $N\times N$ Gaussian orthogonal, unitary, and real-quaternion matrix ensembles are respectively $$\frac{1}{\mathcal{Z}_N(1)}\rho(\|x\|)\prod_{i<j}|x_i-x_j|,\quad\frac{1}{\mathcal{Z}_N(2)}\rho(\|x\|)\prod_{i<j}|x_i-x_j|^2,\quad\text{and}\quad\frac{1}{\mathcal{Z}_N(4)}\rho(\|x\|)\prod_{i<j}|x_i-x_j|^4~,$$ where $\rho(t)=e^{-\frac{t^2}{2}}$ for all $t\in\|\R^N\|=\R_{\geq 0}$. That is, the eigenvalues form a real one-component log-Coulomb gas in $K=\R$ with charges $\mathfrak{q}_1=\mathfrak{q}_2=\dots=\mathfrak{q}_N=1$, Gaussian background-density, and inverse temperature $1$, $2$, or $4$. Explicit computations of $\mathcal{Z}_N(1)$, $\mathcal{Z}_N(2)$, and $\mathcal{Z}_N(4)$ led Mehta and Dyson to conjecture the following formula, which is known today as \emph{Mehta's integral formula}:

\begin{theorem}[\cite{FW}]\label{MIF}
If $\beta$ is any complex number with $\re(\beta)>-\frac{2}{N}$, then $$\mathcal{Z}_N(\beta)=\int_{\R^N}e^{-\frac{\|x\|^2}{2}}\prod_{i<j}|x_i-x_j|^\beta\,|dx|=(2\pi)^{N/2}\prod_{j=2}^N\frac{\Gamma(1+\frac{j\beta}{2})}{\Gamma(1+\frac{\beta}{2})}~.$$
\end{theorem}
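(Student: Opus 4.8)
The plan is to derive the identity from the Selberg integral by a Gaussian scaling limit, and then to extend it from $\beta\in(0,\infty)$ to the full complex domain $\re(\beta)>-\tfrac{2}{N}$ by analytic continuation. Recall the Selberg integral: for parameters in a suitable range,
$$\int_{[0,1]^N}\prod_{i=1}^N t_i^{\alpha-1}(1-t_i)^{\beta'-1}\prod_{i<j}|t_i-t_j|^{2\gamma}\,dt=\prod_{j=0}^{N-1}\frac{\Gamma(\alpha+j\gamma)\,\Gamma(\beta'+j\gamma)\,\Gamma(1+(j+1)\gamma)}{\Gamma(\alpha+\beta'+(N+j-1)\gamma)\,\Gamma(1+\gamma)}~.$$
First I would set $\alpha=\beta'=\lambda$ and substitute $t_i=\tfrac12+\tfrac{x_i}{2\sqrt{\lambda}}$, so that $[t_i(1-t_i)]^{\lambda-1}=4^{-(\lambda-1)}(1-x_i^2/\lambda)^{\lambda-1}$ and $|t_i-t_j|^{2\gamma}=(2\sqrt{\lambda})^{-2\gamma}|x_i-x_j|^{2\gamma}$; after pulling out the resulting $\lambda$-dependent constants, the left side becomes a positive multiple of $\int_{[-\sqrt{\lambda},\sqrt{\lambda}]^N}\prod_i(1-x_i^2/\lambda)^{\lambda-1}\prod_{i<j}|x_i-x_j|^{2\gamma}\,dx$, which tends to $\int_{\R^N}e^{-\sum_i x_i^2}\prod_{i<j}|x_i-x_j|^{2\gamma}\,dx$ as $\lambda\to\infty$. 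On the right side of the Selberg formula, the duplication formula $\Gamma(2z)=2^{2z-1}\pi^{-1/2}\Gamma(z)\Gamma(z+\tfrac12)$ together with the Stirling ratio asymptotic $\Gamma(z+a)/\Gamma(z+b)\sim z^{a-b}$ makes every $\lambda$-dependent factor cancel against the pulled-out constants, leaving $\pi^{N/2}2^{-\gamma\binom{N}{2}}\prod_{j=1}^N\Gamma(1+j\gamma)/\Gamma(1+\gamma)$. Rescaling the variable by $\sqrt{2}$ (replacing $x_i$ by $x_i/\sqrt{2}$) turns the weight $e^{-\sum_i x_i^2}$ into $e^{-\|x\|^2/2}$ up to a power of $2$, and with $\gamma=\beta/2$ this yields $\mathcal{Z}_N(\beta)=(2\pi)^{N/2}\prod_{j=1}^N\Gamma(1+j\beta/2)/\Gamma(1+\beta/2)$ for every real $\beta>0$; the $j=1$ factor is $1$, which is exactly the asserted product over $j=2,\dots,N$.

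To upgrade this to all complex $\beta$ with $\re(\beta)>-\tfrac{2}{N}$, I would check that both sides are holomorphic on that half-plane and then invoke the identity theorem. The right side is holomorphic there because $\Gamma$ is zero-free, so $1/\Gamma(1+\beta/2)$ is entire, while each $\Gamma(1+j\beta/2)$ is finite on $\re(\beta)>-\tfrac{2}{j}$, the strongest constraint being $j=N$. For the left side, the Gaussian weight forces absolute convergence at infinity for every $\beta$, so only integrability near the coincidence loci $x_i=x_j$ is at issue; a standard local analysis near a configuration in which $m$ of the coordinates collide (pass to center-of-mass and gap coordinates, then scale) shows $\prod_{i<j}|x_i-x_j|^{\re(\beta)}$ is locally integrable there precisely when $\re(\beta)>-\tfrac{2}{m}$, which is again most restrictive at $m=N$. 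Hence $\mathcal{Z}_N(\beta)$ converges absolutely and defines a holomorphic function on $\re(\beta)>-\tfrac{2}{N}$, and since it agrees with the right side on the ray $(0,\infty)$ the two coincide on the whole connected half-plane.

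The step I expect to be the main obstacle is the interchange of limit and integral in the Selberg-to-Gaussian passage: the pointwise limit $(1-x_i^2/\lambda)^{\lambda-1}\to e^{-x_i^2}$ must be accompanied by an integrable majorant that is uniform in $\lambda$, even as the domain $[-\sqrt{\lambda},\sqrt{\lambda}]^N$ grows and the Vandermonde factor remains singular at the diagonals. The clean way around this is the elementary inequality $(1-u)^{\lambda-1}\le e^{-(\lambda-1)u}$ for $u\in[0,1]$, which gives $(1-x_i^2/\lambda)^{\lambda-1}\le e^{-\tfrac12 x_i^2}$ on $[-\sqrt{\lambda},\sqrt{\lambda}]$ for all $\lambda\ge2$, so that the single finite integral $\int_{\R^N}e^{-\tfrac12\sum_i x_i^2}\prod_{i<j}|x_i-x_j|^{2\gamma}\,dx$ dominates the family and dominated convergence applies. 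If one wished to avoid citing the Selberg integral altogether, the genuine work would instead be to reproduce one of its classical derivations (Selberg's induction, Anderson's change of variables, or Aomoto's recursion); alternatively, for the single value $\beta=2$ one can bypass Selberg by expanding the squared Vandermonde and applying the Andr\'eief identity to moments of the Hermite weight, but that route does not reach general $\beta$.
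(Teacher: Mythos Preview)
The paper does not actually prove this theorem: it is quoted in the introduction as a classical result, with the remark that Bombieri gave the first proof via Selberg's integral and a pointer to \cite{FW} and \cite{Forrester} for details. Your proposal is precisely that classical Selberg-to-Gaussian limit argument, carried out correctly (the dominated-convergence bound $(1-x_i^2/\lambda)^{\lambda-1}\le e^{-x_i^2/2}$ for $\lambda\ge 2$ is exactly what is needed, and the analytic continuation to $\re(\beta)>-2/N$ via local integrability near the collision loci is standard), so there is nothing to compare against and nothing to correct.
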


By a clever application of Selberg's integral formula, Bombieri found the first proof of \Cref{MIF} a decade after its conjecture. His proof, several others, and the related random matrix theory can be found in \cite{Forrester}. \Cref{MIF} does not generalize easily to multi-component ensembles, however. Multi-component analogues were established in \cite{Chris12} for a large class of integer-valued $\beta$ and the $\{\mathfrak{q}_1,\mathfrak{q}_2,\dots,\mathfrak{q}_N\}=\{1,2\}$ case was thoroughly explored in \cite{Chris13}, but a general multi-component analogue of \Cref{MIF} remains unknown for $K=\R$.

In this paper we will find explicit combinatorial formulas for multi-component (i.e., mixed charge) canonical partition functions for any $K\not\cong\R,\C$, and in this setting we will compute the joint moments of $\max_{i<j}|x_i-x_j|$ (the diameter of the gas) and $\min_{i<j}|x_i-x_j|$ (the minimum distance between charges) and their low temperature limits. All of these computations follow from our main theorem, which establishes a formula for the following integral:

\begin{definition}\label{ZNKabs}
For a local field $K$, an integer $N\geq 2$, a measurable function $\rho:\|K^N\|\to\C$, complex numbers $a,b\in\C$, and suitable $\bm{s}=(s_{ij})_{1\leq i<j\leq N}\in\C^{\binom{N}{2}}$, define $$Z_N^\rho(K,a,b,\bm{s}):=\int_{K^N}\rho(\|x\|)\big(\max_{i<j}|x_i-x_j|\big)^a\big(\min_{i<j}|x_i-x_j|\big)^b\prod_{i<j}|x_i-x_j|^{s_{ij}}\,|dx|~.$$
\end{definition}

Given $s_{ij}=\mathfrak{q}_i\mathfrak{q}_j\beta$ for all $i<j$ with suitable $\beta$ and norm-density $\rho$, the canonical partition function satisfies $\mathcal{Z}_N(\beta)=Z_N^\rho(K,0,0,\bm{s})\in(0,\infty)$. Then for such $\beta$ and $\rho$, the expectation of the random variable $\big(\max_{i<j}|x_i-x_j|\big)^a\big(\min_{i<j}|x_i-x_j|\big)^b$ against $\frac{1}{\mathcal{Z}_N(\beta)}\rho(\|x\|)\prod_{i<j}|x_i-x_j|^{\mathfrak{q}_i\mathfrak{q}_j\beta}$ is given by
\begin{equation}\label{Jointeq}
\E\left[\big(\max_{i<j}|x_i-x_j|\big)^a\big(\min_{i<j}|x_i-x_j|\big)^b\right]=\frac{Z_N^\rho(K,a,b,\bm{s})}{Z_N^\rho(K,0,0,\bm{s})}~.
\end{equation}
In particular, taking $a,b\in\Z_{\geq 0}$ in \eqref{Jointeq} yields joint moments of $\max_{i<j}|x_i-x_j|$ and $\min_{i<j}|x_i-x_j|$. To our knowledge, these canonical partition functions and joint moments have not been computed for $K\not\cong\R,\C$ before. However, the study of Coulomb gases (which include log-Coulomb gases) in such $K$, or more generally $K^d$ with $d\geq 1$, has become increasingly active in the last decade. Following the classical $\R^d$ analogue in \cite{Serfaty}, the article \cite{ZT} gives Coulomb gases in $K^d$ a natural motivation by realizing their potentials as fundamental solutions to the pseudodifferential analogue of Poisson's electrostatic equation. The same article also realizes the Haar measure $\mu^d$ (restricted to $R^d$) as the equilibrium measure for the confining potential $V(x)=-\log\mathbf{1}_{[0,1]}(\|x\|)$ in the non-log-Coulomb case. In the log-Coulomb case, the more recent article \cite{ZZC} uses graph-theoretic machinery to establish and analyze formulas for $\mathcal{Z}_N(\beta)=Z_N^\rho(K,0,0,\bm{s})$ with $\rho=\mathbf{1}_{[0,1]}$ and $s_{ij}=\mathfrak{q}_i\mathfrak{q}_j\beta$, with arbitrary $\mathfrak{q}_1,\mathfrak{q}_2,\dots,\mathfrak{q}_N\in\R$. Our present results are closely related to \cite{ZZC} but were discovered by ``unwinding" a recurrence for $\mathcal{Z}_N(\beta)$ established in \cite{ChrisA}. Using the recurrence and the assumption $\mathfrak{q}_1=\mathfrak{q}_2=\dots=\mathfrak{q}_N=1$, the latter article reveals strikingly simple algebraic properties and high/low temperature limits for the corresponding \emph{grand canonical partition function} $Z(t,\beta):=\sum_{N=0}^\infty Z_N(\beta)\frac{t^N}{N!}$. The grand canonical partition function will not be discussed further in this paper, but we will allow $N$ to be arbitrarily large and independent of $K$, $\rho$, $a$, and $b$ in our main formulas for $\bm{s}\mapsto Z_N^\rho(K,a,b,\bm{s})$. This will ensure that $Z(t,\beta)$ and thermodynamic/scaling limits may be addressed with our present results in subsequent papers. To put our results in the appropriate context, we will now discuss local zeta functions and local fields $K\not\cong\R,\C$.

\subsection{Local zeta functions and $p$-fields}\label{1_2}
If $K\cong\R$ or $K\cong\C$, one defines smooth functions $\Phi:K^N\to\C$ as usual by identifying $K$ with $\R$ or $\C$, and then defines \emph{Schwartz-Bruhat} functions to be those $\Phi$ satisfying a ``rapid decay" condition, namely $\sup_{x\in K^N}|(D\Phi)(x)|<\infty$ for all $D=x_1^{m_1}(\partial/\partial x_1)^{n_1}\cdots x_N^{m_N}(\partial/\partial x_N)^{n_N}$ with $m_i,n_i\in\Z_{\geq 0}$. If instead $K\not\cong\R,\C$, then one calls $\Phi:K^N\to\C$ a \emph{Schwartz-Bruhat} function if it is locally constant (an analogue of ``smooth") with compact support (an analogue of ``rapid decay"). For general local fields $K$, the $\C$-vector space of Schwartz-Bruhat functions plays a fundamental role in Fourier analysis, the theory of distributions on $K^N$, and the theory of local zeta functions.

\begin{definition}\label{UVZF}
Given a local field $K$, a Schwartz-Bruhat function $\Phi:K^N\to\C$, and a tuple of polynomials $\bm{f}=(f_1,f_2,\dots,f_k)$ with $f_j\in K[x]$ for all $j$, the associated \emph{multivariate local zeta function} is defined on $\mathcal{H}^k:=\{\bm{s}\in\C^k:\re(s_j)>0\text{ for all }j\}$ by $$Z_\Phi(\bm{s},\bm{f}):=\int_{K^N}\Phi(x)\prod_{j=1}^k|f_j(x)|^{s_j}\,|dx|~.$$
\end{definition}

It is easily seen that $Z_\Phi(\cdot,\bm{f})$ is holomorphic on $\mathcal{H}^k$, though it is generally difficult to compute a formula for $Z_\Phi(\bm{s},\bm{f})$ and describe its meromorphic continuation. The classification of local fields breaks this problem into two main cases:
\begin{enumerate}
\item[(1)] $K$ is \emph{archimedean}, meaning the image of the canonical ring homomorphism $\Z\to K$ is unbounded with respect to $|\cdot|$. In this case $K\cong\R$ or $K\cong\C$, and $|\cdot|$ and $\mu$ are respectively identified with the usual absolute value and the Lebesgue measure on $\R$ or $\C$.
\item[(2)] $K$ is \emph{nonarchimedean}, meaning the image of $\Z\to K$ is contained in $R$. In this case $R$ is a local PID in which $P$ is the maximal ideal, and the \emph{residue field} $\kappa:=R/P$ is isomorphic to the finite field $\F_q$ where $q$ is a power of a prime number $p$. Thus $K$ is called a \emph{$p$-field}, for which there is a dichotomy:
\begin{enumerate}
\item If $\charr(K)=0$, then $K$ is isomorphic to a finite extension of $\Q_p$ and $K$ is called a \emph{$p$-adic field}. In particular, if $K\cong\Q_p$ then $R\cong\Z_p$, $P\cong p\Z_p$, $\kappa\cong\Z_p/p\Z_p\cong\F_p$, and hence $q=p$. 
\item If $\charr(K)=p$, then $K\cong\F_q((t))$, $R\cong\F_q[[t]]$, $P\cong t\F_q[[t]]$, and $K$ is called a \emph{function field}.
\end{enumerate}
\end{enumerate}

\noindent This classification and other features of general local fields $K$, such as Pontrjagin-self duality and their role in the ``ad\`elic formulation" of algebraic number theory, are elegantly laid out by Weil in \cite{Weil}.\\

In this paper we work exclusively with $p$-fields (i.e., $K\not\cong\R,\C$), and when $K$ is understood, we will always take $q$ to mean the cardinality of its residue field. Many results in the theory of local zeta functions for $p$-fields are inspired by the celebrated \emph{Igusa's Theorem}, of which the following proposition is an important consequence.
\begin{proposition}[\cite{Igusa75}]\label{IguThm}
Let $K$ be a $p$-adic field. If $\Phi:K^N\to\C$ is a Shwartz-Bruhat function and $f\in K[x]$ is a non-constant polynomial, then there is a rational function $r\in\C(T)$ such that the local zeta function defined by $$Z_\Phi(s,f)=\int_{K^N}\Phi(x)|f(x)|^s\,|dx|$$ satisfies $Z_\Phi(s,f)=r(q^{-s})$ for $\re(s)>0$. In particular, a meromorphic continuation of $Z_\Phi(s,f)$ is given by $r(q^{-s})$.
\end{proposition}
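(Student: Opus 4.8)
The plan is to follow Igusa's original argument, whose deep input is Hironaka's resolution of singularities (available because a $p$-adic field has characteristic zero). First I would reduce to a simple integrand: since $\Phi$ is locally constant with compact support, it is a finite $\C$-linear combination of characteristic functions of balls, and as $\Phi\mapsto Z_\Phi(s,f)$ is linear, it suffices to treat $\Phi=\mathbf{1}_B$ for a single ball $B$. If $f$ is constant the claim is trivial, so assume $\deg f\geq 1$.

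Next I would invoke an embedded resolution of the hypersurface $\{f=0\}\subset\mathbb{A}^N_K$: a proper morphism $h\colon Y\to\mathbb{A}^N_K$ from a smooth $K$-variety $Y$, restricting to an isomorphism over the complement of $\{f=0\}$, such that the divisor of $f\circ h$ and the Jacobian divisor of $h$ together have normal crossings. Because $h$ is proper and $B$ is compact, $h^{-1}(B)\subset Y(K)$ is compact, so it is covered by finitely many charts, each $K$-analytically isomorphic to the polydisc $R^N$ with coordinates $y=(y_1,\dots,y_N)$, on which
$$ f(h(y)) = \varepsilon(y)\prod_{i=1}^N y_i^{N_i}, \qquad \mathrm{Jac}_h(y) = \eta(y)\prod_{i=1}^N y_i^{\nu_i}, $$
with $N_i,\nu_i\in\Z_{\geq 0}$ and $\varepsilon,\eta$ nonvanishing on the chart; refining the charts, I may assume $|\varepsilon|\equiv|\eta|\equiv 1$ and that $\Phi\circ h$ is constant on each chart.

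Then I would apply the $p$-adic change of variables formula (with $|\mathrm{Jac}_h|$ as density) to write $Z_\Phi(s,f)$ as a finite sum of integrals of the shape
$$ c\int_{\prod_i(a_i+P^{m_i})}\prod_{i=1}^N |y_i|^{N_i s+\nu_i}\,|dy| = c\prod_{i=1}^N\int_{a_i+P^{m_i}}|y_i|^{N_i s+\nu_i}\,|dy_i|, $$
after a further subdivision of $R^N$ into cosets on which $\Phi\circ h$ is constant. Each one-dimensional factor is elementary: over a ball not containing $0$ it equals a constant times a monomial in $q^{-s}$, while over $P^m$ one gets $(1-q^{-1})\sum_{k\geq m}q^{-k(N_i s+\nu_i+1)} = (1-q^{-1})\,q^{-m(N_i s+\nu_i+1)}\big/\big(1-q^{-(N_i s+\nu_i+1)}\big)$, which converges for $\re(s)>0$ since $N_i\geq 0$ and $\nu_i+1\geq 1$, and is a rational function of $q^{-s}$. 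Multiplying and summing these shows $Z_\Phi(s,f)=r(q^{-s})$ for some $r\in\C(T)$; since $r(q^{-s})$ is meromorphic on all of $\C$, it furnishes the asserted continuation.

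The main obstacle is entirely the resolution step: producing $h$ with its normal-crossings local normal form is Hironaka's theorem, and checking that the archimedean change-of-variables formula has a valid analogue on $K$-analytic manifolds requires that theory in the $p$-adic setting. Once these are in hand, the rest is the bookkeeping sketched above. I would also remark on the alternative due to Denef, which bypasses resolution of singularities via the cell decomposition theorem for $p$-adically closed fields, yields the same conclusion, and moreover applies uniformly in $p$.
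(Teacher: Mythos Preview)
The paper does not give its own proof of this proposition: it is stated as a cited result (the label carries \texttt{[\textbackslash cite\{Igusa75\}]}), and the surrounding text only remarks that the proof ``relies on the existence of a certain type of resolution of singularities for $\{x\in K^N:f(x)=0\}$,'' guaranteed by Hironaka in characteristic zero. Your sketch is exactly that classical Igusa argument---reduce by linearity to characteristic functions of balls, pull back through an embedded resolution with normal crossings, and compute the resulting monomial integrals as geometric series in $q^{-s}$---and it is correct in outline; so your proposal matches what the paper alludes to but does not itself carry out.
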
 
The general theorem is established in \cite{Igusa74} and \cite{Igusa75}, and the proof therein relies on the existence of a certain type of resolution of singularities for $f$. Existence of such a resolution is guaranteed by \cite{Hironaka} if $\charr(K)=0$, but otherwise depends more subtly on $K$ and $f$. Thus Igusa's Theorem requires $\charr(K)=0$ (i.e., $K$ must be $p$-adic) in order to hold for general $f\in K[x]$. Loeser used a similar resolution technique to give a multivariate generalization of Igusa's Theorem in \cite{Loeser}, which implies the following analogue of \Cref{IguThm}. 
\begin{proposition}[\cite{Loeser}]\label{LoeThm}
Let $K$ be a $p$-adic field. If $\Phi:K^N\to\C$ is a Schwartz-Bruhat function and $\bm{f}=(f_1,f_2,\dots,f_k)$ with $f_j\in K[x]$ not all constant, then there is a $k$-variate rational function $r\in\C(T_1,T_2,\dots,T_k)$ such that the local zeta function defined by $$Z_\Phi(\bm{s},\bm{f}):=\int_{K^N}\Phi(x)\prod_{j=1}^k|f_j(x)|^{s_j}\,|dx|$$ satisfies $Z_\Phi(\bm{s},\bm{f})=r(q^{-s_1},q^{-s_2},\dots,q^{-s_k})$ for all $\bm{s}\in\mathcal{H}^k$.
\end{proposition}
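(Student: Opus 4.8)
The plan is to follow the resolution-of-singularities argument of Igusa \cite{Igusa74,Igusa75} and Loeser \cite{Loeser}, carried out in all $k$ variables simultaneously. First I would dispose of the degenerate cases: if some $f_j$ is the zero polynomial then $|f_j(x)|^{s_j}=0$ for every $x$ and every $\bm{s}\in\mathcal{H}^k$, so $Z_\Phi(\bm{s},\bm{f})\equiv 0$; and if $f_j$ is a nonzero constant with $|f_j|=q^{-w_j}$ then $|f_j|^{s_j}=q^{-w_js_j}$ pulls out of the integral as a Laurent monomial in $q^{-s_j}$. Hence, after relabelling, I may assume every $f_j$ is a nonzero polynomial and that $f:=\prod_{j=1}^k f_j$ is non-constant. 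Since $\charr(K)=0$, Hironaka's theorem \cite{Hironaka} provides an embedded resolution of the hypersurface $V:=\{f=0\}\subset\mathbb{A}^N_K$: a smooth $K$-variety $Y$ together with a proper morphism $h\colon Y\to\mathbb{A}^N_K$ that is an isomorphism over $\mathbb{A}^N_K\setminus V$ and whose total transform $h^{-1}(V)$ is a simple normal crossings divisor $\bigcup_i E_i$.

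Next I would pass to $K$-points and localize. Since $\supp(\Phi)$ is compact and $h$ is proper, $h^{-1}(\supp(\Phi))$ is a compact subset of $Y(K)$, hence covered by finitely many compact-open polydiscs $U$, on each of which there are analytic coordinates $y=(y_1,\dots,y_N)$ such that every branch $E_i$ meeting $U$ is a coordinate hyperplane $\{y_i=0\}$. On such a $U$,
\[
f_j(h(y)) = u_j(y)\prod_i y_i^{N_{ij}}, \qquad h^*|dx| = |u_0(y)|\prod_i |y_i|^{\nu_i-1}\,|dy|,
\]
where $u_0,u_1,\dots,u_k$ are nowhere vanishing on $U$, $N_{ij}\in\Z_{\geq 0}$ is the multiplicity of $E_i$ in the divisor of $f_j\circ h$, and $\nu_i-1\in\Z_{\geq 0}$ is the multiplicity of $E_i$ in the relative canonical divisor of $h$; here one uses that each $\{f_j=0\}$ is contained in $V$, so its total transform is a union of some of the $E_i$ and $f_j\circ h$ is genuinely monomial. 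Shrinking each $U$ — legitimate since $\Phi$ and the $|u_j|$ are locally constant and coordinates may be rescaled — I may assume $\Phi\circ h$ and each $|u_j|$ are constant on $U$ and that $U$ is a product of balls.

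With this in hand, the change-of-variables formula turns the contribution of a single chart into a product of one-dimensional integrals,
\[
\int_U (\Phi\circ h)(y)\prod_{j=1}^k |f_j(h(y))|^{s_j}\, h^*|dx| \;=\; c_U\, q^{-\sum_j w_{Uj}s_j}\prod_{i=1}^N \int_{B_i} |y_i|^{\,\nu_i-1+\sum_j N_{ij}s_j}\,|dy_i|,
\]
with $c_U\in\C$, $w_{Uj}\in\Z$, and each $B_i\subset K$ a ball (centred at $0$ when $E_i$ meets $U$). If $N_{ij}=0$ for every $j$, the corresponding factor is the integral of $|y_i|$ to a fixed nonnegative integer power over a ball, a rational number independent of $\bm{s}$; otherwise it is a convergent geometric series summing to a rational function of $q^{-s_1},\dots,q^{-s_k}$ with denominator $1-q^{-(\nu_i+\sum_j N_{ij}s_j)}$, and the series converges throughout $\mathcal{H}^k$ since $\re\big(\nu_i+\sum_j N_{ij}s_j\big)\ge\nu_i\ge 1>0$ there. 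Summing the finitely many charts exhibits $Z_\Phi(\bm{s},\bm{f})$ as a finite $\C$-linear combination of products of such rational functions, hence as $r(q^{-s_1},\dots,q^{-s_k})$ for a single $r\in\C(T_1,\dots,T_k)$, and the identity holds on all of $\mathcal{H}^k$.

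I expect the main obstacle to be the first step: producing a global embedded resolution $h$ whose total transform is simple normal crossings. This is exactly where $\charr(K)=0$ (i.e.\ $K$ $p$-adic, not a function field) is indispensable, and it is the only non-formal ingredient — the localization, the monomial change of variables, the geometric-series evaluation, and the reduction to finitely many charts via compactness are all routine once $h$ is in hand. A secondary technical point is arranging that all $k$ pullbacks $f_j\circ h$ together with the Jacobian of $h$ are monomial in the \emph{same} local coordinates on each chart; resolving the single product $f=\prod_j f_j$ is precisely what secures this, since then each $\{f_j=0\}$ becomes a sub-divisor of a simple normal crossings divisor. (One could instead bypass resolution and invoke Denef's $p$-adic cell decomposition \cite{DenefI}, which also yields rationality in the $q^{-s_j}$, but the resolution argument is the one underlying \cite{Loeser}.)
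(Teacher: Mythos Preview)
The paper does not supply its own proof of this proposition: it is stated as a consequence of Loeser's work \cite{Loeser} (itself a multivariate generalization of Igusa's theorem), and is used purely as background in \Cref{1_2} to motivate the resemblance between $Z_N^\rho(K,a,b,\bm{s})$ and local zeta functions. So there is nothing in the paper to compare against beyond the citation.

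Your sketch is the standard resolution-of-singularities argument underlying \cite{Igusa74,Igusa75,Loeser}, and it is correct in outline. The key move --- resolving the single product $f=\prod_j f_j$ so that all $k$ pullbacks $f_j\circ h$ and the Jacobian become monomial in the \emph{same} local coordinates --- is exactly right, and your handling of the degenerate cases, the compactness reduction to finitely many charts, and the geometric-series evaluation are all sound. One small wording issue: when you say ``nowhere vanishing on $U$'' for the $u_j$, you should also arrange (by shrinking $U$) that $|u_j|$ is constant on $U$, which you do mention but could state more carefully as a consequence of local constancy of $|\cdot|$ on units. Otherwise this is a faithful summary of the proof in the cited references.
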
 

If $\supp(\Phi)$ is no longer assumed to be compact, $Z_\Phi(\cdot,\bm{f})$ may still admit a meromorphic continuation of a similar rational form. Such an example was recently investigated in \cite{GGZ} with applications to $p$-adic string theory. Therein it is shown that for $N\geq 4$ the \emph{$p$-adic open string $N$-point zeta function}, defined by $$Z^{(N)}(\bm{s}):=\int_{\Q_p^{N-3}}\prod_{i=2}^{N-2}|x_i|^{s_{1i}}|1-x_i|^{s_{i(N-1)}}\prod_{2\leq i<j\leq N-2}|x_i-x_j|^{s_{ij}}\,|dx|~,$$ coincides with a rational function in $p^{-s_{ij}}$ for all $1\leq i<j\leq N-1$ on a nonempty open domain in $\C^{\binom{N-1}{2}}$, despite the unbounded support of the integrand. Unlike Igusa's original method, a formula for $Z^{(N)}(\bm{s})$ was found by decomposing $\Q_p^{N-3}$ into finitely many sets, integrating over each one, and summing the results. This method does not require $\charr(K)=0$ and generalizes to all $p$-fields, while also providing a description of the domain and poles of $Z^{(N)}$ in terms of the decomposition of $\Q_p^{N-3}$. We will use a similar method to prove our main formulas for $Z_N^\rho(K,a,b,\bm{s})$, without placing any restrictions on $\charr(K)$ or $q$. For fixed $\rho$, $a$, and $b$, we will also describe regions of values $\bm{s}\in\C^{\binom{N}{2}}$ for which the integral in \Cref{ZNKabs} converges absolutely. In particular, we will show that $Z_N^\rho(K,a,b,\bm{s})$ can be written as a product of two main factors: a series that depends on $\rho$, $K$, $a$, $b$, and $\bm{s}$ in a simple way, and an explicit rational function in $q^{-b}$ and $q^{-s_{ij}}$ that does not depend on $a$ or $\rho$.

\section{Statement of results}
The main result of this paper is a pair of formulas for $Z_N^\rho(K,a,b,\bm{s})$, where $K$ is an arbitrary $p$-field. We are primarily interested in $Z_N^\rho(K,a,b,\bm{s})$ as a function of $\bm{s}$, and we would like our formulas to hold for arbitrary $N$, $K$, $\rho$, $a$, and $b$. These five parameters are not entirely independent, however, as the domain of $\rho$ given in \Cref{ZNKabs} depends on $N$ and $K$. Though it is possible to give similar results for arbitrary $\rho:\|K^N\|\to\C$, the required notation, cases, and proofs become prohibitively cumbersome. We will avoid this problem by making the following mild assumptions about $\rho$. It is well-known that for every $p$-field $K$ and every positive integer $N$ we have $\|K^N\|\subset\mathcal{N}$, where $$\mathcal{N}:=\{0\}\cup\bigcup_{n=1}^\infty\left\{n,\frac{1}{n}\right\}~,$$ so we will henceforth assume $\rho$ is defined on all of $\mathcal{N}$. This assumption ensures that $\rho$ is independent of $K$ and $N$ while also maintaining that, for any choice of $K$ and $N$, the function $x\mapsto\rho(\|x\|)$ is measurable on $K^N$. To keep most of our upcoming discussion independent of $\rho$, we will further assume
\begin{equation}\label{rhogro}
\limsup_{n\to\infty}\frac{\log|\rho(\frac{1}{n})|_{\C}}{\log(n)}\leq 1\qquad\text{and}\qquad\limsup_{n\to\infty}\frac{\log|\rho(n)|_{\C}}{\log(n)}=-\infty
\end{equation}
where $|\cdot|_{\C}$ denotes the canonical absolute value on $\C$ and $\log:[0,\infty]\to[-\infty,\infty]$ is the extended natural logarithm (i.e., $\log(0):=-\infty$ and $\log(\infty):=\infty$). That is, for any choice of $K$ and $N$, the function $x\mapsto\rho(\|x\|)$ has modest growth as $\|x\|\to 0$ and fast decay as $\|x\|\to\infty$. Examples of $\rho:\mathcal{N}\to\C$ satisfying \eqref{rhogro} include $\rho(t)=e^{-t}$, $\rho(t)=e^{-t^2/2}$, $\rho(t)=\bm{1}_{[0,1]}(t)$, and $\rho(t)=\log(t)\bm{1}_{[0,1]}(t)$.

\subsection{The main theorem}\label{2_1}
There are two main factors comprising $Z_N^\rho(K,a,b,\bm{s})$, and they can be defined in their own right. Thus, until the statement of the main theorem we will assume that $N$ and $q$ are arbitrary integers satisfying $N\geq 2$ and $q\geq 2$. The first of the two main factors is the \emph{root function}, defined on a convex domain called the \emph{root polytope} as follows:

\begin{definition}\label{rootdef}
For $N\geq 2$ and $a,b\in\C$ we define the \emph{root polytope} $\mathcal{RP}_N(a,b)$ by $$\mathcal{RP}_N(a,b):=\left\{\bm{s}\in\C^{\binom{N}{2}}:\re\bigg(N-1+a+b+\sum_{i<j}s_{ij}\bigg)>0\right\}~.$$ For such $N$, $a$, $b$, an integer $q\geq 2$, and a function $\rho:\mathcal{N}\to\C$ satisfying \eqref{rhogro}, the associated \emph{root function} $\mathcal{RP}_N(a,b)\to\C$ is defined by $$\bm{s}\mapsto H_q^\rho\bigg(N+a+b+\sum_{i<j}s_{ij}\bigg)\qquad\text{where}\qquad H_q^\rho(z):=\frac{1-q^{-z}}{1-q^{-(z-1)}}\cdot\sum_{m\in\Z}\rho(q^m)q^{mz}~.$$
\end{definition}

The second factor is more complicated and requires some combinatorial language. Recall that a \emph{partition} of the set $[N]:=\{1,2,\dots,N\}$ is a set $\ptn$ of nonempty pairwise disjoint subsets $\lambda\subset[N]$ satisfying $\bigcup_{\lambda\in\ptn}\lambda=[N]$. If $\ptn_1$ and $\ptn_2$ are partitions of $[N]$, we write $\ptn_2\leq\ptn_1$ and call $\ptn_2$ a \emph{refinement} of $\ptn_1$ if each part $\lambda_2\in\ptn_2$ is contained in some part $\lambda_1\in\ptn_1$. We write $\ptn_2<\ptn_1$ and call $\ptn_2$ a \emph{proper refinement} of $\ptn_1$ if both $\ptn_2\leq\ptn_1$ and $\ptn_2\neq\ptn_1$. The relation $\leq$ makes the collection of all partitions of $[N]$ into a partially ordered lattice with height $N$, unique maximal element $\overline{\ptn}:=\left\{[N]\right\}$, and unique minimal element $\underline{\ptn}:=\left\{\{1\},\{2\},\dots,\{N\}\right\}$. The \emph{rank} of a partition $\ptn$ of $[N]$ is the integer $$\rank(\ptn):=N-\#\ptn=\sum_{\lambda\in\ptn}(\#\lambda-1)~.$$
\begin{definition}\label{exponents}
Suppose $N\geq 2$. As needed, sums over empty index sets are defined to be 0.
\begin{itemize}
\item[(a)] For each nonempty subset $\lambda\subset[N]$, define the \emph{part exponent} $e_\lambda:\C^{\binom{N}{2}}\to\C$ by $$e_\lambda(\bm{s}):=\sum_{\substack{i<j\\i,j\in\lambda}}\left(s_{ij}+\frac{2}{\#\lambda}\right)=(\#\lambda-1)+\sum_{\substack{i<j\\i,j\in\lambda}}s_{ij}~.$$
\item[(b)] For each partition $\ptn$ of $[N]$, define the \emph{partition exponent} $E_\ptn:\C^{\binom{N}{2}}\to\C$ by $$E_\ptn(\bm{s}):=\sum_{\lambda\in\ptn}e_\lambda(\bm{s})=\rank(\ptn)+\sum_{\lambda\in\ptn}\sum_{\substack{i<j\\i,j\in\lambda}}s_{ij}~.$$
\end{itemize}
\end{definition}
\begin{definition}[Splitting chains]\label{splchdef}
A finite tuple $\spl=(\ptn_0,\ptn_1,\dots,\ptn_L)$ of partitions of $[N]$ satisfying $$\overline{\ptn}=\ptn_0>\ptn_1>\ptn_2>\dots>\ptn_L=\underline{\ptn}$$ shall be called a \emph{splitting chain} of order $N$. We write $\mathcal{S}_N$ for the set of all splitting chains of order $N$, and we attach the following terminology and notation to each $\spl\in\mathcal{S}_N$ with $N\geq 2$:
\begin{itemize}
\item[(a)] The positive integer $L(\spl):=L$ is the \emph{length} of $\spl$ and the partitions $\ptn_0,\ptn_1,\dots,\ptn_{L(\spl)-1}$ are the \emph{levels} of $\spl$. Call each non-singleton part $\lambda\in\ptn_0\cup\ptn_1\cup\dots\cup\ptn_{L(\spl)-1}$ a \emph{branch} of $\spl$ and write $\mathcal{B}(\spl)$ for the set of all branches of $\spl$, i.e., $$\mathcal{B}(\spl):=(\ptn_0\cup\ptn_1\cup\dots\cup\ptn_{L(\spl)-1})\setminus\underline{\ptn}~.$$
\item[(b)] Since $\spl$ must terminate at $\ptn_{L(\spl)}=\underline{\ptn}$, each branch appears in a final level $\ptn_\ell$ before it refines into two or more parts in $\ptn_{\ell+1}$. Thus for each $\lambda\in\mathcal{B}(\spl)$ we define the \emph{depth} $\ell_\spl(\lambda)\in\{0,1,\dots,L(\spl)-1\}$ and \emph{degree} $\deg_\spl(\lambda)\in\{2,3,\dots, N\}$ respectively by $$\ell_\spl(\lambda):=\max\{\ell:\lambda\in\ptn_\ell\}\qquad\text{and}\qquad\deg_\spl(\lambda):=\#\{\lambda'\in\ptn_{\ell_\spl(\lambda)+1}:\lambda'\subset\lambda\}~.$$
\item[(c)] Finally, we use the falling factorial notation $(z)_n=z\cdot(z-1)\cdot(z-2)\cdot\ldots\cdot(z-n+1)$ for integers $n\geq 1$ and define the \emph{multiplicity polynomial} $M_\spl(t)\in\Z[t]$ by $$M_\spl(t):=\prod_{\lambda\in\mathcal{B}(\spl)}(t-1)_{\deg_\spl(\lambda)-1}~.$$
\end{itemize}
\end{definition}

It is a key observation that $\mathcal{S}_N$ is finite for each $N\geq 2$. This is easily seen from the definition, since every $\spl\in\mathcal{S}_N$ must satisfy $0<L(\spl)<N$ and there are at most finitely many $\spl\in\mathcal{S}_N$ of a given length. One should also note that the multiplicity polynomial for a splitting chain $\spl\in\mathcal{S}_N$ factors as $$M_\spl(t)=\prod_{d=1}^{N-1}(t-d)^{p_\spl(d)}\qquad\text{where}\qquad p_\spl(d)=\#\{\lambda\in\mathcal{B}(\spl):\deg_\spl(\lambda)>d\}~,$$ because the falling factorial $(t-1)_{\deg_\spl(\lambda)-1}$ contributes a factor of $(t-d)$ if and only if $\deg_\spl(\lambda)>d$, and there are precisely $p_\spl(d)$ such falling factorials in the definition of $M_\spl(t)$. Thus, given an integer $q\geq 2$, we have $M_\spl(q)>0$ if $\deg_\spl(\lambda)\leq q$ for all $\lambda\in\mathcal{B}(\spl)$, and $M_\spl(q)=0$ otherwise. Multiplicity polynomials and the exponents in \Cref{exponents} together form the \emph{branch/level polytopes} and \emph{branch/level functions} defined below. As we shall soon see, the sum of level functions over all $\spl\in\mathcal{S}_N$ is the second main factor in our formula for $Z_N^\rho(K,a,b,\bm{s})$.

\begin{definition}\label{branchlevel}
Suppose $N\geq 2$ and $q\geq 2$ are integers and suppose $\spl\in\mathcal{S}_N$. As needed, products and intersections over empty index sets are respectively defined to be $1$ and $\C^{\binom{N}{2}}$.
\begin{enumerate}
\item[(a)] The \emph{branch polytope} $\mathcal{BP}_\spl$ and \emph{branch function} $I_{\spl,q}:\mathcal{BP}_\spl\to\C$ are respectively defined by
\begin{align*}
\mathcal{BP}_\spl&:=\bigcap_{\lambda\in\mathcal{B}(\spl)\setminus\overline{\ptn}}\left\{\bm{s}\in\C^{\binom{N}{2}}:\re(e_\lambda(\bm{s}))>0\right\}\qquad\text{and}\\
I_{\spl,q}(\bm{s})&:=\frac{M_\spl(q)}{q^{N-1}}\cdot\prod_{\lambda\in\mathcal{B}(\spl)\setminus\overline{\ptn}}\frac{1}{q^{e_\lambda(\bm{s})}-1}~.
\end{align*}
\item[(b)] Given $b\in\C$, the \emph{level polytope} $\mathcal{LP}_\spl(b)$ and \emph{level function} $J_{\spl,q}(b,\cdot):\mathcal{LP}_\spl(b)\to\C$ are respectively defined by
\begin{align*}
\mathcal{LP}_\spl(b)&:=\bigcap_{\ell=1}^{L(\spl)-1}\left\{\bm{s}\in\C^{\binom{N}{2}}:\re(b+E_{\ptn_\ell}(\bm{s}))>0\right\}\qquad\text{and}\\
J_{\spl,q}(b,\bm{s})&:=\frac{M_\spl(q)}{q^{N-1}}\cdot\prod_{\ell=1}^{L(\spl)-1}\frac{1}{q^{b+E_{\ptn_\ell}(\bm{s})}-1}~.
\end{align*}
\end{enumerate}
\end{definition}

Note that $I_{\spl,q}(\bm{s})$ and $J_{\spl,q}(b,\bm{s})$ are rational functions in the variables $q$, $q^{-b}$, and $q^{-s_{ij}}$ for all $i<j$, with $\Q$ coefficients determined by $\spl$ alone. Given $\spl$ and an integer $q\geq 2$, there are two possibilities:
\begin{itemize}
\item[(i)] If $\deg_\spl(\lambda)\leq q$ for all $\lambda\in\mathcal{B}(\spl)$, then $M_\spl(q)>0$, and hence $I_{\spl,q}(\bm{s})$ and $J_{\spl,q}(b,\bm{s})$ are never zero.
\item[(ii)] If $\deg_\spl(\lambda)>q$ for some $\lambda\in\mathcal{B}(\spl)$, then $M_\spl(q)=0$, and hence $I_{\spl,q}$ and $J_{\spl,q}(b,\cdot)$ are identically zero on $\mathcal{BP}_\spl$ and $\mathcal{LP}_\spl(b)$ respectively.
\end{itemize}
In any case, $I_{\spl,q}$ and $J_{\spl,q}(b,\cdot)$ are holomorphic on their respective polytopes $\mathcal{BP}_\spl$ and $\mathcal{LP}_\spl(b)$, which are both open and convex. These polytopes are related by the following lemma, which is the last ingredient we need to state the main theorem.

\begin{lemma}\label{reduction} We say that a splitting chain $\spl$ is \emph{reduced} if for each $\lambda\in\mathcal{B}(\spl)$ there is a unique level $\ptn_\ell$ containing $\lambda$ (namely, the level $\ptn_{\ell_\spl(\lambda)}$). We write $\mathcal{R}_N:=\{\spl\in\mathcal{S}_N:\spl\text{ is reduced}\}$ and define an equivalence relation $\simeq$ on $\mathcal{S}_N$ by writing $\spl\simeq\spl'$ if and only if $\mathcal{B}(\spl)=\mathcal{B}(\spl')$.
\begin{enumerate}
\item[(a)] If $\spl\simeq\spl'$, then the branch degrees, part exponents, multiplicity polynomials, and branch polytopes for $\spl$ and $\spl'$ respectively coincide.
\item[(b)] For each $\spl\in\mathcal{S}_N$ there is a unique $\spl^*\in\mathcal{R}_N$ such that $\spl\simeq\spl^*$. We call this $\spl^*$ the \emph{reduction} of $\spl$ and regard $\mathcal{R}_N$ as a complete set of representatives for $\mathcal{S}_N$ modulo $\simeq$.
\item[(c)] For each $\spl^*\in\mathcal{R}_N$ we have $$\bigcap_{\substack{\spl\in\mathcal{S}_N\\\spl\simeq\spl^*}}\mathcal{LP}_\spl(0)=\mathcal{BP}_{\spl^*}~,$$ and therefore $$\quad\bigcap_{\spl\in\mathcal{S}_N}\mathcal{LP}_\spl(0)=\bigcap_{\spl^*\in\mathcal{R}_N}\mathcal{BP}_{\spl^*}~.$$
\end{enumerate}
\end{lemma}

\begin{theorem}[Main Theorem]\label{main}
Fix $N\geq 2$ and $a,b\in\C$ and define the convex open polytope $$\Omega_N(a,b):=\mathcal{RP}_N(a,b)\cap\bigcap_{\spl\in\mathcal{S}_N}\mathcal{LP}_\spl(b)~.$$
\begin{enumerate}
\item[(a)] If $K$ is any $p$-field and $\rho:\mathcal{N}\to\C$ satisfies \eqref{rhogro} and is not identically zero, then the integral $$Z_N^\rho(K,a,b,\bm{s})=\int_{K^N}\rho(\|x\|)\big(\max_{i<j}|x_i-x_j|\big)^a\big(\min_{i<j}|x_i-x_j|\big)^b\prod_{i<j}|x_i-x_j|^{s_{ij}}\,|dx|$$ converges absolutely for all $\bm{s}\in\Omega_N(a,b)$, and $\Omega_N(a,b)$ is the largest open subset of $\C^{\binom{N}{2}}$ with this property.
\item[(b)] If $K$ has residue field cardinality $q$, then on each compact subset of $\Omega_N(a,b)$ the integral is given by the uniformly convergent sum $$Z_N^\rho(K,a,b,\bm{s})=H_q^\rho\bigg(N+a+b+\sum_{i<j}s_{ij}\bigg)\cdot\sum_{\spl\in\mathcal{S}_N}J_{\spl,q}(b,\bm{s})~.$$
\item[(c)] Given $b=0$, $K$ and $q$ as above, and $\spl^*\in\mathcal{R}_N$, we have $$\sum_{\substack{\spl\in\mathcal{S}_N\\\spl\simeq\spl^*}}J_{\spl,q}(0,\bm{s})=I_{\spl^*\!\!,q}(\bm{s})\quad\text{for all $\bm{s}\in\mathcal{BP}_{\spl^*}$}~.$$ Thus on each compact subset of $\Omega_N(a,0)$ the integral is given by the uniformly convergent sum $$Z_N^\rho(K,a,0,\bm{s})=H_q^\rho\bigg(N+a+\sum_{i<j}s_{ij}\bigg)\cdot\sum_{\spl^*\in\mathcal{R}_N}I_{\spl^*\!\!,q}(\bm{s})~.$$ 
\end{enumerate}
\end{theorem}

\Cref{reduction} and \Cref{main} will be proved in Section 3. Note that part (a) of \Cref{main} is independent of $K$ and $\rho$, parts (b) and (c) depend on $K$ only via $q$, and all parts of the theorem depend on $a$ and $b$ in relatively simple ways. The parameter $N$ has a far more complicated role, so we only discuss the $N=2$ and $N=3$ examples at the moment. In order to streamline notation, we will write each partition $\ptn=\{\lambda_1,\lambda_2,\dots,\lambda_k\}$ as a string of parts, i.e., $\ptn=\lambda_1\lambda_2\dots\lambda_k$. 

\begin{example}\label{Ex_N2n3}
Fix $a,b\in\C$ and a not-identically-zero function $\rho:\mathcal{N}\to\C$ satisfying \eqref{rhogro}.
\begin{itemize}
\item If $N=2$, then $\binom{N}{2}=1$, so each $\bm{s}\in\C^{\binom{N}{2}}$ is simply a number $s\in\C$. The only splitting chain in $\mathcal{S}_2$ is $\spl=(\{1,2\},\{1\}\{2\})$, and by \Cref{branchlevel} it has $$\mathcal{LP}_\spl(b)=\C\qquad\text{and}\qquad J_{\spl,q}(b,\bm{s})=I_{\spl,q}(\bm{s})=\frac{q-1}{q}\qquad\text{for all }q\geq 2~.$$ Thus for all $p$-fields $K$ with residue cardinality $q$ and all $s$ in the region $$\Omega_2(a,b)=\mathcal{RP}_2(a,b)\cap\C=\mathcal{RP}_2(a,b)=\{s\in\C:\re(1+a+b+s)>0\}~,$$ the integral $Z_2^\rho(K,a,b,\bm{s})$ converges absolutely to the value $$Z_2^\rho(K,a,b,\bm{s})=\frac{q-1}{q}\cdot\frac{1-q^{-(2+a+b+s)}}{1-q^{-(1+a+b+s)}}\cdot\sum_{m\in\Z}\rho(q^m)q^{m(2+a+b+s)}~.$$

\item If $N=3$, then we have $\bm{s}=(s_{12},s_{13},s_{23})\in\C^3$ with root polytope $$\mathcal{RP}_3(a,b)=\{\bm{s}\in\C^3:\re(2+a+b+s_{12}+s_{13}+s_{23})>0\}~,$$ and \Cref{branchlevel} provides the following table for all four splitting chains in $\mathcal{S}_3$:
\begin{center}
\def\arraystretch{1.5}
\begin{tabular}{c||c|c}
$\spl=(\ptn_0,\ptn_1,\dots,\ptn_{L(\spl)})\in\mathcal{S}_3$ & $J_{\spl,q}(b,\bm{s})$ & $\mathcal{LP}_\spl(b)$\\
\hline\hline
\begin{tabular}{l}
$\ptn_0=\{1,2,3\}$\\
$\ptn_1=\{1\}\{2\}\{3\}$\\
\end{tabular}
& $\dfrac{(q-1)(q-2)}{q^2}$ & $\C^3$\\
\hline
\begin{tabular}{l}
$\ptn_0=\{1,2,3\}$\\
$\ptn_1=\{1,2\}\{3\}$\\
$\ptn_2=\{1\}\{2\}\{3\}$
\end{tabular}
& $\dfrac{(q-1)^2}{q^2}\cdot\dfrac{1}{q^{1+b+s_{12}}-1}$ & $\{\bm{s}\in\C^3:\re(1+b+s_{12})>0\}$\\
\hline
\begin{tabular}{l}
$\ptn_0=\{1,2,3\}$\\
$\ptn_1=\{1,3\}\{2\}$\\
$\ptn_2=\{1\}\{2\}\{3\}$
\end{tabular}
& $\dfrac{(q-1)^2}{q^2}\cdot\dfrac{1}{q^{1+b+s_{13}}-1}$ & $\{\bm{s}\in\C^3:\re(1+b+s_{13})>0\}$\\
\hline
\begin{tabular}{l}
$\ptn_0=\{1,2,3\}$\\
$\ptn_1=\{1\}\{2,3\}$\\
$\ptn_2=\{1\}\{2\}\{3\}$
\end{tabular}
& $\dfrac{(q-1)^2}{q^2}\cdot\dfrac{1}{q^{1+b+s_{23}}-1}$ & $\{\bm{s}\in\C^3:\re(1+b+s_{23})>0\}$
\end{tabular}
\end{center}

Thus for all $p$-fields $K$ with residue cardinality $q$ and for all $\bm{s}$ in the region 
 $$\Omega_3(a,b)=\{\bm{s}\in\C^3:\re(2+a+b+s_{12}+s_{13}+s_{23})>0\}\cap\bigcap_{1\leq i<j\leq 3}\{\bm{s}\in\C^3:\re(1+b+s_{ij})>0\}~,$$ the integral $Z_3^\rho(K,a,b,\bm{s})$ converges absolutely to the value
 \begin{align*}
 Z_3^\rho(K,a,b,\bm{s})&=\frac{1-q^{-(3+a+b+s_{12}+s_{13}+s_{23})}}{1-q^{-(2+a+b+s_{12}+s_{13}+s_{23})}}\cdot\sum_{m\in\Z}\rho(q^m)q^{m(3+a+b+s_{12}+s_{13}+s_{23})}\\
&\cdot\frac{1}{q^2}\left((q-1)(q-2)+(q-1)^2\left[\frac{1}{q^{1+b+s_{12}}-1}+\frac{1}{q^{1+b+s_{13}}-1}+\frac{1}{q^{1+b+s_{23}}-1}\right]\right)~.
\end{align*}
\end{itemize}
\end{example}

\begin{remark}\label{counting}
Note that every splitting chain of order 2 or 3 is reduced and $J_\spl(0,\bm{s})=I_\spl(\bm{s})$ for all $\spl\in\mathcal{S}_2$ and all $\spl\in\mathcal{S}_3$. Therefore part (c) of \Cref{main} is redundant when $N=2$ or $N=3$, for in these cases it coincides with part (b) applied to $b=0$. If $N\geq 4$ we have $\mathcal{R}_N\subsetneq\mathcal{S}_N$, because there is at least one non-reduced splitting chain $\spl=(\ptn_0,\ptn_1,\ptn_2,\ptn_3)\in\mathcal{S}_N$ such as the one given by
\begin{align*}
\ptn_0&=\{1,2,3,4,\dots,N\}~,\\
\ptn_1&=\{1,2\}\{3,4,\dots,N\}~,\\
\ptn_2&=\{1,2\}\{3\}\{4\}\dots\{N\}~,\\
\ptn_3&=\{1\}\{2\}\{3\}\{4\}\dots\{N\}~.
\end{align*}
Finding closed forms for the cardinalities of $\mathcal{S}_N$ and $\mathcal{R}_N$ for general $N$ is nontrivial, but they can be bounded below as follows. Given $\spl\in\mathcal{R}_N$ and $i\in[N]$, we may construct a particular $\spl'\in\mathcal{R}_{N+1}$: For each $\ell\in\{0,1,2,\dots,L(\spl)\}$, let $\ptn_\ell'$ be the partition of $[N+1]$ obtained from $\ptn_\ell$ by replacing the unique part $\lambda\in\ptn_\ell$ containing $i$ by the larger part $\lambda\cup\{N+1\}$. If we then set $\ptn_{L(\spl)+1}':=\underline{\ptn}$, it is easily verified that $\spl'=(\ptn_0',\ptn_1',\dots,\ptn_{L(\spl)+1}')$ is a reduced splitting chain of order $N+1$. Thus $(\spl,i)\mapsto\spl'$ defines a function $\mathcal{R}_N\times[N]\to\mathcal{R}_{N+1}$, which is injective because it has a left inverse: The integer $i$ can be recovered from $\spl'$ because it is the only element of $[N]$ satisfying $\{i,N+1\}\in\spl_{L(\spl)}'$, and then $\spl$ can be recovered from $\spl'$ by simply removing $\ptn_{L(\spl)+1}$ and all copies of $N+1$ from $\spl'$. Thus we have $\#\mathcal{R}_N\cdot N\leq\#\mathcal{R}_{N+1}$ for all $N\geq 2$, and we saw before that $\#\mathcal{R}_2=\#\mathcal{S}_2=1$, $\#\mathcal{R}_3=\#\mathcal{S}_3=4$, and $\mathcal{R}_N\subsetneq\mathcal{S}_N$ for all $N\geq 4$. Induction on $N$ then gives the following bounds: $$(N-1)!\leq\#\mathcal{R}_N\leq\#\mathcal{S}_N\qquad\text{for all }N\geq 2.$$ The left inequality is strict for $N\geq 3$ and both are strict for $N\geq 4$.
\end{remark}

The preceding remark implies that the branch function sum $\sum_{\spl^*\in\mathcal{R}_N}I_{\spl^*\!\!,q}(\bm{s})$ has strictly fewer terms than the level function sum $\sum_{\spl\in\mathcal{S}_N}J_{\spl,q}(b,\bm{s})$ when $N\geq 4$, and hence part (c) of \Cref{main} becomes a simplification of part (b) applied to $b=0$. Though $N=4$ is the least $N$ for which this simplification is noticeable, the sums of branch functions and level functions respectively have $\#\mathcal{R}_4=26$ terms and $\#\mathcal{S}_4=32$ terms in this case, so the rather large computation of $Z_4^\rho(K,a,b,\bm{s})$ will be postponed until the appendix. For now we consider only three elements from $\mathcal{S}_4$ to discuss how part (c) of \Cref{main} simplifies the $b=0$ case of part (b). 

\begin{example}\label{Ex_N4}
Consider the three splitting chains $\spl^*,\spl',\spl''\in\mathcal{S}_4$ defined by $$\parbox{4cm}{
\begin{align*}
\ptn_0^*&=\{1,2,3,4\}~,\\
\ptn_1^*&=\{1,2\}\{3,4\}~,\\
\ptn_2^*&=\{1\}\{2\}\{3\}\{4\}~,
\end{align*}}
\qquad\parbox{4cm}{
\begin{align*}
\ptn_0'&=\{1,2,3,4\}~,\\
\ptn_1'&=\{1,2\}\{3,4\}~,\\
\ptn_2'&=\{1,2\}\{3\}\{4\}~,\\
\ptn_3'&=\{1\}\{2\}\{3\}\{4\}~,
\end{align*}}
\qquad\text{and}
\qquad\parbox{4cm}{
\begin{align*}\ptn_0''&=\{1,2,3,4\}~,\\
\ptn_1''&=\{1,2\}\{3,4\}~,\\
\ptn_2''&=\{1\}\{2\}\{3,4\}~,\\
\ptn_3''&=\{1\}\{2\}\{3\}\{4\}~.
\end{align*}}$$
Recalling \Cref{reduction}, $\spl^*$ is the common reduction of all three, and it is easily verified that no other $\spl\in\mathcal{S}_4\setminus\{\spl^*,\spl',\spl''\}$ satisfies $\spl\simeq\spl^*$. By \Cref{branchlevel}, the splitting chains $\spl^*$, $\spl'$, and $\spl''$ contribute the following level functions to the sum $\sum_{\spl\in\mathcal{S}_4}J_{\spl,q}(b,\bm{s})$ in part (b) of \Cref{main}:
\begin{align*}
J_{\spl^*\!\!,q}(b,\bm{s})&=\frac{(q-1)^3}{q^3}\cdot\frac{1}{q^{2+b+s_{12}+s_{34}}-1}~,\\
J_{\spl'\!\!,q}(b,\bm{s})&=\frac{(q-1)^3}{q^3}\cdot\frac{1}{q^{2+b+s_{12}+s_{34}}-1}\cdot\frac{1}{q^{1+b+s_{12}}-1}~,\\
J_{\spl''\!\!,q}(b,\bm{s})&=\frac{(q-1)^3}{q^3}\cdot\frac{1}{q^{2+b+s_{12}+s_{34}}-1}\cdot\frac{1}{q^{1+b+s_{34}}-1}~.
\end{align*} 
In particular, their total contribution to the sum can be written as
\begin{equation}\label{prism}
\sum_{\substack{\spl\in\mathcal{S}_4\\\spl\simeq\spl^*}}J_{\spl,q}(b,\bm{s})=\frac{(q-1)^3}{q^3}\cdot\frac{1}{q^{1+b+s_{12}}-1}\cdot\frac{1}{q^{1+b+s_{34}}-1}\cdot\frac{q^{2+2b+s_{12}+s_{34}}-1}{q^{2+b+s_{12}+s_{34}}-1}~.
\end{equation}
Equation \eqref{prism} hints at an interesting analytic feature of the parameter $b$. Indeed, if $q\geq 2$ and $b\in\C$ are fixed, then each of the summands $J_{\spl^*\!\!,q}(b,\bm{s})$, $J_{\spl'\!\!,q}(b,\bm{s})$, and $J_{\spl''\!\!,q}(b,\bm{s})$ is meromorphic in $\bm{s}=(s_{12},s_{13},s_{14},s_{23},s_{24},s_{34})\in\C^6$, and each of their sets of poles contains the infinite set $$C(b)=\left\{\bm{s}\in\C^6:2+b+s_{12}+s_{34}\in\frac{2\pi i\Z}{\log(q)}~,~1+b+s_{12}\notin\frac{2\pi i\Z}{\log(q)}~,\text{ and }1+b+s_{34}\notin\frac{2\pi i}{\log(q)}\right\}~.$$ If $b$ is not an integer multiple of $2\pi i/\log(q)$, the poles for the sum in \eqref{prism} also include $C(b)$. However, if $b$ \emph{is} an integer multiple of $2\pi i/\log(q)$, then $(q^{2+2b+s_{12}+s_{34}}-1)/(q^{2+b+s_{12}+s_{34}}-1)=1$ and \emph{none} of the $\bm{s}\in C(b)$ are poles for the sum in \eqref{prism}. In particular, $C(0)$ is a common set of poles for all of the level functions $J_{\spl^*\!\!,q}(0,\bm{s})$, $J_{\spl'\!\!,q}(0,\bm{s})$, and $J_{\spl''\!\!,q}(0,\bm{s})$, but all such poles ``cancel" when the level functions are added together: $$\sum_{\substack{\spl\in\mathcal{S}_4\\\spl\simeq\spl^*}}J_{\spl,q}(0,\bm{s})=\frac{(q-1)^3}{q^3}\cdot\frac{1}{q^{1+s_{12}}-1}\cdot\frac{1}{q^{1+s_{34}}-1}=I_{\spl^*\!\!,q}(\bm{s})~.$$ Thus, by collapsing the sum $\sum_{\spl\in\mathcal{S}_4}J_{\spl,q}(0,\bm{s})$ from part (b) of \Cref{main} into its ``reduced" form $\sum_{\spl\in\mathcal{R}_4}I_{\spl,q}(\bm{s})$, part (c) shows that many level function poles ``cancel" in the $b=0$ case.
\end{example}
 
\begin{remark}\label{Rk_poles}
For simple choices of $\rho$, the root function sums to a closed form. In this case \Cref{main} provides meromorphic continuations of both $\bm{s}\mapsto Z_N^\rho(K,a,b,\bm{s})$ and $\bm{s}\mapsto Z_N^\rho(K,a,0,\bm{s})$, and their candidate poles may be easily described. For example, suppose $K$ has residue field cardinality $q$ and $\rho=\bm{1}_{[0,1]}$. It is easily verified from \Cref{rootdef} and \Cref{main} that $Z_N^\rho(K,a,b,\bm{s})$ coincides with the sum
\begin{equation}\label{Lsum}
\frac{q^{a+b+\sum_{i<j}s_{ij}}}{q^{N-1+a+b+\sum_{i<j}s_{ij}}-1}\cdot\sum_{\spl\in\mathcal{S}_N}M_\spl(q)\cdot\prod_{\ell=1}^{L(\spl)-1}\frac{1}{q^{b+E_{\ptn_\ell}(\bm{s})}-1}
\end{equation}
on the convex open region $\Omega_N(a,b)$. Since each summand is meromorphic in $\C^{\binom{N}{2}}$ with set of poles $$\mathscr{L}_{\spl,q}:=\left\{\bm{s}\in\C^{\binom{N}{2}}:N-1+a+b+\sum_{i<j}s_{ij}\in\frac{2\pi i\Z}{\log(q)}\right\}\cup\bigcup_{\ell=1}^{L(\spl)-1}\left\{\bm{s}\in\C^{\binom{N}{2}}:b+E_{\ptn_\ell}(\bm{s})\in\frac{2\pi i\Z}{\log(q)}\right\}~,$$ then \eqref{Lsum} defines the meromorphic continuation of $Z_N^\rho(K,a,b,\bm{s})$ to $\C^{\binom{N}{2}}$, and its poles are contained in the union $\bigcup_{\spl\in\mathcal{S}_N}\mathscr{L}_{\spl,q}$. Similarly, \Cref{rootdef} and part (c) of \Cref{main} show that $Z_N^\rho(K,a,0,\bm{s})$ coincides with the sum 
\begin{equation}\label{Bsum}
\frac{q^{a+\sum_{i<j}s_{ij}}}{q^{N-1+a+\sum_{i<j}s_{ij}}-1}\cdot\sum_{\spl^*\in\mathcal{R}_N}M_{\spl^*}(q)\cdot\prod_{\lambda\in\mathcal{B}(\spl^*)\setminus\overline{\ptn}}\frac{1}{q^{e_\lambda(\bm{s})}-1}
\end{equation}
on a convex open region, and each summand is meromorphic in $\C^{\binom{N}{2}}$ with set of poles $$\mathscr{B}_{\spl^*\!\!,q}:=\left\{\bm{s}\in\C^{\binom{N}{2}}:N-1+a+\sum_{i<j}s_{ij}\in\frac{2\pi i\Z}{\log(q)}\right\}\cup\bigcup_{\lambda\in\mathcal{B}(\spl^*)\setminus\overline{\ptn}}\left\{\bm{s}\in\C^{\binom{N}{2}}:e_\lambda(\bm{s})\in\frac{2\pi i\Z}{\log(q)}\right\}~.$$ Therefore \eqref{Bsum} defines the meromorphic continuation of $Z_N^\rho(K,a,0,\bm{s})$ to $\C^{\binom{N}{2}}$, and its poles are contained in the union $\bigcup_{\spl^*\in\mathcal{R}_N}\mathscr{B}_{\spl^*\!\!,q}$. Though the poles of each summand in \eqref{Lsum} and \eqref{Bsum} are easily described, we saw in \Cref{Ex_N4} that pole cancellation is possible when summands are brought together. As is true for general local zeta functions, determining precisely which of the poles in the candidate sets $\bigcup_{\spl\in\mathcal{S}_N}\mathscr{L}_{\spl,q}$ and $\bigcup_{\spl^*\in\mathcal{R}_N}\mathscr{B}_{\spl^*\!\!,q}$ cancel is a highly nontrivial task.
\end{remark}

\subsection{Applications to log-Coulomb gas}\label{2_2}

The desired formulas for the mixed-charge $p$-field analogue of $\mathcal{Z}_N(\beta)$ and the expected value in \eqref{Jointeq} are easily obtained by evaluating the formulas in \Cref{main} at special values of $\bm{s}$. To this end, we define several new items related to those in \Cref{rootdef,branchlevel}.

\begin{definition}\label{absdef}
Suppose $a,b\in\C$ and $\mathfrak{q}_1,\mathfrak{q}_2,\dots,\mathfrak{q}_N>0$ where $N\geq 2$, and let $\bm{c}:=(\mathfrak{q}_i\mathfrak{q}_j)_{i<j}$.
\begin{itemize}
\item[(a)] Define the \emph{root abscissa} $\mathcal{RP}_N^{\bm{c}}(a,b)$ by $$\mathcal{RP}_N^{\bm{c}}(a,b):=-\frac{N-1+\re(a+b)}{\sum_{i<j}\mathfrak{q}_i\mathfrak{q}_j}~.$$
\item[(b)] For each $\spl\in\mathcal{S}_N$, define the \emph{branch abscissa} $\mathcal{BP}_\spl^{\bm{c}}$ by $$\mathcal{BP}_\spl^{\bm{c}}:=-\inf_{\lambda\in\mathcal{B}(\spl)\setminus\overline{\ptn}}\left\{\frac{\#\lambda-1}{\varepsilon_\lambda(\bm{c})}\right\}\qquad\text{where}\qquad\varepsilon_\lambda(\bm{c}):=\sum_{\substack{i<j\\i,j\in\lambda}}\mathfrak{q}_i\mathfrak{q}_j~.$$
\item[(c)] For each $\spl\in\mathcal{S}_N$, define the \emph{level abscissa} $\mathcal{LP}_\spl^{\bm{c}}$ by $$\mathcal{LP}_\spl^{\bm{c}}(b):=-\inf_{1\leq\ell\leq L(\spl)-1}\left\{\frac{\rank(\ptn_\ell)+\re(b)}{\mathcal{E}_{\ptn_\ell}(\bm{c})}\right\}\qquad\text{where}\qquad\mathcal{E}_{\ptn_\ell}(\bm{c}):=\sum_{\lambda\in\ptn_\ell}\varepsilon_\lambda(\bm{c})~.$$
\end{itemize}
\end{definition}

If $\beta\in\C$ and $\bm{c}$ is defined as above, \Cref{rootdef,splchdef,branchlevel,absdef} together imply
\begin{align*}
\beta\bm{c}\in\mathcal{RP}_N(a,b)\qquad&\iff\qquad\re(\beta)>\mathcal{RP}_N^{\bm{c}}(a,b)~,\\
\beta\bm{c}\in\mathcal{BP}_\spl\qquad&\iff\qquad\re(\beta)>\mathcal{BP}_\spl^{\bm{c}}~,\\
\beta\bm{c}\in\mathcal{LP}_\spl(b)\qquad&\iff\qquad\re(\beta)>\mathcal{LP}_\spl^{\bm{c}}(b)~,
\end{align*}
and hence the convergence criteria for $\bm{s}$ in \Cref{main} become criteria for $\beta$ when $\bm{s}=\beta\bm{c}$. The following corollary comes straight from this observation and \Cref{main}:

\begin{corollary}\label{univariate}
Fix $N\geq 2$, $a,b\in\C$, and $\bm{c}=(\mathfrak{q}_i\mathfrak{q}_j)_{i<j}$ where $\mathfrak{q}_1,\mathfrak{q}_2,\dots,\mathfrak{q}_N>0$. Given a $p$-field $K$, a function $\rho:\mathcal{N}\to\C$ satisfying \eqref{rhogro}, and $\beta\in\C$, recall $$Z_N^\rho(K,a,b,\beta\bm{c})=\int_{K^N}\rho(\|x\|)\big(\max_{i<j}|x_i-x_j|\big)^a\big(\min_{i<j}|x_i-x_j|\big)^b\prod_{i<j}|x_i-x_j|^{\mathfrak{q}_i\mathfrak{q}_j\beta}\,|dx|~.$$
\begin{itemize}
\item[(a)] If $K$ has residue field cardinality $q$, the integral above converges absolutely to $$Z_N^\rho(K,a,b,\beta\bm{c})=H_q^\rho\bigg(N+a+b+\sum_{i<j}\mathfrak{q}_i\mathfrak{q}_j\beta\bigg)\cdot\sum_{\spl\in\mathcal{S}_N}J_{\spl,q}(b,\beta\bm{c})$$ when $$\re(\beta)>\sup\left\{\mathcal{RP}_N^{\bm{c}}(a,b),~\sup_{\spl\in\mathcal{S}_N}\mathcal{LP}_\spl^{\bm{c}}(b)\right\}~.$$
\item[(b)] For the same $K$, if $b=0$, the integral above converges absolutely to $$Z_N^\rho(K,a,0,\beta\bm{c})=H_q^\rho\bigg(N+a+\sum_{i<j}\mathfrak{q}_i\mathfrak{q}_j\beta\bigg)\cdot\sum_{\spl^*\in\mathcal{R}_N}I_{\spl^*\!\!,q}(\beta\bm{c})$$ when $$\re(\beta)>\sup\left\{\mathcal{RP}_N^{\bm{c}}(a,0),~\sup_{\spl^*\in\mathcal{R}_N}\mathcal{BP}_{\spl^*}^{\bm{c}}\right\}~.$$
\end{itemize}
\end{corollary}

Before concluding this section with formulas for the analogue of Mehta's integral and the expectation in \eqref{Jointeq}, we remark on the one-component case, namely $\mathfrak{q}_1=\mathfrak{q}_2=\dots=\mathfrak{q}_N=1$. In this case $\bm{c}=\bm{1}$ is simply the $\binom{N}{2}$-tuple of $1$'s, and for each $\spl\in\mathcal{S}_N$ it is easily verified that $$e_\lambda(\beta\bm{1})=\#\lambda-1+\varepsilon_\lambda(\bm{1})\beta=\binom{\#\lambda}{2}\left(\beta+\frac{2}{\#\lambda}\right)$$ for all $\lambda\in\mathcal{B}(\spl)$ and $$E_{\ptn_\ell}(\beta\bm{1})=\sum_{\lambda\in\ptn_\ell}e_\lambda(\beta\bm{1})=\sum_{\lambda\in\ptn_\ell}\binom{\#\lambda}{2}\left(\beta+\frac{2}{\#\lambda}\right)$$ for all $\ell\in\{0,1,\dots,L(\spl)-1\}$. The exponents above have no dependence on the particular labels $1,2,\dots,N$, so we shall take a moment to discuss a relationship between $\mathcal{S}_N$ and the symmetric group action on the label set $\{1,2,\dots,N\}$.

\begin{definition}\label{sym_action}
Denote the symmetric group on $[N]=\{1,2,\dots,N\}$ by $\Sym([N])$. Given $\sigma\in\Sym([N])$ and a nonempty subset $\lambda=\{i_1,i_2,\dots,i_k\}\subset[N]$, we write $\sigma(\lambda):=\{\sigma(i_1),\sigma(i_2),\dots,\sigma(i_k)\}$, for a partition $\ptn=\{\lambda_1,\lambda_2,\dots,\lambda_n\}$ of $[N]$ we write $\sigma(\ptn):=\{\sigma(\lambda_1),\sigma(\lambda_2),\dots,\sigma(\lambda_n)\}$, and finally, for each $\spl=(\ptn_0,\ptn_1,\dots,\ptn_{L(\spl)})\in\mathcal{S}_N$ we write $\sigma(\spl):=(\sigma(\ptn_0),\sigma(\ptn_1),\dots,\sigma(\ptn_{L(\spl)}))$.
\end{definition} 

If $\Aut(\mathcal{S}_N)$ denotes the group of bijections $\mathcal{S}_N\to\mathcal{S}_N$, the homomorphism $\Sym([N])\to\Aut(\mathcal{S}_N)$ given by $\sigma\mapsto(\spl\mapsto\sigma(\spl))$ is an action of $\Sym([N])$ on $\mathcal{S}_N$. The following properties of this action are clear from \Cref{splchdef,branchlevel}: If $\spl\in\mathcal{S}_N$ and $\sigma\in\Sym([N])$, then
\begin{itemize}
\item $L(\sigma(\spl))=L(\spl)$, and $\sigma(\spl)=\spl$ if and only if $\sigma(\ptn_\ell)=\ptn_\ell$ for all $\ell\in\{0,1,\dots,L(\spl)\}$,
\item $\sigma(\lambda)\in\mathcal{B}(\sigma(\spl))$ if and only if $\lambda\in\mathcal{B}(\spl)$,
\item for each $\lambda\in\mathcal{B}(\spl)$ we have $\#\sigma(\lambda)=\#\lambda$, $\ell_{\sigma(\spl)}(\sigma(\lambda))=\ell_\spl(\lambda)$, and $\deg_{\sigma(\spl)}(\sigma(\lambda))=\deg_\spl(\lambda)$, so
\item $M_{\sigma(\spl)}(t)=M_\spl(t)$, $e_{\sigma(\lambda)}(\beta\bm{1})=e_\lambda(\beta\bm{1})$ for all $\lambda\in\mathcal{B}(\spl)$, and hence $E_{\sigma(\ptn_\ell)}(\beta\bm{1})=E_{\ptn_\ell}(\beta\bm{1})$ for all $\ell\in\{0,1,\dots,L(\spl)-1\}$.
\end{itemize}
In particular, the action of $\Sym([N])$ on $\mathcal{S}_N$ restricts to a well-defined action on $\mathcal{R}_N$.

\begin{definition}\label{orb_stab_wt}
For each $\spl\in\mathcal{S}_N$, define the \emph{orbit}, \emph{stabilizer}, and \emph{weight} of $\spl$ respectively by $$\Orb(\spl):=\{\sigma(\spl):\sigma\in\Sym([N])\}~,\qquad\Stab(\spl):=\{\sigma\in\Sym([N]):\sigma(\spl)=\spl\}~,$$ and $$W(\spl):=\#\Orb(\spl)=\frac{N!}{\#\Stab(\spl)}~.$$
\end{definition}

\Cref{branchlevel,orb_stab_wt} and the properties of the action immediately imply the following:

\begin{lemma}\label{liketerms}
Suppose $q\geq 2$, $b\in\C$, and $\spl\in\mathcal{S}_N$. 
\begin{itemize}
\item[(a)] For each $\beta$ in the domain of $\beta\mapsto I_{\spl,q}(\beta\bm{1})$ we have $$\sum_{\spl'\in\Orb(\spl)}I_{\spl'\!\!,q}(\beta\bm{1})=W(\spl)I_{\spl,q}(\beta\bm{1})=\frac{W(\spl)M_\spl(q)}{q^{N-1}}\cdot\prod_{\lambda\in\mathcal{B}(\spl)\setminus\overline{\ptn}}\frac{1}{q^{\binom{\#\lambda}{2}\left(\beta+\frac{2}{\#\lambda}\right)}-1}~.$$
\item[(b)] For each $\beta$ in the domain of $\beta\mapsto J_{\spl,q}(b,\beta\bm{1})$ we have $$\sum_{\spl'\in\Orb(\spl)}J_{\spl'\!\!,q}(b,\beta\bm{1})=W(\spl)J_{\spl,q}(b,\beta\bm{1})=\frac{W(\spl)M_\spl(q)}{q^{N-1}}\cdot\prod_{\ell=1}^{L(\spl)-1}\frac{1}{q^{b+\sum_{\lambda\in\ptn_\ell}\binom{\#\lambda}{2}\left(\beta+\frac{2}{\#\lambda}\right)}-1}~.$$
\end{itemize}
\end{lemma}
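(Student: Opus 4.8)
The plan is to deduce \Cref{liketerms} directly from the orbit--stabilizer formula together with the $\Sym([N])$-invariance properties of the splitting-filtration statistics already listed in the excerpt, so that essentially no new computation is needed. First I would fix $\spl\in\mathcal{S}_N$ and recall that $\Orb(\spl)=\{\sigma(\spl):\sigma\in\Sym([N])\}$ has cardinality $W(\spl)=N!/\#\Stab(\spl)$. The key point is that every quantity appearing in $I_{\spl,q}(\beta\bm{1})$ and $J_{\spl,q}(b,\beta\bm{1})$ is constant on the orbit of $\spl$: the third bullet after \Cref{sym_action} gives $M_{\sigma(\spl),q}=M_{\spl,q}$, $L(\sigma(\spl))=L(\spl)$, $e_{\sigma(\lambda)}(\beta\bm{1})=e_\lambda(\beta\bm{1})$, and $E_{\sigma(\spl),\ell}(\beta\bm{1})=E_{\spl,\ell}(\beta\bm{1})$, while $\sigma$ permutes $\mathcal{B}(\spl)$ bijectively onto $\mathcal{B}(\sigma(\spl))$ and fixes the top part $\overline{\ptn}$; hence the products over $\mathcal{B}(\spl)\setminus\overline{\ptn}$ (for $I$) and over levels $1\le\ell\le L(\spl)-1$ (for $J$) are reindexed but unchanged. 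This shows $I_{\sigma(\spl),q}(\beta\bm{1})=I_{\spl,q}(\beta\bm{1})$ and $J_{\sigma(\spl),q}(b,\beta\bm{1})=J_{\spl,q}(b,\beta\bm{1})$ for all $\sigma$, and in particular that the domains of these functions (the relevant branch/level polytopes intersected with the $\beta\bm{1}$-line) are the same for $\spl$ and $\sigma(\spl)$, so the sums in the statement are over a well-defined common domain.

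Next I would turn the sum over the orbit into a multiple of a single term. Since $\spl'\mapsto I_{\spl',q}(\beta\bm{1})$ is constant as $\spl'$ ranges over $\Orb(\spl)$, the sum $\sum_{\spl'\in\Orb(\spl)}I_{\spl',q}(\beta\bm{1})$ equals $\#\Orb(\spl)\cdot I_{\spl,q}(\beta\bm{1})=W(\spl)\,I_{\spl,q}(\beta\bm{1})$, and likewise for $J$. Finally I would substitute the one-component evaluations computed just before \Cref{sym_action}, namely $e_\lambda(\beta\bm{1})=\binom{\#\lambda}{2}\bigl(\beta+\tfrac{2}{\#\lambda}\bigr)$ and $E_{\spl,\ell}(\beta\bm{1})=\sum_{\lambda\in\mathcal{B}(\spl)\cap\ptn_\ell}\binom{\#\lambda}{2}\bigl(\beta+\tfrac{2}{\#\lambda}\bigr)$, into the definitions of $I_{\spl,q}$ and $J_{\spl,q}$ from \Cref{splfstat}; this immediately yields the displayed right-hand sides $\frac{W(\spl)M_{\spl,q}}{q^{N-1}}\prod_{\lambda\in\mathcal{B}(\spl)\setminus\overline{\ptn}}\frac{1}{q^{\binom{\#\lambda}{2}(\beta+2/\#\lambda)}-1}$ and $\frac{W(\spl)M_{\spl,q}}{q^{N-1}}\prod_{\ell=1}^{L(\spl)-1}\frac{1}{q^{b+\sum_{\lambda\in\mathcal{B}(\spl)\cap\ptn_\ell}\binom{\#\lambda}{2}(\beta+2/\#\lambda)}-1}$.

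There is essentially no hard step here: the lemma is a bookkeeping consequence of facts the paper has already established, and the only thing to be careful about is making the orbit-invariance argument airtight — specifically that a permutation $\sigma$ really does induce a bijection $\mathcal{B}(\spl)\setminus\overline{\ptn}\to\mathcal{B}(\sigma(\spl))\setminus\overline{\ptn}$ preserving $\#\lambda$, $\ell_\spl(\lambda)$, and the intersections $\mathcal{B}(\spl)\cap\ptn_\ell$, so that the two products genuinely match term by term. This follows from $\sigma(\ptn_\ell)=\ptn_\ell$-type identities and the listed equivariances $\#\sigma(\lambda)=\#\lambda$, $\ell_{\sigma(\spl)}(\sigma(\lambda))=\ell_\spl(\lambda)$, and $\sigma(\overline{\ptn})=\overline{\ptn}$; I would spell this reindexing out explicitly but expect it to be routine. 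The whole proof should be only a few lines once these invariances are invoked.
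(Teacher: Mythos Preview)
Your proposal is correct and matches the paper's own treatment: the paper states that \Cref{liketerms} follows ``immediately'' from \Cref{splfstat}, \Cref{orb_stab_wt}, and the listed $\Sym([N])$-invariance properties after \Cref{sym_action}, which is precisely the orbit-invariance plus orbit-stabilizer argument you outline. No additional ideas are needed.
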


\begin{remark}\label{-2/N}
If $\mathcal{C}_N\subset\mathcal{S}_N$ is a complete set of orbit representatives for the action of $\Sym([N])$ on $\mathcal{S}_N$, then $\mathcal{C}_N\cap\mathcal{R}_N$ is a complete set of orbit representatives for the restricted action on $\mathcal{R}_N$. Then by part (a) of \Cref{liketerms}, the sum over $\spl\in\mathcal{S}_N$ appearing in the main formula for $Z_N^\rho(K,a,b,\beta\bm{1})$ can be grouped into a weighted sum over $\mathcal{C}_N$. Similarly, part (b) of \Cref{liketerms} implies that the sum over $\spl^*\in\mathcal{R}_N$ in the formula for $Z_N^\rho(K,a,0,\beta\bm{1})$ can be grouped into a weighted sum over $\mathcal{C}_N\cap\mathcal{R}_N$. From the viewpoint of log-Coulomb gas, the appearance of these weighted sums has an intuitive explanation: The condition $\mathfrak{q}_1=\mathfrak{q}_2=\dots=\mathfrak{q}_N=1$ makes the particles of the gas identical and imposes symmetries on the set of microstates $x\in K^N$. Each $\spl\in\mathcal{C}_N$ represents a distinct symmetry class of microstates, the factor $\frac{W(\spl)M_\spl(q)}{q^{N-1}}$ can be regarded as its weight, and the two products of rational functions of $q^{-\beta}$ appearing in \Cref{liketerms} are its respective contributions to the functions $\beta\mapsto Z_N^\rho(K,a,0,\beta\bm{1})$ and $\beta\mapsto Z_N^\rho(K,a,b,\beta\bm{1})$. In particular, each symmetry class contributes a weighted term to the canonical partition function $\beta\mapsto\mathcal{Z}_N(\beta)=Z_N^\rho(K,0,0,\beta\bm{1})$. It is also worth noting that the condition on $\re(\beta)$ in part (b) of \Cref{univariate} simplifies further when $a=b=0$ and $\bm{c}=\bm{1}$. Indeed, for general $\bm{c}=(\mathfrak{q}_i\mathfrak{q}_j)_{i<j}$ we have
\begin{equation}\label{absup}
\sup\left\{\mathcal{RP}_N^{\bm{c}}(0,0),~\sup_{\spl^*\in\mathcal{R}_N}\mathcal{BP}_{\spl^*}^{\bm{c}}\right\}=-\inf_{\spl^*\in\mathcal{R}_N}\left\{\inf_{\lambda\in\mathcal{B}(\spl^*)}\left\{\frac{\#\lambda-1}{\sum_{\substack{i<j\\i,j\in\lambda}}\mathfrak{q}_i\mathfrak{q}_j}\right\}\right\}~,
\end{equation}
so if $\spl^*\in\mathcal{R}_N$ and $\bm{c}=\bm{1}$ we have $$\frac{\#\lambda-1}{\sum_{\substack{i<j\\i,j\in\lambda}}\mathfrak{q}_i\mathfrak{q}_j}=\frac{\#\lambda-1}{\binom{\#\lambda}{2}}=\frac{2}{\#\lambda}\qquad\text{for all }\lambda\in\mathcal{B}(\spl^*)~.$$ Thus if $\bm{c}=\bm{1}$, the inner infima in \eqref{absup} are all $\frac{2}{N}$, so the quantity in \eqref{absup} is simply $-\frac{2}{N}$.
\end{remark}

Combining \Cref{liketerms} and \Cref{-2/N} yields the $p$-field analogue of Mehta's integral formula:

\begin{theorem}[Mehta's integral formula for $p$-fields]\label{pfieldMehta}
Suppose $K$ is a $p$-field with residue field cardinality $q$, suppose $\rho:\mathcal{N}\to\C$ satisfies \eqref{rhogro}, and let $\bm{c}=(\mathfrak{q}_i\mathfrak{q}_j)_{i<j}$ where $\mathfrak{q}_1,\mathfrak{q}_2,\dots,\mathfrak{q}_N>0$.
\begin{itemize}
\item[(a)] If $\beta$ is any complex number satisfying $$\re(\beta)>-\inf_{\spl^*\in\mathcal{R}_N}\left\{\inf_{\lambda\in\mathcal{B}(\spl^*)}\left\{\frac{\#\lambda-1}{\sum_{\substack{i<j\\i,j\in\lambda}}\mathfrak{q}_i\mathfrak{q}_j}\right\}\right\}~,$$ then $$\mathcal{Z}_N(\beta)=\int_{K^N}\rho(\|x\|)\prod_{i<j}|x_i-x_j|^{\mathfrak{q}_i\mathfrak{q}_j\beta}\,|dx|=H_q^\rho\left(N+\sum_{i<j}\mathfrak{q}_i\mathfrak{q}_j\beta\right)\cdot\sum_{\spl^*\in\mathcal{R}_N}I_{\spl^*\!\!,q}(\beta\bm{c})~.$$
\item[(b)] In particular, if $\mathfrak{q}_1=\mathfrak{q}_2=\dots=\mathfrak{q}_N=1$ and $\re(\beta)>-\frac{2}{N}$, then
\begin{align*}
\mathcal{Z}_N(\beta)=\frac{1-q^{-\binom{N}{2}\left(\beta+\frac{2}{N-1}\right)}}{1-q^{-\binom{N}{2}\left(\beta+\frac{2}{N}\right)}}&\cdot\sum_{m\in\Z}\rho(q^m)q^{m\binom{N}{2}\left(\beta+\frac{2}{N-1}\right)}\\
&\cdot\sum_{\spl^*\in\mathcal{C}_N\cap\mathcal{R}_N}\frac{W(\spl^*)M_{\spl^*\!}(q)}{q^{N-1}}\prod_{\lambda\in\mathcal{B}(\spl^*)\setminus\overline{\ptn}}\frac{1}{q^{\binom{\#\lambda}{2}\left(\beta+\frac{2}{\#\lambda}\right)}-1}~,
\end{align*}
where $\mathcal{C}_N\subset\mathcal{S}_N$ is a full set of orbit representatives for the action of $\Sym([N])$ on $\mathcal{S}_N$.
\end{itemize}
\end{theorem}

If the $\rho$ above is also nonzero and nonnegative, then $\mathcal{Z}_N(\beta)\in(0,\infty)$ for all $\beta>0$ and the function $x\mapsto\frac{1}{\mathcal{Z}_N(\beta)}\rho(\|x\|)\prod_{i<j}|x_i-x_j|^{\mathfrak{q}_i\mathfrak{q}_j\beta}$ is a well-defined probability density on the microstates $x\in K^N$. Moreover, none of the abscissae in \Cref{absdef} are positive if both $\re(b)\geq-1$ and $\re(a+b)\geq1-N$, in which case the conditions on $\re(\beta)$ in \Cref{univariate} are met by all $\beta>0$. This observation and \eqref{Jointeq} provide the following corollary:

\begin{corollary}\label{EV}
Suppose $K$ is a $p$-field with residue field cardinality $q$, suppose $\rho:\mathcal{N}\to\R_{\geq 0}$ is a nonzero norm-density satisfying \eqref{rhogro}, and let $\bm{c}=(\mathfrak{q}_i\mathfrak{q}_j)_{i<j}$ where $\mathfrak{q}_1,\mathfrak{q}_2,\dots,\mathfrak{q}_N>0$.
\begin{itemize}
\item[(a)] If $\re(b)\geq-1$ and $\re(a+b)\geq1-N$, then for any inverse temperature $\beta>0$ we have
\begin{align*}
\E\left[\big(\max_{i<j}|x_i-x_j|\big)^a\big(\min_{i<j}|x_i-x_j|\big)^b\right]&=\dfrac{H_q^\rho\bigg(N+a+b+\sum_{i<j}\mathfrak{q}_i\mathfrak{q}_j\beta\bigg)\cdot\sum_{\spl^*\in\mathcal{S}_N}J_{\spl,q}(b,\beta\bm{c})}{H_q^\rho\bigg(N+\sum_{i<j}\mathfrak{q}_i\mathfrak{q}_j\beta\bigg)\cdot\sum_{\spl^*\in\mathcal{S}_N}J_{\spl,q}(0,\beta\bm{c})}\\
&=\dfrac{H_q^\rho\bigg(N+a+b+\sum_{i<j}\mathfrak{q}_i\mathfrak{q}_j\beta\bigg)\cdot\sum_{\spl^*\in\mathcal{S}_N}J_{\spl,q}(b,\beta\bm{c})}{H_q^\rho\bigg(N+\sum_{i<j}\mathfrak{q}_i\mathfrak{q}_j\beta\bigg)\cdot\sum_{\spl^*\in\mathcal{R}_N}I_{\spl^*\!\!,q}(\beta\bm{c})}~.
\end{align*}
\item[(b)] In particular, if $b=0$ and $\re(a)\geq1-N$, then for any inverse temperature $\beta>0$ we have $$\E\left[\big(\max_{i<j}|x_i-x_j|\big)^a\right]=\dfrac{H_q^\rho\bigg(N+a+\sum_{i<j}\mathfrak{q}_i\mathfrak{q}_j\beta\bigg)}{H_q^\rho\bigg(N+\sum_{i<j}\mathfrak{q}_i\mathfrak{q}_j\beta\bigg)}~.$$
\end{itemize}
\end{corollary}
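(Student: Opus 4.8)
The plan is to derive both parts from the already-established identity \eqref{Jointeq} together with \Cref{univariate}. Recall that \eqref{Jointeq} expresses the desired expectation as the ratio $Z_N^\rho(K,a,b,\beta\bm{c})/Z_N^\rho(K,0,0,\beta\bm{c})$, and that this is legitimate precisely when the numerator converges absolutely and the denominator $\mathcal{Z}_N(\beta)=Z_N^\rho(K,0,0,\beta\bm{c})$ is a positive finite number. The latter holds for every $\beta>0$ since $\rho$ is a nonzero nonnegative norm-density satisfying \eqref{rhogro}, as noted in the paragraph preceding the corollary. So the whole proof reduces to (i) checking that the hypotheses force the convergence conditions in \Cref{univariate} to hold for all $\beta>0$, and (ii) substituting the resulting formulas and simplifying.

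For step (i) I would show that, assuming $\re(b)\geq-1$ and $\re(a+b)\geq 1-N$, every abscissa in \Cref{absdef} is $\leq 0$. The root abscissa $\mathcal{RP}_N^{\bm{c}}(a,b)=-\frac{N-1+\re(a+b)}{\sum_{i<j}q_iq_j}$ is $\leq 0$ exactly because $\re(a+b)\geq 1-N$. For any $\spl\in\mathcal{S}_N$ and any $\ell$ with $1\leq\ell\leq L(\spl)-1$, the level $\ptn_\ell$ is a proper refinement of $\overline{\ptn}$ lying strictly above $\underline{\ptn}$, so $\#\ptn_\ell\leq N-1$ and thus $\rank(\ptn_\ell)=N-\#\ptn_\ell\geq 1$; together with $\re(b)\geq-1$ this makes $\rank(\ptn_\ell)+\re(b)\geq 0$, while $\mathcal{E}_{\spl,\ell}(\bm{c})>0$, so $\mathcal{LP}_\spl^{\bm{c}}(b)\leq 0$. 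And since every $\lambda\in\mathcal{B}(\spl)$ has $\#\lambda\geq 2$ with $\varepsilon_\lambda(\bm{c})>0$, the branch abscissa $\mathcal{BP}_\spl^{\bm{c}}$ is strictly negative, with no hypothesis on $a,b$ needed. In part (b), where $b=0$, the root abscissa in play is $\mathcal{RP}_N^{\bm{c}}(a,0)=-\frac{N-1+\re(a)}{\sum_{i<j}q_iq_j}\leq 0$ precisely because $\re(a)\geq 1-N$, and for the $a=b=0$ denominator every abscissa is negative. Since a finite supremum of nonpositive numbers is nonpositive and $\{\spl\in\mathcal{S}_N:M_{\spl,q}>0\}\neq\varnothing$ (e.g.\ the ``staircase'' reduced filtration with branches $\{1,\dots,N\},\{1,\dots,N-1\},\dots,\{1,2\}$), the conditions on $\re(\beta)$ in \Cref{univariate}(a),(b) hold for every $\beta>0$.

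For step (ii), part (a) is immediate: apply \Cref{univariate}(a) with parameters $(a,b)$ to the numerator and with parameters $(0,0)$ to the denominator to obtain the first equality, then rewrite the denominator via \Cref{univariate}(b) — valid because it has $b=0$ — to obtain the second. For part (b), apply \Cref{univariate}(b) to both $Z_N^\rho(K,a,0,\beta\bm{c})$ and $Z_N^\rho(K,0,0,\beta\bm{c})$; both formulas contain the identical factor $\sum_{\spl^*\in\mathcal{R}_N,\,M_{\spl^*\!\!,q}>0}I_{\spl^*\!\!,q}(\beta\bm{c})$, which depends only on $N$, $q$, $\bm{c}$, $\beta$ and is nonzero because $Z_N^\rho(K,0,0,\beta\bm{c})=\mathcal{Z}_N(\beta)>0$. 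Cancelling it leaves $H_q^\rho\!\big(N+a+\sum_{i<j}q_iq_j\beta\big)/H_q^\rho\!\big(N+\sum_{i<j}q_iq_j\beta\big)$, which is the claimed formula.

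I do not anticipate a genuine obstacle: the only step with any content is the sign analysis of the abscissae in (i), and that amounts to the two elementary facts that interior levels of a splitting filtration have rank $\geq 1$ and that branches have size $\geq 2$. The remainder is bookkeeping with the formulas already proved in \Cref{main} and repackaged in \Cref{univariate}, plus the positivity of $\mathcal{Z}_N(\beta)$ that makes the probability density — and hence the ratio — meaningful.
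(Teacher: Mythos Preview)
Your proposal is correct and follows the same approach as the paper, which derives the corollary directly from \eqref{Jointeq} together with \Cref{univariate} after observing that the hypotheses $\re(b)\geq-1$ and $\re(a+b)\geq1-N$ force all abscissae in \Cref{absdef} to be nonpositive. You have simply supplied the details behind that observation (interior levels have rank $\geq 1$, branches have size $\geq 2$) and the cancellation of the $I$-sum in part (b), which the paper leaves implicit.
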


As mentioned at the end of \Cref{1_1}, applying part (a) of \Cref{EV} to $a,b\in\Z_{\geq 0}$ gives the joint moments of the random variables $\max_{i<j}|x_i-x_j|$ and $\min_{i<j}|x_i-x_j|$. In particular, the average value in part (b) of \Cref{EV} can be computed without the use of branch or level functions, and thus admits a simple closed form for suitably chosen $\rho$. The next example demonstrates this and addresses the low-temperature limit (i.e., $\beta\to\infty$) in the $b=0$ case.

\begin{example}
Recall that $\|K^N\setminus\{0\}\|=q^\Z$ if the residue field of $K$ has cardinality $q$, and let $\rho$ be the norm-density defined by $\rho(t)=\bm{1}_{[0,q^M]}(t)$ where $M\in\Z$. Since $\rho(\|x\|)=1$ if and only if all $x_i$ are in the disk $\{y\in K:|y|\leq q^M\}$ and otherwise $\rho(\|x\|)=0$, $\rho$ guarantees that the charges are almost surely confined to this disk, and by \Cref{rootdef} we have $$H_q^\rho(z)=\frac{1-q^{-z}}{1-q^{-(z-1)}}\cdot\sum_{m=-M}^\infty(q^{-z})^m=\frac{q^{Mz}}{1-q^{-(z-1)}}\qquad\text{for }\re(z)>1~.$$ Then for $\re(a)\geq 1-N$ part (b) of \Cref{EV} gives the explicit formula $$\E\left[\big(\max_{i<j}|x_i-x_j|\big)^a\right]=\dfrac{\frac{q^{M(N+a+\sum_{i<j}\mathfrak{q}_i\mathfrak{q}_j\beta)}}{1-q^{-(N-1+a+\sum_{i<j}\mathfrak{q}_i\mathfrak{q}_j\beta)}}}{\frac{q^{M(N+\sum_{i<j}\mathfrak{q}_i\mathfrak{q}_j\beta)}}{1-q^{-(N-1+\sum_{i<j}\mathfrak{q}_i\mathfrak{q}_j\beta)}}}=q^{Ma}\cdot\dfrac{q^{N-1+\sum_{i<j}\mathfrak{q}_i\mathfrak{q}_j\beta}-1}{q^{N-1+\sum_{i<j}\mathfrak{q}_i\mathfrak{q}_j\beta}-q^{-a}}~,$$ from which the following asymptotic estimate is clear: $$\E\left[\big(\max_{i<j}|x_i-x_j|\big)^a\right]\sim q^{Ma}\qquad\text{as $N\to\infty$ or $\beta\to\infty$}~.$$ (By taking $N\to\infty$, we are assuming here that a charge $\mathfrak{q}_i>0$ has been specified for every $i\in\N$.) Since $\max_{i<j}|x_i-x_j|\leq q^M$ almost surely, this estimate implies that a gas comprised of many particles and/or held at a low temperature has a relatively high probability of attaining microstates $x\in K^N$ with $\max_{i<j}|x_i-x_j|=q^M$. Roughly speaking, this says the gas is very likely to spread out as widely as possible if it is cold and/or if it has many particles.  
\end{example}

\begin{remark}
The previous example hints at a more general feature of low-temperature limits: Suppose $\rho$ is a compactly supported nonzero norm-density satisfying \eqref{rhogro}. There is a greatest $M\in\Z$ for which $\rho(q^M)\neq0$, so given $\delta>1$ the scaled sum $\frac{H_q^\rho(z)}{q^{Mz}}=\frac{1-q^{-z}}{1-q^{-(z-1)}}\cdot\sum_{m=-M}^\infty\rho(q^{-m})q^{-(m+M)z}$ converges uniformly for $\re(z)\geq\delta$ by \eqref{rhogro}. Therefore we may take $z\to\infty$ term-by-term to obtain $\lim_{z\to\infty}\frac{H_q^\rho(z)}{q^{Mz}}=\rho(q^M)$, and the ratio of root functions in part (a) of \Cref{EV} satisfies
\begin{equation}\label{root_ratio_limit}
\lim_{\beta\to\infty}\dfrac{H_q^\rho\bigg(N+a+b+\sum_{i<j}\mathfrak{q}_i\mathfrak{q}_j\beta\bigg)}{H_q^\rho\bigg(N+\sum_{i<j}\mathfrak{q}_i\mathfrak{q}_j\beta\bigg)}=\lim_{\beta\to\infty}\dfrac{q^{M(a+b)}\cdot\frac{H_q^\rho\bigg(N+a+b+\sum_{i<j}\mathfrak{q}_i\mathfrak{q}_j\beta\bigg)}{q^{M(N+a+b+\sum_{i<j}\mathfrak{q}_i\mathfrak{q}_j\beta)}}}{\frac{H_q^\rho\bigg(N+\sum_{i<j}\mathfrak{q}_i\mathfrak{q}_j\beta\bigg)}{q^{M(N+\sum_{i<j}\mathfrak{q}_i\mathfrak{q}_j\beta)}}}=q^{M(a+b)}~.
\end{equation}
The ratio of sums in part (a) of \Cref{EV} also converges. More precisely, for each $\spl\in\mathcal{S}_N$, define $$Q_\spl(\bm{c}):=\sum_{\ell=1}^{L(\spl)-1}\mathcal{E}_{\ptn_\ell}(\bm{c})=\sum_{\ell=1}^{L(\spl)-1}\sum_{\lambda\in\ptn_\ell}\sum_{\substack{i,j\in\lambda\\i<j}}\mathfrak{q}_i\mathfrak{q}_j~.$$ For any $q\geq 2$, the set $\mathcal{S}_{N,q}=\{\spl\in\mathcal{S}_N:M_\spl(q)>0\}$ contains the splitting chain $$\spl=([N],~[N-1]\{N\},~[N-2]\{N-1\}\{N\},~\dots,\{1\}\{2\}\dots\{N\})~,$$ so $\mathcal{S}_{N,q}\neq\varnothing$ and hence a non-negative minimum $Q_{N,q}^{\min}(\bm{c}):=\min\{Q_\spl(\bm{c}):\spl\in\mathcal{S}_{N,q}\}$ exists. Taking $\beta\to\infty$ gives
\begin{align*}
J_{\spl,q}(b,\beta\bm{c})&=\frac{M_\spl(q)}{q^{N-1}}\cdot\prod_{\ell=1}^{L(\spl)-1}\frac{1}{q^{b+E_{\ptn_\ell}(\beta\bm{c})}-1}\\
&\sim\frac{M_\spl(q)}{q^{N-1}}\cdot q^{-\sum_{\ell=1}^{L(\spl)-1}(b+E_{\ptn_\ell}(\beta\bm{c}))}=\frac{M_\spl(q)}{q^{N-1+\sum_{\ell=1}^{L(\spl)-1}(b+\rank(\ptn_\ell))}}\cdot q^{-\beta Q_\spl(\bm{c})}~,
\end{align*}
and therefore $$\sum_{\spl\in\mathcal{S}_N}J_{\spl,q}(b,\beta\bm{c})\sim\sum_{\substack{\spl\in\mathcal{S}_{N,q}\\Q_\spl(\bm{c})=Q_N^{\min}(\bm{c})}}\frac{M_\spl(q)}{q^{N-1+\sum_{\ell=1}^{L(\spl)-1}(b+\rank(\ptn_\ell))}}\cdot q^{-\beta Q_{N,q}^{\min}(\bm{c})}~.$$ The factors $q^{-(N-1)}$ and $q^{-\beta Q_{N,q}^{\min}(\bm{c})}$ are independent of $b$ and common to all terms in the right-hand sum, so we may abbreviate the above summation by $\sum'$ and obtain
\begin{equation}\label{level_ratio_limit}
\lim_{\beta\to\infty}\dfrac{\sum_{\spl\in\mathcal{S}_N}J_{\spl,q}(b,\beta\bm{c})}{\sum_{\spl\in\mathcal{S}_N}J_{\spl,q}(0,\beta\bm{c})}=\dfrac{\sum'M_\spl(q)q^{-\sum_{\ell=1}^{L(\spl)-1}(\rank(\ptn_\ell)+b)}}{\sum'M_\spl(q)q^{-\sum_{\ell=1}^{L(\spl)-1}\rank(\ptn_\ell)}}~.
\end{equation}
Combining this with part (a) of \Cref{EV} and \eqref{root_ratio_limit} gives the low-temperature limit of any joint moment:
\begin{equation}\label{lowtemp_ab}
\lim_{\beta\to\infty}\E\left[\big(\max_{i<j}|x_i-x_j|\big)^a\big(\min_{i<j}|x_i-x_j|\big)^b\right]=q^{M(a+b)}\cdot\dfrac{\sum'M_\spl(q)q^{-\sum_{\ell=1}^{L(\spl)-1}(\rank(\ptn_\ell)+b)}}{\sum'M_\spl(q)q^{-\sum_{\ell=1}^{L(\spl)-1}\rank(\ptn_\ell)}}~.
\end{equation}
\end{remark}

Explicit computation of \eqref{lowtemp_ab} is generally impractical as it depends on $N$, $q$, and $\bm{c}$ in very complicated ways. Still, it is interesting that the ratio of sums in \eqref{lowtemp_ab} is a weighted average of the finite set of values $$\{q^{-b(L(\spl)-1)}:\spl\in\mathcal{S}_N\text{ with }M_\spl(q)>0\text{ and }Q_\spl(\bm{c})=Q_{N,q}^{\min}(\bm{c})\}~,$$ with each weight $M_\spl(q)q^{-\sum_{\ell=1}^{L(\spl)-1}\rank(\ptn_\ell)}$ independent of $a$, $b$, and $\rho$. Moreover, if $q\geq N$, then the splitting chain $\spl=([N],~\{1\}\{2\}\dots\{N\})\in\mathcal{S}_N$ has $M_\spl(q)=(q-1)_{N-1}>0$ and $Q_\spl(\bm{c})=Q_{N,q}^{\min}(\bm{c})=0$, and in fact it is the only one satisfying $Q_\spl(\bm{c})=Q_{N,q}^{\min}(\bm{c})$. Thus for $q\geq N$ we have $$\lim_{\beta\to\infty}\sum_{\spl\in\mathcal{S}_N}J_{\spl,q}(b,\beta\bm{c})=(q-1)_{N-1}>0$$ and we obtain a final corollary:

\begin{corollary}
Suppose $K$ is a $p$-field with residue field cardinality $q\geq N$ and suppose $\re(b)\geq-1$ and $\re(a+b)\geq 1-N$. Then if $\rho$ is a compactly supported nonzero norm-density satisfying \eqref{rhogro} and $M$ is the largest integer satisfying $\rho(q^M)\neq 0$, we have $$\lim_{\beta\to\infty}\E\left[\big(\max_{i<j}|x_i-x_j|\big)^a\big(\min_{i<j}|x_i-x_j|\big)^b\right]=q^{M(a+b)}~.$$
\end{corollary}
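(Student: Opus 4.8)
The plan is to read the result straight off the low-temperature limit formula \eqref{lowtemp_ab} obtained in the preceding remark; under the hypothesis $q\geq N$ the only extra input required is that the ratio of $\sum'$-sums appearing there equals $1$. First I would check that \eqref{lowtemp_ab} is applicable in our situation. Since $\rho$ is a nonzero norm-density satisfying \eqref{rhogro}, we have $\mathcal{Z}_N(\beta)=Z_N^\rho(K,0,0,\beta\bm{c})\in(0,\infty)$ for all $\beta>0$ (with $\bm{c}=(q_iq_j)_{i<j}$ as throughout), so the probability density on microstates and hence the expectation in \eqref{Jointeq} are well-defined; and because $\re(b)\geq-1$ and $\re(a+b)\geq 1-N$, none of the abscissae of \Cref{absdef} are positive, so part (a) of \Cref{univariate} applies for every $\beta>0$ and the asymptotic analysis leading to \eqref{lowtemp_ab} goes through. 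Compact support of $\rho$ supplies the greatest $M\in\Z$ with $\rho(q^M)\neq 0$ named in the statement.

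Next I would pin down which filtrations contribute to $\sum'$ when $q\geq N$. For any $\spl\in\mathcal{S}_N$ the exponent $\sum_{\ell=1}^{L(\spl)-1}\mathcal{E}_{\spl,\ell}(\bm{c})$ is a sum of the quantities $\mathcal{E}_{\spl,\ell}(\bm{c})=\sum_{\lambda\in\mathcal{B}(\spl)\cap\ptn_\ell}\varepsilon_\lambda(\bm{c})$, and each $\varepsilon_\lambda(\bm{c})=\sum_{i<j,\,i,j\in\lambda}q_iq_j$ is strictly positive since $q_1,\dots,q_N>0$ and $\#\lambda\geq 2$ for every $\lambda\in\mathcal{B}(\spl)$. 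If $L(\spl)\geq 2$ then $\ptn_1>\underline{\ptn}$, so $\ptn_1$ contains a block of size $\geq 2$, which lies in $\mathcal{B}(\spl)\cap\ptn_1$; hence $\sum_{\ell=1}^{L(\spl)-1}\mathcal{E}_{\spl,\ell}(\bm{c})>0$. The only way to make this exponent vanish is therefore $L(\spl)=1$, i.e.\ $\spl=\spl':=(\overline{\ptn},\underline{\ptn})$, in which case the sum is empty and equals $0$. From \Cref{splfstat} one computes $\deg_{\spl'}([N])=N$ and $M_{\spl'\!\!,q}=(q-1)_{N-1}$, which is positive exactly when $q\geq N$. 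Thus, under our hypothesis, $Q_{N,q}(\bm{c})=0$ and $\sum'$ runs over the single filtration $\spl'$.

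Finally I would substitute $\spl=\spl'$ into the ratio in \eqref{lowtemp_ab}. Since $L(\spl')-1=0$, both $\sum_{\ell=1}^{L(\spl')-1}(b+\rank(\ptn_\ell))$ and $\sum_{\ell=1}^{L(\spl')-1}\rank(\ptn_\ell)$ are empty sums equal to $0$, so numerator and denominator are each equal to $M_{\spl'\!\!,q}$, the ratio is $1$, and \eqref{lowtemp_ab} collapses to $q^{M(a+b)}$. The analytic heavy lifting---the formula for $Z_N^\rho$, the convergence polytope, and the derivation of \eqref{lowtemp_ab}---is already in hand, so there is no real obstacle here; the entire content is the combinatorial observation that $L(\spl)=1$ is the unique way to annihilate $\sum_{\ell=1}^{L(\spl)-1}\mathcal{E}_{\spl,\ell}(\bm{c})$, together with the elementary positivity $(q-1)_{N-1}>0\iff q\geq N$. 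The only point requiring mild care is the correct use of the paper's empty-sum and empty-product conventions when every statistic is specialized to the length-one filtration $\spl'$.
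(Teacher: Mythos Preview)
Your proposal is correct and follows essentially the same approach as the paper: the paper also deduces the corollary directly from \eqref{lowtemp_ab} by noting that when $q\geq N$ the unique length-one filtration $\spl'=(\overline{\ptn},\underline{\ptn})$ has $M_{\spl'\!\!,q}=(q-1)_{N-1}>0$, so $Q_{N,q}(\bm{c})=0$, $\sum'$ contains only $\spl'$, and the ratio of sums collapses to $1$. Your write-up is slightly more detailed (explicitly arguing why $L(\spl)\geq 2$ forces a strictly positive exponent and checking the applicability of \eqref{lowtemp_ab}), but the idea is identical.
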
 

\section{The proof of the main theorem}
In this section we let $K$ be a fixed $p$-field with $\mu$, $|\cdot|$, $\|\cdot\|$, $R$, and $P$ as defined in \Cref{1_1}. The steps in the proof are divided as follows. In \Cref{3_1} we briefly recall well-known properties of $K$ (see \cite{Weil}, for example), and in \Cref{3_2} we use these properties to explain how elements of $R^N$ may be visualized as trees. In \Cref{3_3} we find the key connection between these trees and splitting chains, then use it to show that certain integrals can be expressed in terms of level functions in \Cref{spl_integral}. \Cref{3_3} culminates with \Cref{pair_conclusion}, which establishes parts (a) and (b) of \Cref{main} in the special case $\rho=\bm{1}_{[0,1]}$. \Cref{3_4} provides the foundation for part (c) of \Cref{main} with a ``branch-centric" analogue of \Cref{pair_conclusion}, namely \Cref{bspl_conclusion}, establishing \Cref{reduction} in the process. Finally, in \Cref{3_5} we prove that \Cref{pair_conclusion} and \Cref{bspl_conclusion} extend to general $\rho$ via \Cref{rho_thm}, and conclude the proof of \Cref{main}. 
\subsection{Basic properties of $p$-fields}\label{3_1}
\begin{proposition}\label{pfield_props}\
\begin{itemize}
\item[(a)] (The strong triangle inequality and equality.) Every pair of elements $x,y\in K$ satisfies the inequality $|x+y|\leq\max\{|x|,|y|\}$. It becomes equality if $|x|\neq|y|$.
\item[(b)] The closed ball $R$ is a local PID, the open ball $P$ is its unique maximal ideal, and the unit group is $R^\times=R\setminus P=\{x\in K:|x|=1\}$.
\item[(c)] The \emph{residue field} $\kappa:=R/P$ is isomorphic to $\F_q$ for some prime power $q\geq 2$.
\item[(d)] The canonical absolute value $|\cdot|$ restricts to a surjective homomorphism $K^\times\to q^\Z$ and satisfies $|x|=\mu(xR)$ for every $x\in K$.
\item[(e)] The fraction field of $R$ is $K$, in which the fractional ideals of $R$ are precisely the balls $$P^m=\{x\in K:|x|\leq q^{-m}\}~,\qquad m\in\Z~.$$ Moreover, every ball in $K$ is open, compact, of the form $y+P^m=\{x\in K:|x-y|\leq q^{-m}\}$ for some $m\in\Z$ and $y\in K$, and with measure $\mu(y+P^m)=q^{-m}$.
\end{itemize}
\end{proposition}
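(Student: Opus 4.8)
The plan is to bootstrap all five assertions from exactly two inputs: the defining property of a $p$-field that the image of $\Z\to K$ lies in $R$ (so $\{|n\cdot 1|:n\in\Z\}$ is bounded by $1$), and the local compactness of $K$. I would take the parts in the order (a)--(e), as each feeds the next. For (a), I would first upgrade boundedness of $\{|n\cdot 1|:n\in\Z\}$ to the ultrametric inequality by the classical binomial argument: expand $|x+y|^n=|(x+y)^n|$, bound it by $(n+1)\max\{|x|,|y|\}^n$ using $|\binom{n}{k}\cdot 1|\leq 1$, take $n$-th roots, and let $n\to\infty$. The equality clause is then formal: if $|x|>|y|$ then $|x|=|(x+y)-y|\leq\max\{|x+y|,|y|\}$ forces $|x+y|\geq|x|$, so $|x+y|=|x|=\max\{|x|,|y|\}$.

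For (b) and (c): by (a), $R$ is closed under addition, and it is a subring since $|\cdot|$ is multiplicative with $|1|=1$; an element $x\in R$ is a unit of $R$ iff $|x|=1$, so the nonunits are precisely $P$, which is therefore the unique maximal ideal and $R$ is local with $R^\times=R\setminus P$. To see $R$ is a PID I would invoke discreteness of the value group $|K^\times|\subseteq\R_{>0}$; fixing a uniformizer $\pi$ with $|\pi|$ the largest value below $1$, any nonzero ideal $I$ equals $\pi^m R$ for $m=\min\{\ord(z):z\in I\}$, where $\ord$ is the resulting valuation. For (c), $P$ maximal makes $\kappa=R/P$ a field, and it is finite because $R$ is a compact open subgroup of $(K,+)$ while $\kappa$ is discrete; hence $\kappa\cong\F_q$ with $q=p^f\geq 2$ a prime power.

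For (d) and (e): the value group is $|\pi|^\Z$, and to pin down $|\pi|$ I would feed $M=R$ into the normalization $|x|=\mu(xM)/\mu(M)$ together with the coset decomposition $R=\bigsqcup_{i=1}^{q}(a_i+\pi R)$ coming from $R/\pi R\cong\kappa$; translation invariance gives $\mu(R)=q\,\mu(\pi R)$, hence $|\pi|=\mu(\pi R)/\mu(R)=1/q$, so $|\cdot|$ surjects onto $q^\Z$, and since $\mu(R)=1$ in our normalization (as $K\not\cong\R,\C$) we get $|x|=\mu(xR)$. Part (e) is then cleanup: $\Frac(R)=K$ because $|x|\leq 1$ or $|x^{-1}|\leq 1$ for every $x\in K^\times$; writing ideals as $\pi^nR$ and denominators as $u\pi^k$ shows every fractional ideal is $\pi^{n-k}R=P^{n-k}$, and $P^m=\pi^mR=\{|x|\leq q^{-m}\}$; any ball $\{|x-y|\leq r\}$ normalizes to the case $r=q^{-m}$ (the value set being $q^\Z\cup\{0\}$), giving $y+P^m$, which is open because $\{|x|\leq q^{-m}\}=\{|x|<q^{-(m-1)}\}$ is the preimage of an open set under the continuous $|\cdot|$, compact as a translate of $\pi^mR$, and of measure $\mu(y+P^m)=|\pi^m|\,\mu(R)=q^{-m}$ by (d).

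The only genuinely non-formal steps are the discreteness of the value group and the compactness of $R$, both of which rest on local compactness of $K$ rather than on $|\cdot|$ alone; this is the crux, and rather than reprove it I would cite \cite{Weil} (discreteness can also be obtained directly: continuity of $|\cdot|$ together with local compactness forces a neighbourhood of $1$ to meet $K^\times$ only in $\{|x|=1\}$, so $|K^\times|$ is discrete). With these two topological facts in hand, everything else follows from the purely algebraic manipulations of $|\cdot|$ sketched above.
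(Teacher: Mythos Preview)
Your sketch is correct, and in fact it does strictly more than the paper does: the paper states this proposition without proof, prefacing it only with ``We begin by recalling well-known properties of $K$ (see \cite{Weil}, for example)''. So there is nothing to compare your argument against---the author simply treats these facts as standard background and defers to Weil.

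Your bootstrap is the standard one and is sound in every part. The binomial expansion trick for (a), the local-ring-with-nonunits-an-ideal argument for (b), finiteness of $\kappa$ via compactness of $R$ and openness of $P$ for (c), and the coset count $\mu(R)=q\,\mu(\pi R)$ to pin down $|\pi|=q^{-1}$ for (d) are exactly how one would reconstruct this from first principles. You are also right to flag that the two genuinely topological inputs---discreteness of $|K^\times|$ and compactness of $R$---are where local compactness actually enters, and that everything else is formal manipulation of the absolute value. Citing \cite{Weil} for those two facts (as the paper itself does for the whole proposition) is entirely appropriate here.
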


\noindent The strong triangle inequality and equality distinguish $K$ from its archimedean counterparts in striking ways. To name a few, any two open balls in $K$ are either nested or disjoint, $K$ is totally disconnected, and $|1+1+\dots+1|\leq 1$ for any finite sum of $1$'s (this is why $K$ and $|\cdot|$ are called \emph{nonarchimedean}). Of particular contrast and importance is the countability of the set $|K|=q^\Z\cup\{0\}$. This fact implies $\|K^N\|=q^\Z\cup\{0\}\subset\mathcal{N}$, motivates the next definition, and implies the following corollary.

\begin{definition}
The \emph{canonical valuation} is the surjective function $v:K\to\Z\cup\{\infty\}$ defined by $$v(x):=\begin{cases}-\log_q|x|&\text{if $x\neq 0$},\\\infty&\text{if $x=0$.}\end{cases}$$ A \emph{uniformizer} for $v$ is any element $\pi\in K$ satisfying $v(\pi)=1$ or equivalently $\pi\in P\setminus P^2$.
\end{definition}

\begin{corollary}\label{val_unif}\
\begin{itemize}
\item[(a)] The canonical valuation restricts to a surjective homomorphism $K^\times\to\Z$ and satisfies the inequality $v(x+y)\geq\min\{v(x),v(y)\}$ for all $x,y\in K$. It becomes equality if $v(x)\neq v(y)$.
\item[(b)] Suppose $\pi\in K$ is a uniformizer. Then $P^m=\pi^mR=\{x\in K:v(x)\geq m\}$ for all $m\in\Z$. In particular, $|x|=q^{-m}\iff v(x)=m$, and in this case $x=\pi^mu$ for a unique $u\in R^\times$.
\end{itemize}
\end{corollary}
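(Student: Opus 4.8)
The plan is to derive \Cref{val_unif} directly from \Cref{pfield_props}, translating each statement about the canonical absolute value $|\cdot|$ into the corresponding statement about the valuation $v=-\log_q|\cdot|$, using that $t\mapsto-\log_q t$ is an order-reversing group isomorphism from $(q^\Z,\cdot)$ onto $(\Z,+)$ (with the convention $-\log_q 0=\infty$).

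For part (a), I would first invoke \Cref{pfield_props}(d), which says $|\cdot|$ restricts to a surjective homomorphism $K^\times\to q^\Z$; postcomposing with the isomorphism $-\log_q$ shows that $v$ restricts to a surjective homomorphism $K^\times\to\Z$. For the ultrametric inequality, apply the order-reversing map $-\log_q$ to the strong triangle inequality $|x+y|\leq\max\{|x|,|y|\}$ of \Cref{pfield_props}(a) to obtain $v(x+y)\geq\min\{v(x),v(y)\}$; the cases in which one of $x$, $y$, or $x+y$ is zero are dispatched immediately, since the relevant value is then $+\infty$ and the inequality holds vacuously. The equality clause when $v(x)\neq v(y)$ corresponds to $|x|\neq|y|$ with both nonzero, which is exactly the equality clause of \Cref{pfield_props}(a) transported through $-\log_q$.

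For part (b), fix a uniformizer $\pi$, so that $v(\pi)=1$ and hence $v(\pi^m)=m$ for all $m\in\Z$ by the homomorphism property just established. \Cref{pfield_props}(e) already gives $P^m=\{x\in K:|x|\leq q^{-m}\}=\{x\in K:v(x)\geq m\}$, so it remains to identify this set with $\pi^m R$. I would prove both inclusions: if $r\in R$ then $v(\pi^m r)=m+v(r)\geq m$, so $\pi^m R\subseteq P^m$; conversely, if $x\in P^m$ and $x\neq 0$ then $v(\pi^{-m}x)=v(x)-m\geq 0$, whence $\pi^{-m}x\in R$ and $x\in\pi^m R$, while $0=\pi^m\cdot 0$. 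For the remaining assertions, $|x|=q^{-m}\iff v(x)=m$ is immediate from the definition of $v$; and if $v(x)=m$ then $u:=\pi^{-m}x$ has $v(u)=0$, i.e.\ $|u|=1$, so $u\in R^\times$ by \Cref{pfield_props}(b), giving $x=\pi^m u$. Uniqueness of $u$ follows by cancelling the unit $\pi^m$ in the field $K$.

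I do not anticipate any real obstacle: the corollary is a bookkeeping consequence of \Cref{pfield_props}, and the only point deserving a moment's care is the consistent treatment of the value $+\infty$ and of the zero element when passing between the multiplicative and additive formulations.
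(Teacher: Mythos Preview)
Your proposal is correct and follows exactly the intended route: the paper states \Cref{val_unif} as an immediate corollary of \Cref{pfield_props} without giving any explicit proof, and your argument is precisely the translation of each part of \Cref{pfield_props} through the order-reversing isomorphism $-\log_q$ that the word ``corollary'' is signaling. There is nothing to compare.
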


Note that $|\cdot|$, $v$, $R$, $P$, $q$, and the family of additive Haar measures on $K$ are all canonical in the sense that they are completely determined by $K$. In fact, the only choice we have insisted on so far is our particular Haar measure $\mu$, for it satisfies the convenient identity $\mu(xR)=|x|$ and hence takes values in $q^\Z\cup\{0\}$. We will now make two more choices in order to apply the following proposition consistently in upcoming proofs. Namely, fix a uniformizer $\pi\in K$ and a set of representatives $D\subset R$ for $\kappa=R/P$ such that $0\in D$.

\begin{proposition}\label{sum_rep}
For each $x\in R$ there is a unique sequence $(d(0),d(1),d(2),\dots)$ in $D$ such that $$x=\sum_{n=0}^\infty\pi^nd(n)~,$$ and this series is absolutely convergent with respect to $|\cdot|$. In this case $v(x)=\inf\{n:d(n)\neq 0\}$, and if $(d'(0),d'(1),d'(2),\dots)$ is the corresponding sequence for $y\in R$ then $v(x-y)=\inf\{n:d(n)\neq d'(n)\}$. Moreover, given $m\in\N$, the collection of partial sums $\{\sum_{n=0}^{m-1}\pi^nd(n):d(n)\in D\}$ is a full set of representatives for the quotient $R/P^m=R/\pi^mR$.
\end{proposition}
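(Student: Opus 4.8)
The plan is to construct the digit sequence by a greedy recursion and then read off all the stated properties. Set $x_0:=x$, and having produced $x_n\in R$, let $d(n)\in D$ be the unique representative of the residue class $x_n+P\in\kappa$; this digit is well-defined and forced because $D\to\kappa$ is a bijection. Then $x_n-d(n)\in P=\pi R$ by \Cref{val_unif}(b), so there is a unique $x_{n+1}\in R$ with $x_n-d(n)=\pi x_{n+1}$, uniqueness holding because $\pi$ is not a zero divisor. A trivial induction gives $x=\sum_{n=0}^{m-1}\pi^n d(n)+\pi^m x_m$ for every $m$, so writing $S_m:=\sum_{n=0}^{m-1}\pi^n d(n)$ we get $|x-S_m|=|\pi^m x_m|\leq q^{-m}\to 0$; hence the series converges to $x$, and it is absolutely convergent since $|\pi^n d(n)|\leq q^{-n}$. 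For uniqueness of the whole sequence: if $x=\sum_n\pi^n e(n)$ with $e(n)\in D$, reducing mod $P$ forces $e(0)=d(0)$, and then dividing $x-d(0)$ by $\pi$ and iterating forces $e(n)=d(n)$ for all $n$.

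Next I would derive the two valuation formulas from the strong triangle equality in \Cref{val_unif}(a). Put $k:=\inf\{n:d(n)\neq 0\}$ (with $\inf\varnothing=\infty$, covering $x=0$). If $k<\infty$, factor $x=\pi^k\big(d(k)+\pi\sum_{n\geq 0}\pi^n d(n+k+1)\big)$, where the parenthesized element lies in $R$ and is congruent to $d(k)$ mod $P$; since $d(k)\in D\setminus\{0\}$ and $0$ is the only element of $D$ in $P$, this element is a unit, so $v(x)=k$. For the second formula, apply the same idea to $x-y=\sum_n\pi^n\big(d(n)-d'(n)\big)$: with $k:=\inf\{n:d(n)\neq d'(n)\}$ one factors out $\pi^k$ to obtain $x-y=\pi^k\big((d(k)-d'(k))+\pi(\cdots)\big)$ with $(\cdots)\in R$, and since $d(k)\neq d'(k)$ in $D$ forces $\overline{d(k)}\neq\overline{d'(k)}$ in $\kappa$, the difference $d(k)-d'(k)$ and hence the whole bracket is a unit, giving $v(x-y)=k$.

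Finally, for the claim about $R/P^m=R/\pi^m R$: the partial sum $S_m$ satisfies $x-S_m=\pi^m x_m\in P^m$, so every $x\in R$ is congruent mod $P^m$ to one of the partial sums $\sum_{n=0}^{m-1}\pi^n d(n)$, $d(n)\in D$; thus these $q^m=(\#D)^m$ partial sums surject onto $R/P^m$. They are moreover pairwise incongruent: by the difference-formula argument (applied to the two finite sums, or to their expansions padded with zeros), two distinct digit tuples yield partial sums whose difference has valuation strictly less than $m$, hence a nonzero image in $R/P^m$. Since $\#(R/P^m)=q^m$ — which follows from $\mu(R)=1$ and $\mu(P^m)=q^{-m}$ in \Cref{pfield_props}(e), or inductively from $P^j/P^{j+1}\cong\kappa$ — the partial sums form a complete set of representatives.

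I expect the only mildly delicate bookkeeping to be (i) making sure each step of the recursion is genuinely forced (the bijectivity of $D\to\kappa$ pins down $d(n)$, and $\pi$ being a nonzero divisor pins down $x_{n+1}$), and (ii) justifying $\#(R/P^m)=q^m$; neither is a real obstacle. Everything else is a direct application of completeness of $K$ together with the strong triangle equality.
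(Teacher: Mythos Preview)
Your proof is correct and complete. Note, however, that the paper does not actually supply its own proof of this proposition: it is listed among the ``well-known properties of $K$'' in \Cref{3_1}, with a reference to Weil for details. So there is no paper proof to compare against; your argument is the standard one and would serve perfectly well as a self-contained justification.
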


\begin{remark}\label{val_m}
In light of \Cref{sum_rep}, if $x,y\in R$ have series representations $x=\sum_{n=0}^\infty\pi^nd(n)$ and $y=\sum_{n=0}^\infty\pi^nd'(n)$, we may henceforth use the following equivalent statements interchangeably:
\begin{itemize}
\item $|x-y|\leq q^{-m}$,
\item $v(x-y)\geq m$,
\item $\inf\{n:d(n)\neq d'(n)\}\geq m$,
\item $x\equiv y\mod\pi^m$.
\end{itemize}
\end{remark}

\subsection{The tree part of a series representation}\label{3_2}
With $\pi$, $D$, and \Cref{sum_rep} in hand, we can now present a method for decomposing and visualizing elements $x\in R^N\setminus V_0$, where $N\geq 2$ is fixed and $V_0:=\{x\in R^N:x_i=x_j\text{ for some $i<j$}\}$. Given $x=(x_1,x_2,\dots,x_N)\in R^N$, \Cref{sum_rep} provides a unique sequence $(d_i(0),d_i(1),d_i(2),\dots)$ in $D$ satisfying $x_i=\sum_{n=0}^\infty\pi^nd_i(n)$ for each entry $x_i$. This gives a unique series representation for $x$, namely $$x=\sum_{n=0}^\infty\pi^nd(n)\quad\text{ where }\quad d(n)=(d_1(n),d_2(n),\dots,d_N(n))\in D^N~,$$ and this series converges absolutely in $R^N$. Moreover, given $m\in\N$, $\{\sum_{n=0}^{m-1}\pi^nd(n):d(n)\in D^N\}$ is a complete set of representatives for the quotient $R^N/\pi^mR^N$, so we will abuse notation and write $$R^N/\pi^mR^N=\left\{\sum_{n=0}^{m-1}\pi^nd(n):d(n)\in D^N\right\}~.$$ Given $x=\sum_{n=0}^\infty\pi^nd(n)\in R^N$ and $m\in\N$, it is clear that the unique elements $y\in R^N/\pi^mR^N$ and $z\in\pi^mR^N$ satisfying $x=y+z$ are respectively $y=\sum_{n=0}^{m-1}\pi^nd(n)$ and $z=\sum_{n=m}^\infty\pi^nd(n)$. The following definition makes use of this and the following observation: $$x\in R^N\setminus V_0\quad\iff\quad x\in R^N\text{ and }\sup_{i<j}v(x_i-x_j)<\infty~.$$

\begin{definition}
We call an element $y\in R^N\setminus V_0$ a \emph{tree} of length $m\in\N$ if $$y\in R^N/\pi^mR^N\qquad\text{and}\qquad m=\max_{i<j}v(y_i-y_j)+1~.$$
\end{definition}

Given $x=\sum_{n=0}^\infty\pi^nd(n)\in R^N\setminus V_0$ with $m=\max_{i<j}v(x_i-x_j)+1$, note that $y=\sum_{n=0}^{m-1}\pi^nd(n)$ is the unique partial sum of $x$ that forms a tree, so $y$ will accordingly be called the \emph{tree part} of $x$. The reason for the name ``tree" is clarified by the next example, which will be revisited during the proofs of the main theorems.

\begin{example}\label{Extree_1}
Suppose $N=9$ and $K=\Q_5$ with uniformizer $\pi=5$ and digit set $D=\{0,1,2,3,4\}$. The tree $y=\sum_{n=0}^75^nd(n)$ corresponding to the digit vectors $d(0),d(1),\dots,d(7)$ in \Cref{fig_1} can be visualized as a rooted tree. The root represents the value 0, and the nodes traversed by the path from the root down to the leaf $y_i$ represent the consecutive partial sums of $y_i=\sum_{n=0}^75^nd_i(n)$. It should be noted that for general trees $y\in R^N\setminus V_0$, the corresponding diagram need not have $y_i$ in index order at the bottom. The tree in this example was only chosen this way only to make the diagram in \Cref{fig_1} easily discernible from the digits.
\end{example}

\begin{figure}[h]
\includegraphics[scale=0.86]{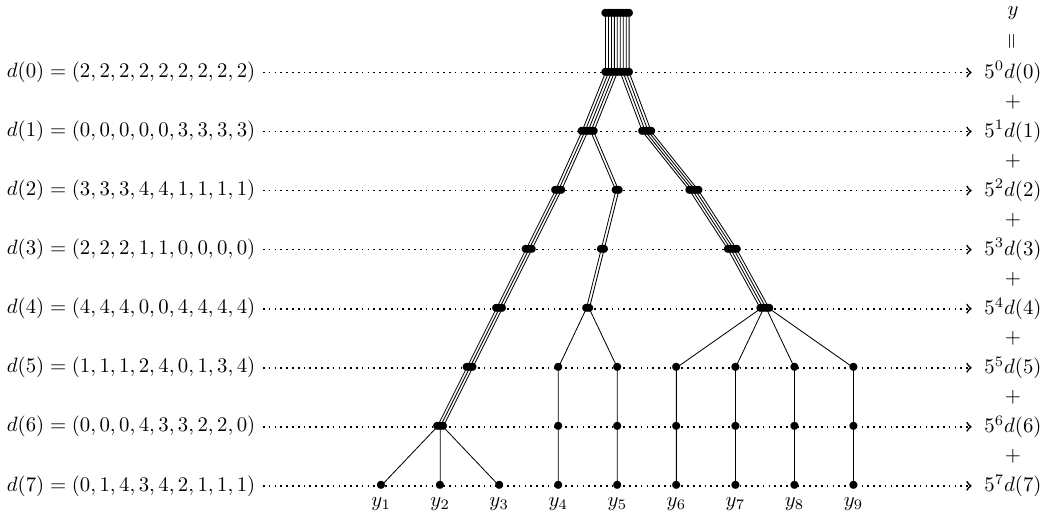}
\caption{The diagram for a tree $y\in\Z_5^9$ of length $8$.}\label{fig_1}
\end{figure}

\subsection{Integration with level pairs}\label{3_3}
Maintaining the notation above, we now establish the key connection between splitting chains and elements of $R^N\setminus V_0$.

\begin{definition}\label{spl_pair_def}
If $\spl\in\mathcal{S}_N$ and $\bm{n}=(\eta_0,\eta_1,\dots,\eta_{L(\spl)-1})\in\N^{L(\spl)}$, we call the pair $(\spl,\bm{n})$ a \emph{level pair}.
\end{definition}

Given $x\in R^N\setminus V_0$, we may associate a unique level pair to $x$ as follows. Let $y$ be the tree part of $x$ and suppose it has length $m$. Then $m=\max_{i<j}\{v(y_i-y_j)\}+1$, so there is a unique $L\in\N$ and unique integers $m_0,m_1,\dots,m_{L+1}$ satisfying $-1=:m_0<m_1<\dots<m_{L+1}:=m_L+1=m$ and $$\{v(y_i-y_j):1\leq i<j\leq N\}=\{m_1,m_2,m_3,\dots,m_L\}~.$$ Then for each $\ell\in\{0,1,2,\dots,L\}$ we define an equivalence relation $\sim_\ell$ on $[N]$ via $$i\sim_\ell j\qquad\iff\qquad y_i\equiv y_j\mod\pi^{m_{\ell+1}}$$ and let $\ptn_\ell$ be the partition of $[N]$ comprised of $\sim_\ell$-equivalence classes. Since $\min_{i<j}\{v(y_i-y_j)\}=m_1$, \Cref{val_m} implies $y_i\equiv y_j\mod\pi^{m_1}$ for all $i<j$ and hence $\ptn_0=\{[N]\}=\overline{\ptn}$. On the other hand, since $\max_{i<j}\{v(y_i-y_j)\}=m_L<m_{L+1}$, the same remark implies $y_i\not\equiv y_j\mod\pi^{m_{L+1}}$ for all $i<j$ and hence $\ptn_L=\{\{1\},\{2\},\dots,\{N\}\}=\underline{\ptn}$. For each $\ell\in\{0,1,\dots,L-1\}$ note that every pair $i<j$ satisfying $i\sim_{\ell+1}j$ also satisfies $i\sim_\ell j$, and hence $\ptn_{\ell+1}\leq\ptn_\ell$. In particular, since $v(y_i-y_j)=m_{\ell+1}$ for at least one pair $i<j$, then this pair satisfies $i\sim_\ell j$ and $i\not\sim_{\ell+1}j$, so in fact we have $\ptn_{\ell+1}<\ptn_\ell$. Then $\overline{\ptn}=\ptn_0>\ptn_1>\ptn_2>\dots>\ptn_L=\underline{\ptn}$, meaning $\spl=(\ptn_0,\ptn_1,\ptn_2,\dots,\ptn_L)$ is a splitting chain of order $N$ and length $L(\spl)=L$. Finally, define $\bm{n}=(\eta_0,\eta_1,\dots,\eta_{L-1})\in\N^L$ via $\eta_\ell:=m_{\ell+1}-m_\ell$. Thus $(\spl,\bm{n})$ is a level pair determined completely by $x$, so we call it the \emph{level pair associated to} $x$.

\begin{figure}[h]
\includegraphics[scale=0.8]{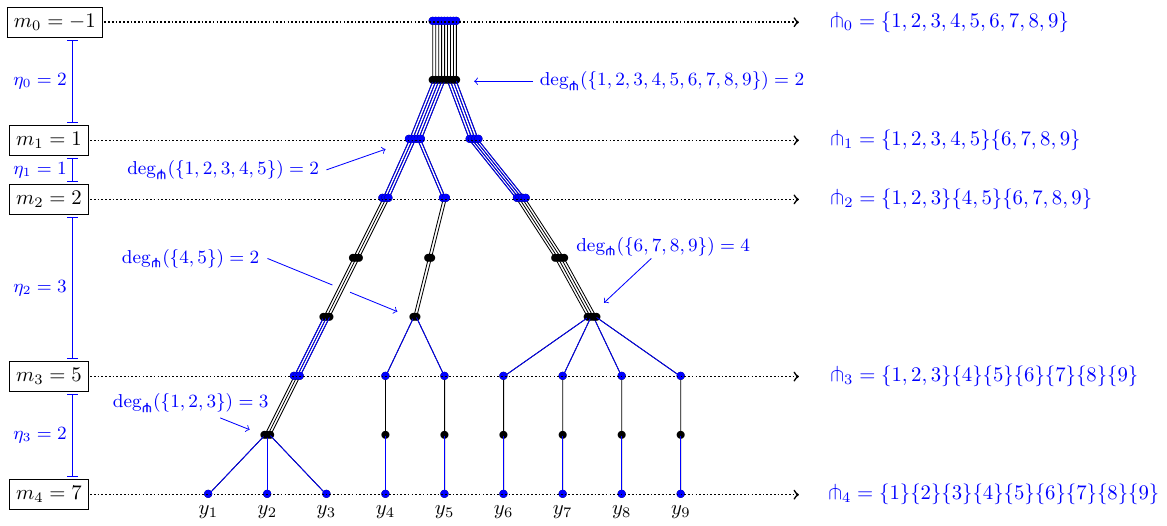}

\caption{The level pair $(\spl,\bm{n})$ associated to the tree in \Cref{Extree_1} is comprised of the splitting chain $\spl=(\ptn_0,\ptn_1,\ptn_2,\ptn_3,\ptn_4)\in\mathcal{S}_9$ described at right and the tuple $\bm{n}=(2,1,3,2)$. As mentioned above, the integers $m_0,m_1,m_2,m_3,m_4$ satisfy $m_0=-1$ and $m_{\ell+1}=-1+\eta_0+\dots+\eta_\ell$ for $0\leq\ell\leq3$.}\label{fig_2}
\end{figure}

The level pair associated to $x$ should be regarded as a compact summary of key features of the diagram for the tree part of $x$. More precisely, for each $\ell\in\{0,1,\dots,L(\spl)-1\}$ we have $y_i-y_j\in\pi^{m_{\ell+1}}R$ (where $m_{\ell+1}=-1+\eta_0+\eta_1+\dots+\eta_\ell$) if and only if $i$ and $j$ are contained in the same $\lambda\in\ptn_\ell$. The proper refinement $\ptn_\ell>\ptn_{\ell+1}$ reflects the fact that at least one $\lambda\in\ptn_\ell$ breaks into $\deg_\spl(\lambda)>1$ parts in $\ptn_{\ell+1}$, because at least one pair $i,j\in\lambda$ satisfies $y_i\not\equiv y_j\mod\pi^{m_{\ell+1}+1}$, and hence the paths for $y_i$ and $y_j$ in the diagram split at level $m_{\ell+1}$ (see \Cref{fig_2}). The integers $m_1,m_2,\dots,m_{L(\spl)}$ mark the levels where these splittings happen, and the integers $\eta_0,\eta_1,\dots,\eta_{L(\spl)-1}$ appearing in the tuple $\bm{n}$ are the spacings between those $m_\ell$.

\begin{definition}\label{pairset_def}
For each level pair $(\spl,\bm{n})$ define $$\mathcal{T}(\spl,\bm{n}):=\{x\in R^N\setminus V_0:(\spl,\bm{n})\text{ is the level pair associated to $x$}\}~.$$
\end{definition}

There are three key properties of the sets $\mathcal{T}(\spl,\bm{n})$ that will be used in our proof. The first is the following decomposition of $R^N$, which is immediate from \Cref{pairset_def} because each $x\in R^N\setminus V_0$ has exactly one associated level pair $(\spl,\bm{n})$: 
\begin{equation}\label{decomp}
R^N=V_0\sqcup\bigsqcup_{\spl\in\mathcal{S}_N}\bigsqcup_{\bm{n}\in\N^{L(\spl)}}\mathcal{T}(\spl,\bm{n})~.
\end{equation}
In particular, note that the union is countable because $\mathcal{S}_N$ is finite and $\N^{L(\spl)}$ is countable for each $\spl\in\mathcal{S}_N$, and note that some $\mathcal{T}(\spl,\bm{n})$ may be empty. The second key property of $\mathcal{T}(\spl,\bm{n})$ is the following lemma:

\begin{lemma}\label{pairset_measure}
Each $\mathcal{T}(\spl,\bm{n})$ is compact and open with measure $$\mu^N(\mathcal{T}(\spl,\bm{n}))=M_\spl(q)\cdot\prod_{\ell=0}^{L(\spl)-1}q^{-\rank(\ptn_\ell)\eta_\ell}~.$$ In particular, $\mathcal{T}(\spl,\bm{n})=\varnothing$ if and only if $M_\spl(q)=0$.
\end{lemma}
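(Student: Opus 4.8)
The plan is to compute the measure by cutting $\mathcal{T}(\spl,\bm{n})$ into finitely many polydiscs and counting them. Throughout, write $L:=L(\spl)$, $m_\ell:=-1+\sum_{j<\ell}n_j$ for $0\le\ell\le L$, and $m:=\sum_{\ell=0}^{L-1}n_\ell=m_L+1$; by the construction in \Cref{3_3}, $m$ is exactly the length of the tree part of any $x\in\mathcal{T}(\spl,\bm{n})$, and the associated level pair is read off from the congruences $x_i\equiv x_j\bmod\pi^{m_{\ell+1}}$. First I would note that membership in $\mathcal{T}(\spl,\bm{n})$ depends only on $x\bmod\pi^mR^N$: if $x\in\mathcal{T}(\spl,\bm{n})$ then $\max_{i<j}v(x_i-x_j)=m-1<m$, so for any $x'\in x+\pi^mR^N$ the strong triangle equality (\Cref{pfield_props}(a)) gives $v(x_i'-x_j')=v(x_i-x_j)$ for every pair $i<j$, and hence $x'$ has the same level pair. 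Thus $\mathcal{T}(\spl,\bm{n})$ is a disjoint union of cosets $y+\pi^mR^N$, where $y$ runs over the \emph{valid trees}, i.e. the $y\in R^N/\pi^mR^N$ (in the abused notation of \Cref{3_2}) whose associated level pair is $(\spl,\bm{n})$. Since $R^N/\pi^mR^N$ is finite and each coset $y+\pi^mR^N$ is compact, open, and of measure $\mu(\pi^mR)^N=q^{-mN}$ by \Cref{pfield_props}(d)--(e), it follows at once that $\mathcal{T}(\spl,\bm{n})$ is compact and open with $\mu^N(\mathcal{T}(\spl,\bm{n}))=V\cdot q^{-mN}$, where $V$ is the number of valid trees.

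The main work is to count $V$, i.e. the digit vectors $d(0),\dots,d(m-1)\in D^N$ for which the partitions induced by agreement modulo $\pi^{m_1},\dots,\pi^{m_{L+1}}$ are precisely $\ptn_0,\dots,\ptn_L$. The crucial structural observation---forced by the requirement $\{v(x_i-x_j):i<j\}=\{m_1,\dots,m_L\}$---is that within a given part $\lambda\in\ptn_{\ell-1}$ all refinement into the parts of $\ptn_\ell$ must occur at the \emph{single} digit position $m_\ell$: two indices of $\lambda$ cannot first disagree at a position strictly between $m_\ell$ and $m_{\ell+1}$, as no such position is of the form $m_k$. Consequently $V$ factors over digit positions as a product of three kinds of local counts: (i) the positions $0,\dots,m_1-1$ force all $N$ coordinates equal, contributing $q^{n_0-1}$; (ii) each splitting position $m_k$ ($1\le k\le L$) requires assigning pairwise distinct digits in $D$ to the parts of $\ptn_k$ lying inside each part $\lambda$ of $\ptn_{k-1}$, contributing $\prod_{\lambda\in\ptn_{k-1}}(q)_{k_\lambda}$, where $k_\lambda$ is the number of parts of $\ptn_k$ contained in $\lambda$; and (iii) the remaining positions of each block force agreement within each part of $\ptn_k$, contributing a further power of $q$. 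The point making (ii) a product over parts is that coordinates in distinct parts of $\ptn_{k-1}$ have already disagreed at an earlier position, so their digits at $m_k$ are mutually unconstrained.

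The simplification of $V$ is then routine. Using $(q)_{k_\lambda}=q\,(q-1)_{k_\lambda-1}$ and the fact that $(q-1)_{k_\lambda-1}=1$ unless $\lambda$ is a branch with $\ell_\spl(\lambda)=k-1$---in which case $k_\lambda=\deg_\spl(\lambda)$---the product over all $k$ and all $\lambda\in\ptn_{k-1}$ of the $(q-1)$-factors collapses to exactly $M_{\spl,q}=\prod_{\lambda\in\mathcal{B}(\spl)}(q-1)_{\deg_\spl(\lambda)-1}$. Collecting the powers of $q$ (using $\#\ptn_0=1$ and $\sum_\ell n_\ell=m$) gives $V=M_{\spl,q}\cdot q^{\sum_{\ell=0}^{L-1}(\#\ptn_\ell)n_\ell}$, so that $\mu^N(\mathcal{T}(\spl,\bm{n}))=Vq^{-mN}=M_{\spl,q}\cdot q^{-\sum_{\ell=0}^{L-1}(N-\#\ptn_\ell)n_\ell}$, which is the claimed formula since $\rank(\ptn_\ell)=N-\#\ptn_\ell$. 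Finally, since $\mathcal{T}(\spl,\bm{n})$ is open and every nonempty open subset of $K^N$ has positive $\mu^N$-measure (it contains a translate of some $\pi^mR^N$), the set is empty exactly when its measure vanishes, i.e. exactly when $M_{\spl,q}=0$; equivalently, a valid tree fails to exist exactly when some part must split into more than $q=\#D$ pieces at a single digit position.

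The step I expect to be the main obstacle is the counting of $V$---specifically, pinning down the ``all splitting within a part happens at one digit position'' phenomenon cleanly enough that $V$ genuinely factors over digit positions, and then organizing the bookkeeping so the $(q-1)$-falling factorials assemble into $M_{\spl,q}$ while the leftover powers of $q$ telescope into $q^{\sum_\ell(N-\#\ptn_\ell)n_\ell}$. Everything else reduces to the basic properties of $p$-fields recalled in \Cref{3_1} together with the description of trees and level pairs in \Cref{3_2,3_3}.
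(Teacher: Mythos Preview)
Your proposal is correct and follows essentially the same route as the paper: you decompose $\mathcal{T}(\spl,\bm{n})$ into cosets $y+\pi^mR^N$ indexed by valid trees, count those trees by analyzing each digit position (distinguishing ``splitting'' positions $m_k$ from the constant stretches between them), and observe that the injective-coloring count at each splitting position produces a falling factorial whose $(q-1)$-part assembles into $M_{\spl,q}$ while the leftover powers of $q$ combine with the stretch contributions to give $q^{\sum_\ell(\#\ptn_\ell)n_\ell}$. The only cosmetic difference is that the paper treats all non-splitting positions uniformly in a single case rather than separating the initial block, and spells out the two subcases at a splitting position (part does or does not split) where you handle them jointly via $(q)_{k_\lambda}=q\,(q-1)_{k_\lambda-1}$.
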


\begin{proof} Fix a level pair $(\spl,\bm{n})$. Using the tuple $\bm{n}=(\eta_0,\eta_1,\dots,\eta_{L(\spl)-1})\in\N^{L(\spl)}$, we define the familiar integers $m_0,m_1,\dots,m_{L(\spl)+1}$ by $m_0:=-1$, $$m_{\ell'+1}:=-1+\sum_{\ell=0}^{\ell'}\eta_\ell\qquad\text{for $0\leq\ell'\leq L(\spl)-1$}\qquad\text{and}\qquad m_{L(\spl)+1}:=m_{L(\spl)}+1=\sum_{\ell=0}^{L(\spl)-1}\eta_\ell~,$$ and note that $\eta_\ell=m_{\ell+1}-m_\ell$ for all $\ell\in\{0,1,\dots,L(\spl)-1\}$. By the discussion following \Cref{spl_pair_def}, note that $x\in\mathcal{T}(\spl,\bm{n})$ if and only if $x\in y+\pi^{m_{L(\spl)+1}}R^N$, where $y$ is a tree with the following properties:
\begin{itemize}
\item[(i)] $y$ is a finite sum of the form $y=\sum_{n=0}^{m_{L(\spl)}}\pi^nd(n)$,
\item[(ii)] $\{v(y_i-y_j):1\leq i<j\leq N\}=\{m_1,m_2,\dots,m_{L(\spl)}\}$, and
\item[(iii)] for $\lambda\in\ptn_\ell$, $i,j\in\lambda$ if and only if $y_i\equiv y_j\mod\pi^{m_{\ell+1}}$.
\end{itemize}
Since $y+\pi^{m_{L(\spl)+1}}R^N$ is open and compact with measure $$\mu^N(y+\pi^{m_{L(\spl)+1}}R^N)=\mu^N(\pi^{m_{L(\spl)+1}}R^N)=q^{-Nm_{L(\spl)+1}}=\prod_{\ell=0}^{L(\spl)-1}q^{-N\eta_\ell}~,$$ it remains to find the number of trees $y$ satisfying (i)-(iii) and multiply the measure above by this number. According to (i), every such $y$ corresponds to a unique finite sequence of digit tuples $d(0),d(1),\dots,d(m_{L(\spl)})\in D^N$, so we will count all valid $y$ by counting sequences. The terms in such a sequence may be chosen independently, so we will start by counting valid $d(n)\in D^N$ for each $n\in\{0,1,\dots,m_{L(\spl)}\}$ in two cases, maintaining conditions (i)-(iii) as we go:

\begin{itemize}
\item[(I)] Suppose $m_\ell<n<m_{\ell+1}$ for some $\ell\in\{0,1,\dots,L(\spl)-1\}$. For each $\lambda\in\ptn_\ell$ we must have $y_i\equiv y_j\mod\pi^{m_{\ell+1}}$ for all $i,j\in\lambda$. By \Cref{val_m}, we must therefore choose $d(n)\in D^N$ in such a way that for every $\lambda\in\ptn_\ell$, we have $\inf\{n:d_i(n)\neq d_j(n)\}=v(y_i-y_j)\geq m_{\ell+1}$ for all $i,j\in\lambda$. As $n<m_{\ell+1}$, this means we must ensure $d_i(n)=d_j(n)$ for all $i,j\in\lambda$. Thus, for each $\lambda\in\ptn_\ell$ we must choose one value $d_\lambda\in D$ and set $d_i(n)=d_\lambda$ for all $i\in\lambda$. This must be done for $\#\ptn_\ell$ parts $\lambda$ with $\#D=q$ choices per part, so we have $q^{\#\ptn_\ell}$ valid choices of $d(n)$.
\item[(II)] Suppose $n=m_{\ell+1}$ for some $\ell\in\{0,1\dots,L(\spl)-1\}$. Recall that $\ptn_{\ell+1}$ is a proper refinement of $\ptn_\ell$, note that $\ptn_\ell$ decomposes into the two disjoint sets $$\ptn_\ell':=\{\lambda\in\ptn_\ell:\lambda\in\ptn_{\ell+1}\}\qquad\text{and}\qquad\ptn_\ell'':=\{\lambda\in\ptn_\ell:\lambda\text{ is a union of at least two }\lambda'\in\ptn_{\ell+1}\}~,$$ and note that the latter is actually $\ptn_\ell''=\{\lambda\in\mathcal{B}(\spl):\ell_\spl(\lambda)=\ell\}$ by part (b) of \Cref{splchdef}. We use the decomposition $\ptn_\ell=\ptn_\ell'\sqcup\ptn_\ell''$ to break the problem of counting valid digit tuples $d(m_{\ell+1})\in D^N$ into two corresponding subcases:
\begin{itemize}
\item[$\bullet$] If $\lambda\in\ptn_\ell'$, then $\lambda\in\ptn_{\ell+1}$, and this means any $i,j\in\lambda$ must satisfy $y_i\equiv y_j\mod\pi^{m_{\ell+2}}$. Then by \Cref{val_m} we must have $\inf\{n:d_i(n)\neq d_j(n)\}=v(y_i-y_j)\geq m_{\ell+2}$, so we need only choose one value $d_\lambda\in D$ and set $d_i(m_{\ell+1})=d_\lambda$ for all $i\in\lambda$ just as in (I). Thus for each $\lambda\in\ptn_\ell'$ we have $q=\#D$ valid ways to choose the partial digit tuple $(d_i(m_{\ell+1}))_{i\in\lambda}$.
\item[$\bullet$] If $\lambda\in\ptn_\ell''$, then the number of parts $\lambda'\in\ptn_{\ell+1}$ comprising $\lambda$ is precisely $\deg_\spl(\lambda)$. Given one such $\lambda'\subset\lambda$, every pair $i,j\in\lambda'$ must satisfy $y_i\equiv y_j\mod\pi^{m_{\ell+2}}$, or equivalently $\inf\{n:d_i(n)\neq d_j(n)\}=v(y_i-y_j)\geq m_{\ell+2}$. Thus by \Cref{val_m} again, for every pair $i,j\in\lambda'$ we must have $d_i(m_{\ell+1})=d_j(m_{\ell+1})$. On the other hand, if $\lambda',\lambda''\in\ptn_{\ell+1}$ are distinct parts contained in $\lambda$ and we have $i\in\lambda'$ and $j\in\lambda''$, then both $y_i\equiv y_j\mod\pi^{m_{\ell+1}}$ and $y_i\not\equiv y_j\mod\pi^{m_{\ell+2}}$ must be satisfied. By \Cref{val_m} and the necessary condition $v(y_i-y_j)\in\{m_1,m_2,\dots,m_{L(\spl)}\}$, we must ensure $\inf\{n:d_i(n)\neq d_j(n)\}=v(y_i-y_j)=m_{\ell+1}$ and hence $d_i(m_{\ell+1})\neq d_j(m_{\ell+1})$. Therefore we must choose an ordered set of $\deg_\spl(\lambda)$ distinct values $d_{\lambda'}\in D$ (one for each part $\lambda'\in\ptn_{\ell+1}$ contained in $\lambda$, and ordered because these $\lambda'$ are distinct), and for each $\lambda'\subset\lambda$ we must set $d_i(m_{\ell+1})=d_{\lambda'}$ for all $i\in\lambda'$. Thus, for each $\lambda\in\ptn_\ell''$ the number of valid ways to choose the partial digit tuple $(d_i(m_{\ell+1}))_{i\in\lambda}$ is the number of ways of choosing these $d_{\lambda'}$, namely $$\binom{\#D}{\deg_\spl(\lambda)}\cdot(\deg_\spl(\lambda))!=(q)_{\deg_\spl(\lambda)}=q\cdot(q-1)_{\deg_\spl(\lambda)-1}~.$$
\end{itemize}
The two subcases now combine to conclude case (II) as follows. Since the partition $\ptn_\ell=\ptn_\ell'\sqcup\ptn_\ell''$ gives rise to a partition of the entries of the tuple $d(m_{\ell+1})=(d_1(m_{\ell+1}),\dots,d_N(m_{\ell+1}))$, the number of valid ways to choose this tuple is simply the product 
\begin{align*}
\prod_{\lambda\in\ptn_\ell}\#\{\text{valid ways to choose }(d_i(m_{\ell+1}))_{i\in\lambda}\}&=\prod_{\lambda\in\ptn_\ell'}q\cdot\prod_{\lambda\in\ptn_\ell''}(q\cdot(q-1)_{\deg_\spl(\lambda)-1})\\
&=q^{\#\ptn_\ell'}\cdot q^{\#\ptn_\ell''}\cdot\prod_{\lambda\in\ptn_\ell''}(q-1)_{\deg_\spl(\lambda)-1}\\
&=q^{\#\ptn_\ell}\cdot\prod_{\substack{\lambda\in\mathcal{B}(\ptn)\\\ell_\spl(\lambda)=\ell}}(q-1)_{\deg_\spl(\lambda)-1}~.
\end{align*}
\end{itemize}
Finally, we combine cases (I) and (II): For each $0\leq\ell\leq L(\spl)-1$, case (I) provides $q^{\#\ptn_\ell(m_{\ell+1}-m_\ell-1)}=q^{\#\ptn_\ell(\eta_\ell-1)}$ valid choices for the partial sequence of tuples $d(m_\ell+1),d(m_\ell+2),\dots,d(m_{\ell+1}-1)$, and the final product from case (II) is the number of valid ways to choose $d(m_{\ell+1})$ and hence extend the sequence to one of the form $d(m_\ell+1),d(m_\ell+2),\dots,d(m_{\ell+1})$. Thus, concatenating these sequences for $\ell\in\{0,1,\dots,L(\spl)-1\}$, we take the product over such $\ell$ and conclude that there are $$\prod_{\ell=0}^{L(\spl)-1}\left(q^{\#\ptn_\ell(\eta_\ell-1)}\cdot q^{\#\ptn_\ell}\cdot\prod_{\substack{\lambda\in\mathcal{B}(\ptn)\\\ell_\spl(\lambda)=\ell}}(q-1)_{\deg_\spl(\lambda)-1}\right)=M_\spl(q)\cdot\prod_{\ell=0}^{L(\spl)-1}q^{\#\ptn_\ell \eta_\ell}$$ ways to choose a sequence of digit tuples $d(0),d(1),\dots,d(m_{L(\spl)})$ such that $y=\sum_{n=0}^{m_{L(\spl)}}\pi^nd(n)$ satisfies (i)-(iii). Thus $\mathcal{T}(\spl,\bm{n})$ is a disjoint union of $M_\spl(q)\cdot\prod_{\ell=0}^{L(\spl)-1}q^{\#\ptn_\ell\eta_\ell}$ sets of the form $y+\pi^{m_{L(\spl)+1}}R^N$, so clearly $\mathcal{T}(\spl,\bm{n})=\varnothing$ if and only if $M_\spl(q)=0$, and $\mathcal{T}(\spl,\bm{n})$ is open and compact with measure $$\mu^N(\mathcal{T}(\spl,\bm{n}))=M_\spl(q)\cdot\prod_{\ell=0}^{L(\spl)-1}q^{\#\ptn_\ell \eta_\ell}\cdot\prod_{\ell=0}^{L(\spl)-1}q^{-N\eta_\ell}=M_\spl(q)\cdot\prod_{\ell=0}^{L(\spl)-1}q^{-\rank(\ptn_\ell)\eta_\ell}~.$$
\end{proof}

The final key property of the sets $\mathcal{T}(\spl,\bm{n})$ is that all factors of the integrand in \Cref{ZNKabs} except possibly $\rho$ are constant on each one. More precisely, we have the following lemma:

\begin{lemma}\label{pairset_value}
If $a,b\in\C$, $\bm{s}\in\C^{\binom{N}{2}}$ and $x\in\mathcal{T}(\spl,\bm{n})$, then $$\big(\max_{i<j}|x_i-x_j|\big)^a\big(\min_{i<j}|x_i-x_j|\big)^b\prod_{i<j}|x_i-x_j|^{s_{ij}}=q^{-(a+b+\sum_{i<j}s_{ij})(\eta_0-1)}\cdot\prod_{\ell=1}^{L(\spl)-1}q^{-(b+E_{\ptn_\ell}(\bm{s})-\rank(\ptn_\ell))\eta_\ell}~.$$
\end{lemma}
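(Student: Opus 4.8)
The plan is to first pin down $|x_i-x_j|$ for every pair $i<j$ and then substitute into the three factors, collecting exponents of $q$. Recall the integers attached to $\bm{n}$, namely $m_0:=-1$ and $m_{\ell+1}:=-1+\sum_{k=0}^{\ell}n_k$ for $0\le\ell\le L(\spl)-1$, and let $y$ be the tree part of $x$. From the construction of the level pair associated to $x$, one has $y_i-y_j\in\pi^{m_{\ell+1}}R$ if and only if $i$ and $j$ lie in a common part of $\ptn_\ell$. Since $x\equiv y\bmod\pi^{m_{L(\spl)+1}}$ and every $v(y_i-y_j)$ lies in $\{m_1,\dots,m_{L(\spl)}\}$, hence strictly below $m_{L(\spl)+1}$, part (a) of \Cref{pfield_props} (the strong triangle equality) gives $v(x_i-x_j)=v(y_i-y_j)$. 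Writing $\ell_{ij}:=\max\{\ell:i,j\text{ lie in a common part of }\ptn_\ell\}$ --- which lies in $\{0,\dots,L(\spl)-1\}$ since $\ptn_0=\overline{\ptn}$ and $\ptn_{L(\spl)}=\underline{\ptn}$ --- the ``if and only if'' recalled above forces $v(x_i-x_j)=m_{\ell_{ij}+1}$, i.e.\ $|x_i-x_j|=q^{-m_{\ell_{ij}+1}}$.

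The easy consequences come first: the minimum of $v(x_i-x_j)$ over $i<j$ is $m_1=n_0-1$ and the maximum is $m_{L(\spl)}=(n_0-1)+\sum_{\ell=1}^{L(\spl)-1}n_\ell$, so $\big(\max_{i<j}|x_i-x_j|\big)^a=q^{-a(n_0-1)}$ and $\big(\min_{i<j}|x_i-x_j|\big)^b=q^{-b(n_0-1)}\prod_{\ell=1}^{L(\spl)-1}q^{-bn_\ell}$. For the product over $i<j$ I would expand $m_{\ell_{ij}+1}=(n_0-1)+\sum_{k=1}^{\ell_{ij}}n_k$ and swap the order of summation:
$$\sum_{i<j}s_{ij}\,m_{\ell_{ij}+1}=(n_0-1)\sum_{i<j}s_{ij}+\sum_{k=1}^{L(\spl)-1}n_k\sum_{\substack{i<j\\\ell_{ij}\ge k}}s_{ij}~.$$
The condition $\ell_{ij}\ge k$ is equivalent to $i,j$ lying in a common part of $\ptn_k$, because the $\ptn_\ell$ form a descending refinement chain, so the inner sum equals $\sum_{\lambda\in\ptn_k}\sum_{\substack{i<j\\i,j\in\lambda}}s_{ij}$; since singleton parts contribute nothing this is $\sum_{\lambda\in\mathcal{B}(\spl)\cap\ptn_k}\sum_{\substack{i<j\\i,j\in\lambda}}s_{ij}=E_{\spl,k}(\bm{s})-\rank(\ptn_k)$ by \Cref{splfstat}(c). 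Hence $\prod_{i<j}|x_i-x_j|^{s_{ij}}=q^{-(n_0-1)\sum_{i<j}s_{ij}}\prod_{\ell=1}^{L(\spl)-1}q^{-(E_{\spl,\ell}(\bm{s})-\rank(\ptn_\ell))n_\ell}$.

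Finally I would multiply the three factors and group the exponent of $q$: the $(n_0-1)$-terms combine to $-(a+b+\sum_{i<j}s_{ij})(n_0-1)$, and for each $\ell\in\{1,\dots,L(\spl)-1\}$ the $n_\ell$-terms combine to $-(b+E_{\spl,\ell}(\bm{s})-\rank(\ptn_\ell))n_\ell$, which is exactly the claimed identity. I expect the only real care is needed in the summation swap and in recognizing the inner sum as $E_{\spl,\ell}(\bm{s})-\rank(\ptn_\ell)$; the rest is direct substitution of $|x_i-x_j|=q^{-m_{\ell_{ij}+1}}$, and the degenerate case $L(\spl)=1$ (empty products, $\max=\min$) is subsumed automatically.
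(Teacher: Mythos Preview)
Your proof is correct and follows essentially the same approach as the paper's own proof: both reduce to $v(x_i-x_j)=v(y_i-y_j)$ via the strong triangle equality, compute the max and min factors directly, and handle the product by expanding $m_{\ell+1}$ in terms of the $n_k$, swapping the order of summation, and recognizing the inner sum over pairs in a common part of $\ptn_\ell$ as $E_{\spl,\ell}(\bm{s})-\rank(\ptn_\ell)$. Your notation $\ell_{ij}$ is a mild repackaging of the paper's level-by-level bookkeeping, but the argument is the same.
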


\begin{proof}
Just as in the proof of \Cref{pairset_measure}, we use the given tuple $\bm{n}=(\eta_0,\eta_1,\dots,\eta_{L(\spl)-1})$ to define integers $m_0,m_1,\dots,m_{L(\spl)+1}$ via $m_0:=-1$, $$m_{\ell'+1}:=-1+\sum_{\ell=0}^{\ell'}\eta_\ell\qquad\text{for }\ell'\in\{0,1,\dots,L(\spl)-1\}$$ and $m_{L(\spl)+1}:=m_{L(\spl)}+1$, and note that $\eta_\ell=m_{\ell+1}-m_\ell$ for all $\ell\in\{0,1,\dots,L(\spl)-1\}$. Now if $y$ is the tree part of $x$, we have $m_{L(\spl)}=\max_{i<j}\{v(y_i-y_j)\}$ and $x=y+z$ with $z\in\pi^{m_{L(\spl)+1}}R^N$, so $\min_{i<j}\{v(z_i-z_j)\}>m_{L(\spl)}$ and hence $v(y_i-y_j)=v(x_i-x_j)$ for all $i<j$ by part (a) \Cref{val_unif}. Therefore $$\big(\max_{i<j}|x_i-x_j|\big)^a\big(\min_{i<j}|x_i-x_j|\big)^b\prod_{i<j}|x_i-x_j|^{s_{ij}}=\big(\max_{i<j}|y_i-y_j|\big)^a\big(\min_{i<j}|y_i-y_j|\big)^b\prod_{i<j}|y_i-y_j|^{s_{ij}}~,$$ where
\begin{itemize}
\item[(i)] $y$ is a finite sum of the form $y=\sum_{n=0}^{m_{L(\spl)}}\pi^nd(n)$,
\item[(ii)] $\{v(y_i-y_j):1\leq i<j\leq N\}=\{m_1,m_2,\dots,m_{L(\spl)}\}$, and
\item[(iii)] for $\lambda\in\ptn_\ell$, $i,j\in\lambda$ if and only if $y_i\equiv y_j\mod\pi^{m_{\ell+1}}$
\end{itemize}
as in the proof of \Cref{pairset_measure}. Now
\begin{align*}
\left(\max_{i<j}|y_i-y_j|\right)^a&=q^{-a\cdot\min_{i<j}v(y_i-y_j)}=q^{-am_1}=q^{-a(\eta_0-1)}~,\\
\left(\min_{i<j}|y_i-y_j|\right)^b&=q^{-b\cdot\max_{i<j}v(y_i-y_j)}=q^{-bm_{L(\spl)}}=q^{-b(\eta_0-1)}\cdot\prod_{\ell=1}^{L(\spl)-1}q^{-b\eta_\ell}~,
\end{align*}
and
\begin{align*}
\sum_{i<j}s_{ij}v(y_i-y_j)&=\sum_{\ell=1}^{L(\spl)}\sum_{\substack{i<j\\v(y_i-y_j)=m_\ell}}s_{ij}m_\ell\\
&=\sum_{\ell=1}^{L(\spl)}\sum_{\substack{i<j\\v(y_i-y_j)=m_\ell}}s_{ij}(-1+\eta_0+\eta_1+\dots+\eta_{\ell-1})\\
&=\sum_{\substack{i<j\\v(y_i-y_j)=m_1}}s_{ij}(-1+\eta_0)\\
&~~~+\sum_{\substack{i<j\\v(y_i-y_j)=m_2}}s_{ij}(-1+\eta_0+\eta_1)\\
&~~~~~~~~\vdots\\
&~~~~~~~~~+\sum_{\substack{i<j\\v(y_i-y_j)=m_{L(\spl)}}}s_{ij}(-1+\eta_0+\eta_1+\dots+\eta_{L(\spl)-1})~,
\end{align*}
so exchanging the order of summation in the above sum of sums gives $$\sum_{i<j}s_{ij}v(y_i-y_j)=\left[\sum_{\substack{i<j\\v(y_i-y_j)\geq m_1}}s_{ij}\right](\eta_0-1)+\sum_{\ell=1}^{L(\spl)-1}\left[\sum_{\substack{i<j\\v(y_i-y_j)\geq m_{\ell+1}}}s_{ij}\right]\eta_\ell~.$$ Since $v(y_i-y_j)\geq m_1$ for all $i<j$, the first term in brackets is simply $\sum_{i<j}s_{ij}$. For the other terms in brackets, recall $$v(y_i-y_j)\geq m_{\ell+1}\qquad\iff\qquad y_i\equiv y_j\mod\pi^{m_{\ell+1}}\qquad\iff\qquad i,j\in\lambda\text{ for some $\lambda\in\ptn_\ell$}$$ by \Cref{val_m} and property (iii) of $y$. Therefore $$\sum_{\substack{i<j\\v(y_i-y_j)\geq m_{\ell+1}}}s_{ij}=\sum_{\lambda\in\ptn_\ell}\sum_{\substack{i<j\\i,j\in\lambda}}s_{ij}=E_{\ptn_\ell}(\bm{s})-\rank(\ptn_\ell)$$ by part (c) of \Cref{splchdef}, and hence $$\sum_{i<j}s_{ij}v(y_i-y_j)=\left[\sum_{i<j}s_{ij}\right](\eta_0-1)+\sum_{\ell=1}^{L(\spl)-1}\left[E_{\ptn_\ell}(\bm{s})-\rank(\ptn_\ell)\right]\eta_\ell$$ implies $$\prod_{i<j}|y_i-y_j|^{s_{ij}}=q^{-(\sum_{i<j}s_{ij})(\eta_0-1)}\cdot\prod_{\ell=1}^{L(\spl)-1}q^{-(E_{\ptn_\ell}(\bm{s})-\rank(\ptn_\ell))\eta_\ell}~.$$ Combining this with the max and min factors gives the desired result:
\begin{align*}
\big(\max_{i<j}|x_i-x_j|\big)^a\big(\min_{i<j}|x_i-x_j|\big)^b&\prod_{i<j}|x_i-x_j|^{s_{ij}}=\big(\max_{i<j}|y_i-y_j|\big)^a\big(\min_{i<j}|y_i-y_j|\big)^b\prod_{i<j}|y_i-y_j|^{s_{ij}}\\
&=q^{-(a+b+\sum_{i<j}s_{ij})(\eta_0-1)}\cdot\prod_{\ell=1}^{L(\spl)-1}q^{-(b+E_{\ptn_\ell}(\bm{s})-\rank(\ptn_\ell))\eta_\ell}~.
\end{align*}
\end{proof}

Though \Cref{pairset_measure,pairset_value} are useful on their own, their combination is especially important. Indeed, \Cref{pairset_measure} provides an explicit formula for the measure of $\mathcal{T}(\spl,\bm{n})$, on which the constant value taken by $x\mapsto\big(\max_{i<j}|x_i-x_j|\big)^a\big(\min_{i<j}|x_i-x_j|\big)^b\prod_{i<j}|x_i-x_j|^{s_{ij}}$ is given in \Cref{pairset_value}. Thus the integral of this function over a given set $\mathcal{T}(\spl,\bm{n})$ is simply the product of the function value and the value of $\mu^N(\mathcal{T}(\spl,\bm{n}))$:

\begin{corollary}\label{pairset_integral}
If $a,b\in\C$, then for every $\bm{s}\in\C^{\binom{N}{2}}$ we have
\begin{align*}\int_{\mathcal{T}(\spl,\bm{n})}\big(\max_{i<j}|x_i-x_j|\big)^a&\big(\min_{i<j}|x_i-x_j|\big)^b\prod_{i<j}|x_i-x_j|^{s_{ij}}\,|dx|\\
&=q^{-(N-1+a+b+\sum_{i<j}s_{ij})(\eta_0-1)}\cdot\frac{M_\spl(q)}{q^{N-1}}\cdot\prod_{\ell=1}^{L(\spl)-1}q^{-(b+E_{\ptn_\ell}(\bm{s}))\eta_\ell}~.
\end{align*}
Note that this quantity is entire in each of the variables $a,b$, and $s_{ij}$, and all mixed partial derivatives in those variables commute with each other and the integral sign.
\end{corollary}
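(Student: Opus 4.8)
The plan is to combine the two preceding lemmas directly. First I would dispose of the degenerate case: if $M_{\spl,q}=0$ then $\mathcal{T}(\spl,\bm{n})=\varnothing$ by \Cref{pairset_measure}, so the integral is $0$, and the claimed formula also vanishes since it carries the factor $M_{\spl,q}/q^{N-1}=0$. So I may assume $M_{\spl,q}>0$, in which case $\mathcal{T}(\spl,\bm{n})$ is a nonempty open and compact (hence finite-measure) subset of $R^N$.

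Next, since \Cref{pairset_value} shows that the function $x\mapsto\big(\max_{i<j}|x_i-x_j|\big)^a\big(\min_{i<j}|x_i-x_j|\big)^b\prod_{i<j}|x_i-x_j|^{s_{ij}}$ is \emph{constant} on $\mathcal{T}(\spl,\bm{n})$, the integral is simply that constant value times $\mu^N(\mathcal{T}(\spl,\bm{n}))$, and both factors are given explicitly by \Cref{pairset_value,pairset_measure}. To carry out the multiplication I would use $\rank(\ptn_0)=N-\#\overline{\ptn}=N-1$ to peel off the $\ell=0$ factor $q^{-(N-1)n_0}$ of the measure, write $n_0=(n_0-1)+1$ so that $q^{-(N-1)n_0}=q^{-(N-1)(n_0-1)}q^{-(N-1)}$, and observe that for each $1\leq\ell\leq L(\spl)-1$ the factor $q^{\rank(\ptn_\ell)n_\ell}$ coming from the integrand cancels against the factor $q^{-\rank(\ptn_\ell)n_\ell}$ coming from the measure. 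Collecting the surviving exponents and the remaining constant $M_{\spl,q}q^{-(N-1)}$ leaves exactly $q^{-(N-1+a+b+\sum_{i<j}s_{ij})(n_0-1)}\cdot\frac{M_{\spl,q}}{q^{N-1}}\cdot\prod_{\ell=1}^{L(\spl)-1}q^{-(b+E_{\spl,\ell}(\bm{s}))n_\ell}$, which is the asserted formula. (When $L(\spl)=1$ the product is empty, hence $1$, by the conventions fixed before \Cref{splfstat}, and the formula is still correct.)

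Finally, for the analytic claims: the right-hand side is a finite product of functions of the form $q^{-cz}$, where $c$ is an integer and $z$ is one of the affine-linear combinations $N-1+a+b+\sum_{i<j}s_{ij}$ or $b+E_{\spl,\ell}(\bm{s})$ of the parameters $a$, $b$, $(s_{ij})$, times the fixed constant $M_{\spl,q}q^{-(N-1)}$. Each such factor is entire in all of these parameters, so the product is entire and all of its mixed partial derivatives in $a$, $b$, and the $s_{ij}$ exist and commute. Since, by the computation above, the integral over $\mathcal{T}(\spl,\bm{n})$ is literally this entire function multiplied by the fixed finite number $\mu^N(\mathcal{T}(\spl,\bm{n}))$, every such partial derivative trivially commutes with the integral sign as well. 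There is essentially no obstacle here beyond careful bookkeeping of the exponents of $q$; the substantive work was already done in \Cref{pairset_measure,pairset_value}.
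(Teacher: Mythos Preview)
Your proposal is correct and follows exactly the approach the paper indicates: the corollary is stated there as the immediate product of the constant value from \Cref{pairset_value} with the measure from \Cref{pairset_measure}, and you have carried out the exponent bookkeeping (splitting off the $\ell=0$ factor via $\rank(\ptn_0)=N-1$ and $n_0=(n_0-1)+1$, then cancelling the $\rank(\ptn_\ell)n_\ell$ terms for $\ell\geq 1$) precisely as needed. One small wording quibble: in your final paragraph, ``this entire function'' already incorporates the measure factor, so it is not further multiplied by $\mu^N(\mathcal{T}(\spl,\bm{n}))$; rather, the integrand itself is the entire function of the parameters, and integrating it over a fixed finite-measure set is just scaling by that measure---but your intended argument is clear and correct.
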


\begin{remark}\label{level_reason}
Note that \Cref{pairset_integral} actually generalizes \Cref{pairset_measure}, as it can be recovered by setting $s_{ij}=a=b=0$ in integral formula above. Moreover, the exponential factors in the formula are completely determined by the level pair $(\spl,\bm{n})$, which encodes the common features of the tree diagrams for $x\in\mathcal{T}(\spl,\bm{n})$ (recall \Cref{fig_2}). In particular, we may regard $\ptn_0=\{[N]\}$ and $\eta_0$ as ``root data" that determine the factor $$q^{-(a+b+E_{\ptn_0}(\bm{s}))(\eta_0-1)}=q^{-(N-1+a+b+\sum_{i<j}s_{ij})(\eta_0-1)}~,$$ and note that
\begin{equation}\label{root_criterion}
|q^{-(N-1+a+b+\sum_{i<j}s_{ij})}|_{\C}<1\qquad\iff\qquad\bm{s}\in\mathcal{RP}_N(a,b)~.
\end{equation}
This is precisely the reason we named $\mathcal{RP}_N(a,b)$  the ``root polytope". Similarly, for each $\ell\in\{1,2,\dots,L(\spl)-1\}$, recall that $\ptn_\ell$ describes how the $N$ paths representing $(x_1,x_2,\dots,x_N)=x\in\mathcal{T}(\spl,\bm{n})$ branch in a particular level in the tree diagram, and that $\eta_\ell$ measures the vertical distance between the tree diagram levels corresponding to $\ptn_\ell$ and $\ptn_{\ell+1}$. Thus we regard $\ptn_\ell$ and $\eta_\ell$ as the $\ell$th ``level data", which determine the exponential factor $q^{-(b+E_{\ptn_\ell}(\bm{s}))\eta_\ell}$. Accordingly, we named $\mathcal{LP}_\spl(b)$ the ``level polytope" in \Cref{branchlevel} because
\begin{equation}\label{level_criterion}
|q^{-(b+E_{\ptn_\ell}(\bm{s}))}|_{\C}<1\quad\text{for all}\quad\ell\in\{1,2,\dots,L(\spl)-1\}\qquad\iff\qquad\bm{s}\in\mathcal{LP}_\spl(b)~.
\end{equation}
In the following proposition, we will finally see how the exponential factors corresponding to the root and how level polytopes combine to form the root and level functions. It should be regarded as the main result of \Cref{3_3}.
\end{remark}

\begin{proposition}\label{spl_integral}
Suppose $a,b\in\C$ and define $R_\spl^N:=\bigsqcup_{\bm{n}\in\N^{L(\spl)}}\mathcal{T}(\spl,\bm{n})$ for each $\spl\in\mathcal{S}_N$. If $M_\spl(q)>0$, then the integral $$\int_{R_\spl^N}\big(\max_{i<j}|x_i-x_j|\big)^a\big(\min_{i<j}|x_i-x_j|\big)^b\prod_{i<j}|x_i-x_j|^{s_{ij}}\,|dx|$$ converges absolutely if and only if $\bm{s}\in\mathcal{RP}_N(a,b)\cap\mathcal{LP}_\spl(b)$, and for such $\bm{s}$ it converges to $$\frac{1}{1-q^{-(N-1+a+b+\sum_{i<j}s_{ij})}}\cdot J_\spl(b,\bm{s})~.$$ Otherwise $M_\spl(q)=0$, in which case $R_\spl^N=\varnothing$ and the integral is simply zero.
\end{proposition}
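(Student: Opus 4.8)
The plan is to reduce the whole statement to elementary geometric series by exploiting the partition $R_\spl^N=\bigsqcup_{\bm{n}\in\N^{L(\spl)}}\mathcal{T}(\spl,\bm{n})$ from \eqref{decomp} together with \Cref{pairset_measure}, \Cref{pairset_value}, and \Cref{pairset_integral}. First I would dispose of the degenerate case: if $M_{\spl,q}=0$ then \Cref{pairset_measure} gives $\mathcal{T}(\spl,\bm{n})=\varnothing$ for every $\bm{n}$, so $R_\spl^N=\varnothing$ and the integral is zero. For the rest of the argument assume $M_{\spl,q}>0$, so that each $\mathcal{T}(\spl,\bm{n})$ is a nonempty open and compact set of finite measure and the union above is a genuine countable partition of $R_\spl^N$.

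Next I would handle the convergence criterion. Since $\max_{i<j}|x_i-x_j|$ and $\min_{i<j}|x_i-x_j|$ take positive real values on $R_\spl^N$, the modulus of the integrand equals $q$ raised to the real parts of the exponents appearing in \Cref{pairset_value}, and is therefore still constant on each $\mathcal{T}(\spl,\bm{n})$. Multiplying this constant by $\mu^N(\mathcal{T}(\spl,\bm{n}))$ from \Cref{pairset_measure} and applying Tonelli's theorem to the resulting nonnegative integrand, the integral of the modulus over $R_\spl^N$ equals
\[
\frac{M_{\spl,q}}{q^{N-1}}\sum_{\bm{n}\in\N^{L(\spl)}}q^{-\re\left(N-1+a+b+\sum_{i<j}s_{ij}\right)(n_0-1)}\prod_{\ell=1}^{L(\spl)-1}q^{-\re\left(b+E_{\spl,\ell}(\bm{s})\right)n_\ell}~.
\]
Since each summand factors as a product of terms depending on a single coordinate $n_\ell\geq 1$ (the $n_\ell$ being the consecutive gaps $m_{\ell+1}-m_\ell$, so $\N$ denotes the positive integers here), this sum over $\N^{L(\spl)}$ splits as a product of one-dimensional geometric series. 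The $n_0$-series converges if and only if $\re(N-1+a+b+\sum_{i<j}s_{ij})>0$, i.e.\ $\bm{s}\in\mathcal{RP}_N(a,b)$, and for each $\ell\in\{1,\dots,L(\spl)-1\}$ the $n_\ell$-series converges if and only if $\re(b+E_{\spl,\ell}(\bm{s}))>0$; the conjunction of the latter conditions is exactly the statement $\bm{s}\in\mathcal{LP}_\spl(b)$ by \Cref{splfstat}. Hence the integral converges absolutely precisely when $\bm{s}\in\mathcal{RP}_N(a,b)\cap\mathcal{LP}_\spl(b)$.

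Finally, for $\bm{s}$ in this region I would compute the value of the integral. Absolute integrability justifies writing the integral over $R_\spl^N$ as the unconditionally convergent sum over $\bm{n}$ of the values supplied by \Cref{pairset_integral}, and the same factorization into geometric series applies, now without real parts: the $n_0$-sum is $\sum_{k\geq 0}\left(q^{-(N-1+a+b+\sum_{i<j}s_{ij})}\right)^{k}=\frac{1}{1-q^{-(N-1+a+b+\sum_{i<j}s_{ij})}}$, while for $\ell\geq 1$ the $n_\ell$-sum is $\sum_{n\geq 1}\left(q^{-(b+E_{\spl,\ell}(\bm{s}))}\right)^{n}=\frac{1}{q^{b+E_{\spl,\ell}(\bm{s})}-1}$. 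Multiplying these together with the prefactor $\frac{M_{\spl,q}}{q^{N-1}}$ and recognizing $\frac{M_{\spl,q}}{q^{N-1}}\prod_{\ell=1}^{L(\spl)-1}\frac{1}{q^{b+E_{\spl,\ell}(\bm{s})}-1}=J_{\spl,q}(b,\bm{s})$ (with the empty product when $L(\spl)=1$, matching the convention $\mathcal{LP}_\spl(b)=\C^{\binom{N}{2}}$) gives the claimed value $\frac{1}{1-q^{-(N-1+a+b+\sum_{i<j}s_{ij})}}\cdot J_{\spl,q}(b,\bm{s})$. The only steps requiring care are the interchange of the integral with the infinite sum and the splitting of the $\N^{L(\spl)}$-sum into a product of one-dimensional sums; the first is legitimate by countable additivity of the integral over a disjoint union once absolute integrability is known, and both are licensed at the level of moduli by Tonelli's theorem, which is what underlies the convergence dichotomy above.
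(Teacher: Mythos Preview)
Your proposal is correct and follows essentially the same route as the paper: dispose of the $M_{\spl,q}=0$ case via \Cref{pairset_measure}, then use Tonelli/Fubini on the countable partition to reduce both the absolute-convergence criterion and the evaluation to a product of one-dimensional geometric series coming from \Cref{pairset_integral}. The only cosmetic difference is that the paper phrases the interchange for the value computation via the Dominated Convergence Theorem rather than countable additivity, but the underlying argument is the same.
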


\begin{proof}
The $M_\spl(q)=0$ case is immediate from \Cref{pairset_measure}, so suppose $M_\spl(q)>0$ and $\bm{s}\in\C^{\binom{N}{2}}$. Then \Cref{pairset_integral} and Fubini's Theorem for sums of nonnegative terms imply
\begin{align*}
\int_{R_\spl^N}&\left|\big(\max_{i<j}|x_i-x_j|\big)^a\big(\min_{i<j}|x_i-x_j|\big)^b\prod_{i<j}|x_i-x_j|^{s_{ij}}\right|_{\C}|dx|\\
&=\sum_{\bm{n}\in\N^{L(\spl)}}\int_{\mathcal{T}(\spl,\bm{n})}\big(\max_{i<j}|x_i-x_j|\big)^{\re(a)}\big(\min_{i<j}|x_i-x_j|\big)^{\re(b)}\prod_{i<j}|x_i-x_j|^{\re(s_{ij})}\,|dx|\\
&=\sum_{\bm{n}\in\N^{L(\spl)}}q^{-\re(N-1+a+b+\sum_{i<j}s_{ij})(\eta_0-1)}\cdot\frac{M_\spl(q)}{q^{N-1}}\prod_{\ell=1}^{L(\spl)-1}q^{-\re(b+E_{\ptn_\ell}(\bm{s}))\eta_\ell}\\
&=\sum_{\eta_0=1}^\infty|q^{-(N-1+a+b+\sum_{i<j}s_{ij})}|_{\C}^{(\eta_0-1)}\cdot\frac{M_\spl(q)}{q^{N-1}}\cdot\prod_{\ell=1}^{L(\spl)-1}\sum_{\eta_\ell=1}^\infty|q^{-(b+E_{\ptn_\ell}(\bm{s}))}|_{\C}^{\eta_\ell}~.
\end{align*}
Therefore the integral on the first line converges if and only if all of the geometric series in the product on the last line converge. But this is the case if and only if $\bm{s}\in\mathcal{RP}_N(a,b)\cap\mathcal{LP}_\spl(b)$ by \eqref{root_criterion} and \eqref{level_criterion}, so we have established the first claim. Moreover, if $\bm{s}\in\mathcal{RP}_N(a,b)\cap\mathcal{LP}_\spl(b)$ then the function $$x\mapsto\bm{1}_{R_\spl^N}(x)\left|\big(\max_{i<j}|x_i-x_j|\big)^a\big(\min_{i<j}|x_i-x_j|\big)^b\prod_{i<j}|x_i-x_j|^{s_{ij}}\right|_{\C}$$ is in $L^1(K^N,\mu^N)$ and dominates every partial sum of the function $$x\mapsto\sum_{\bm{n}\in\N^{L(\spl)}}\bm{1}_{\mathcal{T}(\spl,\bm{n})}(x)\big(\max_{i<j}|x_i-x_j|\big)^a\big(\min_{i<j}|x_i-x_j|\big)^b\prod_{i<j}|x_i-x_j|^{s_{ij}}~,$$ so the Dominated Convergence Theorem, \Cref{pairset_integral}, and Fubini's Theorem for absolutely convergent sums together imply 
\begin{align*}
\int_{R_\spl^N}&\big(\max_{i<j}|x_i-x_j|\big)^a\big(\min_{i<j}|x_i-x_j|\big)^b\prod_{i<j}|x_i-x_j|^{s_{ij}}|dx|\\
&=\sum_{\bm{n}\in\N^{L(\spl)}}\int_{\mathcal{T}(\spl,\bm{n})}\big(\max_{i<j}|x_i-x_j|\big)^a\big(\min_{i<j}|x_i-x_j|\big)^b\prod_{i<j}|x_i-x_j|^{s_{ij}}\,|dx|\\
&=\sum_{\bm{n}\in\N^{L(\spl)}}q^{-(N-1+a+b+\sum_{i<j}s_{ij})(\eta_0-1)}\cdot\frac{M_\spl(q)}{q^{N-1}}\cdot\prod_{\ell=1}^{L(\spl)-1}q^{-(b+E_{\ptn_\ell}(\bm{s}))\eta_\ell}\\
&=\sum_{\eta_0=1}^\infty q^{-(N-1+a+b+\sum_{i<j}s_{ij})(\eta_0-1)}\cdot\frac{M_\spl(q)}{q^{N-1}}\cdot\prod_{\ell=1}^{L(\spl)-1}\sum_{\eta_\ell=1}^\infty q^{-(b+E_{\ptn_\ell}(\bm{s}))\eta_\ell}\\
&=\frac{1}{1-q^{-(N-1+a+b+\sum_{i<j}s_{ij})}}\cdot J_{\spl,q}(b,\bm{s})~.
\end{align*}
\end{proof}

\Cref{spl_integral} is the key ingredient in the following proposition, which is the foundation of parts (a) and (b) of \Cref{main}.

\begin{proposition}\label{pair_conclusion}
Suppose the residue field of $K$ has cardinality $q$ and suppose $a,b\in\C$. Then the integral $$\int_{R^N}\big(\max_{i<j}|x_i-x_j|\big)^a\big(\min_{i<j}|x_i-x_j|\big)^b\prod_{i<j}|x_i-x_j|^{s_{ij}}\,|dx|$$ converges absolutely if and only if $\bm{s}$ belongs to $\Omega_N(a,b)$, and for such $\bm{s}$ it converges to $$\frac{1}{1-q^{-(N-1+a+b+\sum_{i<j}s_{ij})}}\cdot\sum_{\spl\in\mathcal{S}_N}J_{\spl,q}(b,\bm{s})~.$$
\end{proposition}

\begin{proof}
First, note that the decomposition in \eqref{decomp} can be rewritten as
\begin{equation}\label{newdecomp}
R^N=V_0\sqcup\bigsqcup_{\spl\in\mathcal{S}_N}R_\spl^N~,
\end{equation}
and that for each integer $m\geq 1$ we have $$V_0=\bigcup_{1\leq i<j\leq N}\{x\in R^N:x_i=x_j\}\subset\bigcup_{1\leq i<j\leq N}\bigsqcup_{\substack{y\in R^N\!\!/\pi^m\!R^N\\y_i=y_j}}(y+\pi^mR^N)~.$$ For each pair $\{i,j\}$ satisfying $1\leq i<j\leq N$, we have $\#\{y\in R^N/\pi^mR^N:y_i=y_j\}=q^{(N-1)m}$ and $\mu^N(y+\pi^mR^N)=q^{-Nm}$ by the basic properties of $K$ laid out in \Cref{3_1} and the beginning of \Cref{3_2}. Thus $V_0$ is contained in a union of $\binom{N}{2}$ sets of $\mu^N$-measure $q^{(N-1)m}\cdot q^{-Nm}=q^{-m}$, and since $m\geq 1$ can be arbitrarily large, it follows that $\mu^N(V_0)=0$. This fact and \eqref{newdecomp} together imply
\begin{align*}
\int_{R^N}\big(\max_{i<j}|x_i-x_j|\big)^a&\big(\min_{i<j}|x_i-x_j|\big)^b\prod_{i<j}|x_i-x_j|^{s_{ij}}\,|dx|=\\
&=\sum_{\spl\in\mathcal{S}_N}\int_{R_\spl^N}\big(\max_{i<j}|x_i-x_j|\big)^a\big(\min_{i<j}|x_i-x_j|\big)^b\prod_{i<j}|x_i-x_j|^{s_{ij}}\,|dx|~.
\end{align*}
According to \Cref{spl_integral}, the integral over $R_\spl^N$ converges absolutely if and only if $M_\spl(q)=0$ (in which case $R_\spl^N=\varnothing$) or $\bm{s}\in\mathcal{RP}_N(a,b)\cap\mathcal{LP}_\spl(b)$. Therefore the integral over $R^N$ converges if and only if $\bm{s}$ is in the polytope $$\mathcal{RP}_N(a,b)\cap\bigcap_{\substack{\spl\in\mathcal{S}_N\\M_\spl(q)>0}}\mathcal{LP}_\spl(b)~.$$ Recalling the definition of $\Omega_N(a,b)$ in part (a) of \Cref{main}, it remains to show that the condition ``$M_\spl(q)>0$" in the intersection above is extraneous. If $N=2$, the only splitting chain in $\mathcal{S}_2$ is $\spl=([N],~\{1\}\{2\})$, which has $M_\spl(q)=q-1>0$ because $q\geq 2$. Thus if $N=2$ the condition ``$M_\spl(q)>0$" is automatic and the proof is complete. Now suppose $N>2$. It suffices to show that
\begin{equation}\label{crux}
\bigcap_{\substack{\spl\in\mathcal{S}_N\\M_\spl(q)>0}}\mathcal{LP}_\spl(b)\subset\bigcap_{\spl\in\mathcal{S}_N}\mathcal{LP}_\spl(b)
\end{equation}
because the reverse containment is obvious. To this end, let $\ptn^\circ$ be an arbitrary partition of $[N]$ other than $\overline{\ptn}=\{[N]\}$ or $\underline{\ptn}=\{\{1\},\{2\},\dots,\{N\}\}$. We will construct a splitting chain $\spl^\circ\in\mathcal{S}_N$ that has $\ptn^\circ$ as a level, satisfies $M_{\spl^\circ}(q)>0$ for any $q\geq 2$, and has length $L(\spl^\circ)\geq 2$ as follows. Put $k=\#\ptn^\circ-1$ and define $\ptn_k:=\ptn^\circ$. Then $k\geq 1$ and we may write $\ptn_k=\{\lambda_1,\lambda_2,\dots,\lambda_{k+1}\}$ where $\#\lambda_1\geq 2$ and $\#\lambda_1\geq\#\lambda_2\geq\dots\geq\#\lambda_{k+1}$. Now for each $\ell\in\{0,1,2,\dots,k-1\}$, define $$\ptn_\ell:=\{\lambda_1,\lambda_2,\dots,\lambda_\ell,(\lambda_{\ell+1}\cup\lambda_{\ell+2}\cup\dots\cup\lambda_{k+1})\}$$ and note that $\overline{\ptn}=\ptn_0>\ptn_1>\dots>\ptn_k$ where each refinement is given by splitting a single part into two parts. For $\ell\geq k+1$, recursively define $\ptn_\ell$ to be any refinement of $\ptn_{\ell-1}$ such that each non-singleton part $\lambda\in\ptn_{\ell-1}$ splits into $\lambda'=\lambda\setminus\{i\}\in\ptn_\ell$ and $\{i\}\in\ptn_\ell$ for some $i\in\lambda$. The largest part $\lambda_1\in\ptn_k$ will fully refine into singletons after $\#\lambda_1-1$ steps in the recursion, by which time all other parts will have also refined into singletons. Therefore the recursion must stop at $\ell=k+\#\lambda_1-1$ with $\ptn_k>\ptn_{k+1}>\dots>\ptn_{k+\#\lambda_1-1}=\underline{\ptn}$, where each refinement is given by refining non-singleton parts into exactly two parts. Thus we have constructed a splitting chain $\spl^\circ=(\ptn_0,\ptn_1,\dots,\ptn_{k+\#\lambda_1-1})\in\mathcal{S}_N$ that has the given partition $\ptn^\circ$ as its $k$th level, has length $$L(\spl^\circ)=k+\#\lambda_1-1\geq k+1=\#\ptn\geq 2~,$$ and has $\deg_{\spl^\circ}(\lambda)=2\leq q$ for all branches $\lambda\in\mathcal{B}(\spl^\circ)$. The last property implies $M_{\spl^\circ}(q)>0$, so we have
\begin{align*}
\bigcap_{\substack{\spl\in\mathcal{S}_N\\M_\spl(q)>0}}\mathcal{LP}_\spl(b)\subset\mathcal{LP}_{\spl^\circ}(b)&=\bigcap_{\ell=1}^{L(\spl^\circ)-1}\left\{\bm{s}\in\C^{\binom{N}{2}}:\re(b+E_{\ptn_\ell}(\bm{s}))>0\right\}\\
&\subset\left\{\bm{s}\in\C^{\binom{N}{2}}:\re(b+E_{\ptn^\circ}(\bm{s}))>0\right\}~.
\end{align*}
This argument works for every partition $\ptn^\circ$ with $\underline{\ptn}<\ptn^\circ<\overline{\ptn}$, so it follows that $$\bigcap_{\substack{\spl\in\mathcal{S}_N\\M_\spl(q)>0}}\mathcal{LP}_\spl(b)\subset\bigcap_{\substack{\text{partitions }\ptn\\\underline{\ptn}<\ptn<\overline{\ptn}}}\left\{\bm{s}\in\C^{\binom{N}{2}}:\re(b+E_\ptn(\bm{s}))>0\right\}~.$$ On the other hand, for every splitting chain $\spl\in\mathcal{S}_N$, each level $\ptn_\ell$ with $1\leq\ell\leq L(\spl)-1$ is a partition of $[N]$ satisfying $\underline{\ptn}<\ptn_\ell<\overline{\ptn}$, so $$\bigcup_{\spl\in\mathcal{S}_N}\{\ptn_1,\ptn_2,\dots,\ptn_{L(\spl)-1}\}\subset\{\text{partitions }\ptn:\underline{\ptn}<\ptn<\overline{\ptn}\}~.$$ This implies
\begin{align*}
\bigcap_{\substack{\spl\in\mathcal{S}_N\\M_\spl(q)>0}}\mathcal{LP}_\spl(b)&\subset\bigcap_{\substack{\text{partitions }\ptn\\\underline{\ptn}<\ptn<\overline{\ptn}}}\left\{\bm{s}\in\C^{\binom{N}{2}}:\re(b+E_\ptn(\bm{s}))>0\right\}\\
&\subset\bigcap_{\spl\in\mathcal{S}_N}\bigcap_{\ell=1}^{L(\spl)-1}\left\{\bm{s}\in\C^{\binom{N}{2}}:\re(b+E_{\ptn_\ell}(\bm{s}))>0\right\}=\bigcap_{\spl\in\mathcal{S}_N}\mathcal{LP}_\spl(b)~,
\end{align*}
so \eqref{crux} holds and the proof is complete.\\
\end{proof}

\subsection{Integration with branch pairs}\label{3_4}

Branch pairs are an analogue of level pairs that relate branch functions to level functions, and this relationship is the key idea behind part (c) of \Cref{main}. Before defining branch pairs, we will restate and prove parts (a) and (b) of \Cref{reduction}.

\begin{lemma}[\Cref{reduction}]
We say that a splitting chain $\spl$ is \emph{reduced} if for each $\lambda\in\mathcal{B}(\spl)$ there is a unique level $\ptn_\ell$ containing $\lambda$ (namely, the level $\ptn_{\ell_\spl(\lambda)}$). We write $\mathcal{R}_N:=\{\spl\in\mathcal{S}_N:\spl\text{ is reduced}\}$ and define an equivalence relation $\simeq$ on $\mathcal{S}_N$ by writing $\spl\simeq\spl'$ if and only if $\mathcal{B}(\spl)=\mathcal{B}(\spl')$.
\begin{enumerate}
\item[(a)] If $\spl\simeq\spl'$, then the branch degrees, part exponents, multiplicity polynomials, and branch polytopes for $\spl$ and $\spl'$ respectively coincide.
\item[(b)] For each $\spl\in\mathcal{S}_N$ there is a unique $\spl^*\in\mathcal{R}_N$ such that $\spl\simeq\spl^*$. We call this $\spl^*$ the \emph{reduction} of $\spl$ and regard $\mathcal{R}_N$ as a complete set of representatives for $\mathcal{S}_N$ modulo $\simeq$.
\item[(c)] For each $\spl^*\in\mathcal{R}_N$ we have $$\bigcap_{\substack{\spl\in\mathcal{S}_N\\\spl\simeq\spl^*}}\mathcal{LP}_\spl(0)=\mathcal{BP}_{\spl^*}~,$$ and therefore $$\quad\bigcap_{\spl\in\mathcal{S}_N}\mathcal{LP}_\spl(0)=\bigcap_{\spl^*\in\mathcal{R}_N}\mathcal{BP}_{\spl^*}~.$$
\end{enumerate}
\end{lemma}

\begin{proof}\
\begin{enumerate}
\item[(a)] Suppose $\spl,\spl'\in\mathcal{S}_N$ and $\spl\simeq\spl'$. Then $\mathcal{B}(\spl)=\mathcal{B}(\spl')$ and our only task is to prove that $\deg_\spl(\lambda)=\deg_{\spl'}(\lambda)$ for all $\lambda\in\mathcal{B}(\spl)$, for then the rest of (a) will follow immediately from part (c) of \Cref{splchdef} part (b) of \Cref{branchlevel}. To this end, suppose $\lambda\in\mathcal{B}(\spl)$ and recall $$\deg_\spl(\lambda)=\#\{\lambda'\in\ptn_{\ell_\spl(\lambda)+1}:\lambda'\subset\lambda\}\qquad\text{and}\qquad\deg_{\spl'}(\lambda)=\#\{\lambda'\in\ptn_{\ell_{\spl'}(\lambda)+1}':\lambda'\subset\lambda\}~.$$ Note that any branch $\lambda'\in\mathcal{B}(\spl)$ contained in both $\ptn_{\ell_\spl(\lambda)+1}$ and $\lambda$ must not appear in any of the levels $\ptn_0,\ptn_1,\dots,\ptn_{\ell_\spl(\lambda)}$ because $\ptn_{\ell_\spl(\lambda)+1}$ properly refines all of them and by definition, $\ell_\spl(\lambda)=\max\{\ell\in\{0,1,\dots,L(\spl)-1\}:\lambda\in\ptn_\ell\}$. Moreover, no branch $\lambda''\subsetneq\lambda'$ can appear in $\ptn_{\ell_\spl(\lambda)+1}$ because $\lambda'\in\ptn_{\ell_\spl(\lambda)+1}$. Therefore $\{\lambda'\in\ptn_{\ell_\spl(\lambda)+1}:\lambda'\subset\lambda\}$ is comprised of precisely the largest branches in $\mathcal{B}(\spl)$ that are properly contained in $\lambda$, along with any remaining singletons $\{i\}\subset\lambda$. Thus $\{\lambda'\in\ptn_{\ell_\spl(\lambda)+1}:\lambda'\subset\lambda\}$ is completely determined by $\mathcal{B}(\spl)$ and $\lambda$. But $\mathcal{B}(\spl)=\mathcal{B}(\spl')$, so $\{\lambda'\in\ptn_{\ell_\spl(\lambda)+1}:\lambda'\subset\lambda\}=\{\lambda'\in\ptn_{\ell_{\spl'}(\lambda)+1}':\lambda'\subset\lambda\}$ and we conclude that $\deg_\spl(\lambda)=\deg_{\spl'}(\lambda)$.

\item[(b)] Suppose $\spl\in\mathcal{S}_N$ and note that $\mathcal{B}(\spl)$ is partially ordered by $\subset$ with unique largest element $[N]$. We will construct $\spl^*\in\mathcal{R}_N$ satisfying $\mathcal{B}(\spl^*)=\mathcal{B}(\spl)$. Begin by letting $\ptn_0^*:=\{[N]\}$, and continue recursively for $\ell\geq 0$ as follows: Define a partition $\ptn_{\ell+1}^*$ of $[N]$ by taking the largest branches remaining in $\mathcal{B}(\spl^*)\setminus(\ptn_0^*\cup\ptn_1^*\cup\dots\cup\ptn_\ell^*)$ and any leftover singletons in $[N]$. At the first $\ell\geq 0$ for which $\mathcal{B}(\spl)\setminus(\ptn_0^*\cup\ptn_1^*\cup\dots\cup\ptn_\ell^*)=\varnothing$, end the recursion, let $L^*:=\ell+1$, and finally let $\ptn_{L^*}^*:=\underline{\ptn}$. Then by construction, we will have $\ptn_{\ell+1}^*<\ptn_\ell^*$ because each part of $\ptn_{\ell+1}^*$ is contained in a part of $\ptn_\ell^*$ and at least one part of $\ptn_{\ell+1}^*$ will be properly contained in one of those in $\ptn_\ell^*$. Thus $\spl^*=(\ptn_0^*,\ptn_1^*,\dots,\ptn_{L^*}^*)$ is a splitting chain of order $N$ and length $L^*\leq L(\spl)$ with $\mathcal{B}(\spl^*)=\left(\bigcup_{\ell=0}^{L^*-1}\ptn_\ell^*\right)\setminus\underline{\ptn}=\mathcal{B}(\spl)$. Moreover, $\spl^*$ is reduced because each $\lambda\in\mathcal{B}(\spl^*)$ is contained in exactly one $\ptn_\ell^*$, and $\spl^*$ is unique because it has been completely determined by $\mathcal{B}(\spl)$.

\item[(c)] Suppose $\spl^*\in\mathcal{R}_N$. The first claim is obvious from \Cref{branchlevel} if $\mathcal{B}(\spl^*)\setminus\overline{\ptn}=\varnothing$, so suppose otherwise and choose an arbitrary branch $\lambda^\circ\in\mathcal{B}(\spl^*)\setminus\overline{\ptn}$. We will construct a splitting chain $\spl^\circ\in\mathcal{S}_N$ such that $\spl^\circ\simeq\spl^*$ and such that $\spl^\circ$ has a level containing $\lambda^\circ$ and no other branches. The set $\mathcal{B}':=\{\lambda\in\mathcal{B}(\spl^*):\lambda\not\subset\lambda^\circ\}$ is partially ordered by $\subset$ with unique largest element $[N]$, so we may apply the same algorithm in the proof of part (b) to obtain the unique reduced splitting chain $\spl'=(\ptn_0',\ptn_1',\dots,\ptn_L')$ satisfying $\mathcal{B}(\spl')=\mathcal{B}'$. There is a smallest branch in $\mathcal{B}(\spl')$ that contains $\lambda^\circ$, say $\lambda'$, and there are no subsets of $\lambda^\circ$ in $\mathcal{B}(\spl')$. Thus if $\lambda^\circ=\{i_1,i_2,\dots,i_n\}$, the singletons $\{i_1\},\{i_2\},\dots,\{i_n\}$ must appear in $\ptn_\ell'$ for all $\ell>\ell_{\spl'}(\lambda')$. Now let $\ptn_0,\ptn_1,\dots,\ptn_{L'}$ be the partitions satisfying $\ptn_\ell=\ptn_\ell'$ for $0\leq\ell\leq\ell_{\spl'}(\lambda')$, and for $\ell>\ell_{\spl'}(\lambda')$ take $\ptn_\ell$ to be equal to $\ptn_\ell'$ but with $\{i_1\}\{i_2\}\dots\{i_n\}$ replaced by $\lambda^\circ=\{i_1,i_2,\dots,i_n\}$. This yields partitions $$\overline{\ptn}=\ptn_0>\ptn_1>\dots>\ptn_{L'}$$ with $\mathcal{B}(\spl^*)\setminus(\ptn_0\cup\ptn_1\cup\dots\cup\ptn_{L'})=\{\lambda\in\mathcal{B}(\spl^*):\lambda\subsetneq\lambda^\circ\}$ where $\lambda^\circ$ is the only non-singleton part in $\ptn_{L'}$. We continue recursively for $\ell\geq L'$, defining $\ptn_{\ell+1}$ to be the partition comprised of the largest branches remaining in $\mathcal{B}(\spl^*)\setminus(\ptn_0\cup\ptn_1\cup\dots\cup\ptn_\ell)$ and any leftover singletons in $[N]$. We end the recursion at the first $\ell\geq L'$ such that $\mathcal{B}(\spl^*)\setminus(\ptn_0\cup\ptn_1\cup\dots\cup\ptn_\ell)=\varnothing$ and set $L:=\ell+1$ and $\ptn_L:=\underline{\ptn}$. The result is a splitting chain $\spl^\circ=(\ptn_0,\ptn_1,\dots,\ptn_L)$ with $\mathcal{B}(\spl^\circ)=\mathcal{B}(\spl^*)$ (i.e., $\spl^\circ\simeq\spl^*$) and a level $\ptn_{L'}$ whose only non-singleton part is $\lambda^\circ$, and hence $E_{\ptn_{L'}}(\bm{s})=e_{\lambda^\circ}(\bm{s})$. Thus for $\lambda^\circ\in\mathcal{R}_N$ we have a splitting chain $\spl^\circ\simeq\spl^*$ satisfying
\begin{align*}
\mathcal{LP}_{\spl^\circ}(0)&=\bigcap_{\ell=1}^{L-1}\left\{\bm{s}\in\C^{\binom{N}{2}}:\re(E_{\ptn_\ell}(\bm{s}))>0\right\}\\
&\subset\left\{\bm{s}\in\C^{\binom{N}{2}}:\re(E_{\ptn_{L'}}(\bm{s}))>0\right\}=\left\{\bm{s}\in\C^{\binom{N}{2}}:\re(e_{\lambda^\circ}(\bm{s}))>0\right\}~,
\end{align*}
and hence $$\bigcap_{\substack{\spl\in\mathcal{S}_N\\\spl\simeq\spl^*}}\mathcal{LP}_\spl(0)\subset\left\{\bm{s}\in\C^{\binom{N}{2}}:\re(e_{\lambda^\circ}(\bm{s}))>0\right\}~.$$ Since this argument works for any $\lambda^\circ\in\mathcal{B}(\spl^*)\setminus\overline{\ptn}$, it follows that $$\bigcap_{\substack{\spl\in\mathcal{S}_N\\\spl\simeq\spl^*}}\mathcal{LP}_\spl(0)\subset\bigcap_{\lambda\in\mathcal{B}(\spl)\setminus\overline{\ptn}}\left\{\bm{s}\in\C^{\binom{N}{2}}:\re(e_\lambda(\bm{s}))>0\right\}=\mathcal{BP}_{\spl^*}~.$$ To show the reverse containment, suppose $\bm{s}\in\mathcal{BP}_{\spl^*}$, so that $\re(e_\lambda(\bm{s}))>0$ for all $\lambda\in\mathcal{B}(\spl^*)\setminus\overline{\ptn}$. For any splitting chain $\spl\simeq\spl^*$ and any level $\ptn_\ell$ with $1\leq\ell\leq L(\spl)-1$, the level exponent $E_{\ptn_\ell}(\bm{s})=\sum_{\lambda\in\mathcal{\ptn_\ell}}e_\lambda(\bm{s})$ is a sum over at least one $\lambda\in\mathcal{B}(\spl)\cap\ptn_\ell\subset\mathcal{B}(\spl^*)\setminus\overline{\ptn}$ and hence $\re(E_{\ptn_\ell}(\bm{s}))>0$. It follows that $\bm{s}\in\mathcal{LP}_\spl(0)$ for all $\spl\simeq\spl^*$, and we conclude that $$\bigcap_{\substack{\spl\in\mathcal{S}_N\\\spl\simeq\spl^*}}\mathcal{LP}_\spl(0)=\mathcal{BP}_{\spl^*}~.$$ Finally, since this holds for all $\spl^*\in\mathcal{R}_N$, part (b) implies $\bigcap_{\spl\in\mathcal{S}_N}\mathcal{LP}_\spl(0)=\bigcap_{\spl^*\in\mathcal{R}_N}\mathcal{BP}_\spl$.
\end{enumerate}
\end{proof}

It is worth noting here that the recursive algorithm in the proof of part (b) of \Cref{reduction} can be used to find the reduction of any splitting chain. We now apply this algorithm to the splitting chain $\spl\in\mathcal{S}_9$ from \Cref{fig_2}.

\begin{example}\label{reduction_alg}
Recall $\spl=(\ptn_0,\ptn_1,\ptn_2,\ptn_3,\ptn_4)\in\mathcal{S}_9$ from \Cref{fig_2}, where
\begin{align*}
\ptn_0&=\{1,2,3,4,5,6,7,8,9\}~,\\
\ptn_1&=\{1,2,3,4,5\}\{6,7,8,9\}~,\\
\ptn_2&=\{1,2,3\}\{4,5\}\{6,7,8,9\}~,\\
\ptn_3&=\{1,2,3\}\{4\}\{5\}\{6\}\{7\}\{8\}\{9\}~,\\
\ptn_4&=\{1\}\{2\}\{3\}\{4\}\{5\}\{6\}\{7\}\{8\}\{9\}~.
\end{align*}
Before starting the algorithm, note that its branch set is $$\mathcal{B}(\spl)=\big\{\{1,2,3,4,5,6,7,8,9\},~\{1,2,3,4,5\},~\{6,7,8,9\},~\{1,2,3\},~\{4,5\}\big\}~.$$ We initialize the algorithm by letting $\ptn_0^*:=\{1,2,3,4,5,6,7,8,9\}$, and the recursive part runs as follows:
\begin{itemize}
\item $\ell=0$ : The maximal branches remaining in $\mathcal{B}(\spl)\setminus\ptn_0^*=\big\{\{1,2,3,4,5\},~\{6,7,8,9\}, \{1,2,3\}, \{4,5\}\big\}$ (partially ordered via $\subset$) are the incomparable sets $\{1,2,3,4,5\}$ and $\{6,7,8,9\}$, so we define the partition $$\ptn_1^*:=\{1,2,3,4,5\}\{6,7,8,9\}~.$$
\item $\ell=1$ : The maximal branches remaining in $\mathcal{B}(\spl)\setminus(\ptn_0^*\cup\ptn_1^*)=\{\{1,2,3\},~\{4,5\}\}$ are the incomparable sets $\{1,2,3\}$ and $\{4,5\}$, so by including leftover singletons $\{i\}\subset[9]$ we define the partition $$\ptn_2^*:=\{1,2,3\}\{4,5\}\{6\}\{7\}\{8\}\{9\}~.$$
\item $\ell=2$ : We now have $\mathcal{B}(\spl)\setminus(\ptn_0^*\cup\ptn_1^*\cup\ptn_2^*)=\varnothing$, so let $L^*:=\ell+1=3$ and end the recursion.
\end{itemize}
Finally, let $$\ptn_3^*=\ptn_{L^*}^*:=\underline{\ptn}=\{1\}\{2\}\{3\}\{4\}\{5\}\{6\}\{7\}\{8\}\{9\}~,$$ and note that the algorithm is done. It is straightforward to verify that the resulting tuple $\spl^*:=(\ptn_0^*,\ptn_1^*,\ptn_2^*,\ptn_3^*)$, where
\begin{align*}
\ptn_0^*&=\{1,2,3,4,5,6,7,8,9\}~,\\
\ptn_1^*&=\{1,2,3,4,5\}\{6,7,8,9\}~,\\
\ptn_2^*&=\{1,2,3\}\{4,5\}\{6\}\{7\}\{8\}\{9\}~,\\
\ptn_3^*&=\{1\}\{2\}\{3\}\{4\}\{5\}\{6\}\{7\}\{8\}\{9\}~,
\end{align*}
is a reduced splitting chain of order 9, with $\spl\simeq\spl^*$ and $L(\spl^*)\leq L(\spl)$.\\
\end{example}

We may now introduce branch pairs and establish their relationship with level pairs.

\begin{definition}\label{bpair_def}
If $\spl^*\in\mathcal{R}_N$ and $\bm{k}=(k_\lambda)$ is a tuple of positive integers indexed by $\lambda\in\mathcal{B}(\spl^*)$, we call $[\spl^*,\bm{k}]$ a \emph{branch pair}.
\end{definition}

\begin{theorem}\label{bijection_thm}
Suppose $\spl^*\in\mathcal{R}_N$. There is a bijection $$\left\{[\spl^*,\bm{k}]:\bm{k}=(k_\lambda)\in\N^{\mathcal{B}(\spl^*)}\right\}\longleftrightarrow\bigsqcup_{\substack{\spl\in\mathcal{S}_N\\\spl\simeq\spl^*}}\left\{(\spl,\bm{n}):\bm{n}=(\eta_0,\eta_1,\dots,\eta_{L(\spl)-1})\in\N^{L(\spl)}\right\}$$ such that if $[\spl^*,\bm{k}]$ and $(\spl,\bm{n})$ correspond, we have $k_{[N]}=\eta_0$ and for each $\lambda\in\mathcal{B}(\spl)\setminus\overline{\ptn}$ we have
\begin{equation}\label{n_to_k_eqn}
k_\lambda=\sum_{\ell=\ell_\spl(\lambda^*)+1}^{\ell_\spl(\lambda)}\eta_\ell
\end{equation}
where $\lambda^*\in\mathcal{B}(\spl)$ is the smallest branch properly containing $\lambda$.
\end{theorem}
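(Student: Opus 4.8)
The plan is to prove the bijection by writing down its inverse explicitly and then checking the two round-trips. Let $\Psi$ be the forward map, sending a pair $(\spl,\bm n)$ with $\spl\simeq\spl^*$ to $[\spl^*,\bm k]$, where $\spl^*$ is the reduction of $\spl$ (well defined, and equal to the given $\spl^*$, by \Cref{reduction}) and $\bm k=(k_\lambda)_{\lambda\in\mathcal{B}(\spl^*)}$ is prescribed by $k_{[N]}:=n_0$ and \eqref{n_to_k_eqn}. The first thing to establish is a structural lemma describing how a branch sits inside a filtration in a fixed $\simeq$-class. Using that any two branches appearing in a splitting filtration are nested or disjoint (immediate from the refinement structure, so in particular the branches containing a fixed branch form a chain) and that every non-singleton part is a branch, one shows: for $\spl\simeq\spl^*$ and $\lambda\in\mathcal{B}(\spl)\setminus\overline{\ptn}$, with $\lambda^*\in\mathcal{B}(\spl)$ the smallest branch properly containing $\lambda$, the branch $\lambda$ is born exactly when $\lambda^*$ dies, i.e. $\lambda\in\ptn_{\ell_\spl(\lambda^*)+1}$ (since $\lambda$ must lie in a single part of that level, which by minimality of $\lambda^*$ is $\lambda$ itself), and $\lambda$ occupies the full block of levels $\{\ell_\spl(\lambda^*)+1,\dots,\ell_\spl(\lambda)\}$ (a set that is a part of $\ptn_\ell$ and of $\ptn_{\ell'}$ with $\ell<\ell'$ is a part of every level in between). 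In particular the index range in \eqref{n_to_k_eqn} is nonempty, so each $k_\lambda\ge1$, $\Psi$ is well defined, and $\ell_\spl$ maps $\mathcal{B}(\spl)$ onto $\{0,1,\dots,L(\spl)-1\}$.

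For the inverse $\Phi$, fix a branch pair $[\spl^*,\bm k]$. Since the branches of $\spl^*$ containing a given branch $\lambda$ form a chain, define the \emph{absolute depth} $D_\lambda:=\sum_\mu k_\mu$, summed over branches $\mu\in\mathcal{B}(\spl^*)$ with $\mu\supseteq\lambda$; then $D_{[N]}=k_{[N]}$, $D_\lambda=D_{\lambda^*}+k_\lambda$, and every $D_\lambda\ge1$. Enumerate $\{0\}\cup\{D_\lambda:\lambda\in\mathcal{B}(\spl^*)\}$ as $0=d_0<d_1<\dots<d_L$, set $n_t:=d_{t+1}-d_t>0$ for $0\le t\le L-1$ (so $n_0=d_1=D_{[N]}=k_{[N]}$), and for $0\le t\le L$ let $\ptn_t$ be the partition of $[N]$ whose non-singleton parts are exactly the $\lambda\in\mathcal{B}(\spl^*)$ that are \emph{alive at depth} $d_t$, meaning $D_\lambda>d_t$ and either $\lambda=[N]$ or $D_{\lambda^*}\le d_t$, together with the leftover singletons. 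One checks that two nested alive branches $\lambda\subsetneq\mu$ are impossible (then $D_{\lambda^*}\ge D_\mu>d_t$ contradicts $\lambda$ being alive), so each $\ptn_t$ is a genuine partition; that $\ptn_0=\{[N]\}$ and $\ptn_L=\underline{\ptn}$; and that $\ptn_t>\ptn_{t+1}$ properly, because advancing from $d_t$ to $d_{t+1}$ splits exactly the branches with $D_\lambda=d_{t+1}$ into their maximal sub-branches plus singletons, and at least one such branch is present in $\ptn_t$. Thus $\spl:=(\ptn_0,\dots,\ptn_L)\in\mathcal{S}_N$ has $L(\spl)=L$ and $\mathcal{B}(\spl)=\mathcal{B}(\spl^*)$ (each $\lambda$ is alive at depth $d_s=D_{\lambda^*}$, or at $d_0$ if $\lambda=[N]$), and we set $\Phi([\spl^*,\bm k]):=(\spl,\bm n)$.

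The remaining verifications are bookkeeping around the identity $D_\lambda=\sum_{\ell=0}^{\ell_\spl(\lambda)}n_\ell$. For $\Psi\circ\Phi=\mathrm{id}$: in the filtration produced by $\Phi$, $\lambda$ is alive precisely at the depths $d_t$ with $D_{\lambda^*}\le d_t<D_\lambda$, so $\ell_\spl(\lambda)$ is characterized by $d_{\ell_\spl(\lambda)+1}=D_\lambda$, giving $\sum_{\ell=0}^{\ell_\spl(\lambda)}n_\ell=D_\lambda$; subtracting the same identity for $\lambda^*$ recovers \eqref{n_to_k_eqn}, $n_0=k_{[N]}$ as noted, and the reduction of $\spl$ is $\spl^*$ by \Cref{reduction}. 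For $\Phi\circ\Psi=\mathrm{id}$: starting from $(\spl,\bm n)$, telescoping \eqref{n_to_k_eqn} along the chain from $[N]$ down to $\lambda$ gives $D_\lambda=n_0+\sum_{\ell=1}^{\ell_\spl(\lambda)}n_\ell=\sum_{\ell=0}^{\ell_\spl(\lambda)}n_\ell$; since $\ell_\spl$ is onto $\{0,\dots,L(\spl)-1\}$ and every $n_\ell>0$, the distinct values of the $D_\lambda$ are exactly $\sum_{\ell=0}^{t}n_\ell$ for $0\le t\le L(\spl)-1$, so $\Phi$ returns the original $L$ and $\bm n$, while its ``alive at depth $d_t$'' description of each level coincides with ``$\lambda\in\ptn_t$'' by the structural lemma, so $\Phi$ also returns the original $\spl$. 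I expect the main obstacle to be the structural lemma together with the proof that the partitions $\ptn_t$ built by $\Phi$ form a proper refinement chain with branch set $\mathcal{B}(\spl^*)$; once that ``alive branch'' prescription is shown to be internally consistent, the rest is routine telescoping.
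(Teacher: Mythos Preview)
Your proposal is correct and follows essentially the same approach as the paper: you construct both maps explicitly (your $\Psi$ and $\Phi$ are the paper's $G$ and $F$, with your absolute depths $D_\lambda$ equal to the paper's $m_{\ell+1}+1$) and verify the two round-trips via the telescoping identity $D_\lambda=\sum_{\ell=0}^{\ell_\spl(\lambda)}n_\ell$. The only difference is cosmetic: you isolate the ``structural lemma'' and the ``alive at depth $d_t$'' description more explicitly, whereas the paper folds these into the definition of $\mathcal{B}_\ell(\spl^*)$ and the function $\ell_{[\spl^*,\bm k]}$.
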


\begin{proof}
Fix $\spl^*\in\mathcal{R}_N$ and let $\bm{k}=(k_\lambda)$ be an arbitrary tuple of positive integers indexed by $\lambda\in\mathcal{B}(\spl^*)$. We associate a unique level pair to $[\spl^*,\bm{k}]$ as follows. The set $$\mathcal{M}:=\left\{-1+\sum_{\substack{\lambda'\in\mathcal{B}(\spl^*)\\\lambda'\supset\lambda}}k_{\lambda'}:\lambda\in\mathcal{B}(\spl^*)\right\}$$ is comprised of finitely many, say $L$, nonnegative integers. Put $m_0:=-1$ and let $\{m_1,m_2,\dots,m_L\}$ be the enumeration of $\mathcal{M}$ satisfying $m_0<m_1<m_2<\dots<m_L$. For each $\lambda\in\mathcal{B}(\spl^*)$ define $$\ell_{[\spl^*,\bm{k}]}(\lambda):=\text{the unique $\ell\in\{0,1,\dots,L-1\}$ such that $\sum_{\substack{\lambda'\in\mathcal{B}(\spl^*)\\\lambda'\supset\lambda}}k_{\lambda'}=m_{\ell+1}+1~.$}$$ Then by the definition of $\mathcal{M}=\{m_1,m_2,\dots,m_L\}$, for each $\ell\in\{0,1,\dots,L-1\}$ there is at least one $\lambda\in\mathcal{B}(\spl^*)$ satisfying $\ell_{[\spl^*,\bm{k}]}(\lambda)=\ell$, and $\lambda=[N]$ is the unique branch satisfying $\ell_{[\spl^*,\bm{k}]}(\lambda)=0$. Moreover, we have $\ell_{[\spl^*,\bm{k}]}(\lambda')<\ell_{[\spl^*,\bm{k}]}(\lambda)$ whenever $\lambda,\lambda'\in\mathcal{B}(\spl^*)$ satisfy $\lambda\subsetneq\lambda'$. We now construct $L$ partitions $\ptn_0,\ptn_1,\dots,\ptn_{L-1}$ of $[N]$ as follows. Let $\ptn_0:=\{[N]\}$, and for each $\ell\in\{1,\dots,L-1\}$ let $\mathcal{B}_\ell(\spl^*)$ be the subset of $\mathcal{B}(\spl^*)$ defined by $$\lambda\in\mathcal{B}_\ell(\spl^*)\qquad\iff\qquad\parbox{7.2cm}{$\ell_{[\spl^*,\bm{k}]}(\lambda)\geq\ell$ and $\ell_{[\spl^*,\bm{k}]}(\lambda^*)<\ell$, where $\lambda^*$ is the smallest branch in $\mathcal{B}(\spl^*)$ satisfying $\lambda\subsetneq\lambda^*$,}$$ let $\ptn_\ell$ be the partition of $[N]$ comprised of all $\lambda\in\mathcal{B}_\ell(\spl^*)$ and all $\{i\}\subset[N]\setminus\bigcup_{\lambda\in\mathcal{B}_\ell(\spl^*)}\lambda$, and finally let $\ptn_L:=\underline{\ptn}$. Now if $\ell\in\{1,2,\dots,L\}$ and $\lambda\in\ptn_\ell$, then either $\lambda$ is a singleton or $\lambda\in\mathcal{B}_\ell(\spl^*)$. In the latter case we have $\ell_{[\spl^*,\bm{k}]}(\lambda^*)<\ell\leq\ell_{[\spl^*,\bm{k}]}(\lambda)$ where $\lambda^*$ is the smallest branch in $\mathcal{B}(\spl^*)$ satisfying $\lambda\subsetneq\lambda^*$. If $\ell_{[\spl^*,\bm{k}]}(\lambda^*)=\ell-1$, then $\lambda^*\in\ptn_{\ell-1}$. Otherwise $\ell_{[\spl^*,\bm{k}]}(\lambda^*)<\ell-1$, in which case $\lambda\in\ptn_{\ell-1}$, so in any case each $\lambda\in\ptn_\ell$ is contained in some part of $\ptn_{\ell-1}$ and hence $\ptn_\ell\leq\ptn_{\ell-1}$. Moreover, there is at least one part $\lambda'\in\ptn_{\ell-1}$ with $\ell_{[\spl^*,\bm{k}]}(\lambda')=\ell-1$, so $\lambda'\notin\mathcal{B}_\ell(\spl^*)$ implies $\lambda'\notin\ptn_\ell$ and hence $\ptn_\ell<\ptn_{\ell-1}$. Now $\spl:=(\ptn_0,\ptn_1,\dots,\ptn_L)$ is a tuple of partitions of $[N]$ satisfying $\ptn_0>\ptn_1>\dots>\ptn_L=\underline{\ptn}$, so $\spl$ is a splitting chain of order $N$ and length $L(\spl)=L$. It is clear from the construction of $\spl$ that $\mathcal{B}(\spl)=\bigcup_{\ell=0}^{L-1}\mathcal{B}_\ell(\spl^*)=\mathcal{B}(\spl^*)$, and that each branch $\lambda\in\mathcal{B}(\spl)=\mathcal{B}(\spl^*)$ has depth $\ell_\spl(\lambda)=\ell_{[\spl^*,\bm{k}]}(\lambda)$. Thus if we define $\bm{n}:=(\eta_0,\eta_1,\dots,\eta_{L-1})\in\N^L$ by $\eta_\ell:=m_{\ell+1}-m_\ell$, it follows that $(\spl,\bm{n})$ is a level pair such that $\spl\simeq\spl^*$ and every $\lambda\in\mathcal{B}(\spl)$ satisfies $$\sum_{\substack{\lambda'\in\mathcal{B}(\spl)\\\lambda'\supset\lambda}}k_{\lambda'}=m_{\ell_{[\spl^*,\bm{k}]}(\lambda)+1}+1=\sum_{\ell=0}^{\ell_{[\spl^*,\bm{k}]}(\lambda)}(m_{\ell+1}-m_\ell)=\sum_{\ell=0}^{\ell_\spl(\lambda)}\eta_\ell~.$$ Then $k_{[N]}=\eta_0$, and if $\lambda\in\mathcal{B}(\spl)\setminus\overline{\ptn}$ and $\lambda^*$ is the smallest branch in $\mathcal{B}(\spl)$ properly containing $\lambda$ we have $$k_\lambda=\sum_{\substack{\lambda'\in\mathcal{B}(\spl)\\\lambda'\supset\lambda}}k_{\lambda'}-\sum_{\substack{\lambda'\in\mathcal{B}(\spl)\\\lambda'\supset\lambda^*}}k_{\lambda'}=\sum_{\ell=0}^{\ell_\spl(\lambda)}\eta_\ell-\sum_{\ell=0}^{\ell_\spl(\lambda^*)}\eta_\ell=\sum_{\ell=\ell_\spl(\lambda^*)+1}^{\ell_\spl(\lambda)}\eta_\ell~.$$ Therefore by setting $F([\spl^*,\bm{k}]):=(\spl,\bm{n})$ we obtain a well-defined map $$F:\left\{[\spl^*,\bm{k}]:\bm{k}=(k_\lambda)\in\N^{\mathcal{B}(\spl^*)}\right\}\longrightarrow\bigsqcup_{\substack{\spl\in\mathcal{S}_N\\\spl\simeq\spl'}}\left\{(\spl,\bm{n}):\bm{n}=(\eta_0,\eta_1,\dots,\eta_{L(\spl)-1})\in\N^{L(\spl)}\right\}$$ satisfying \eqref{n_to_k_eqn}. We will now show that $F$ is a bijection by constructing an inverse. Let $\spl\in\mathcal{S}_N$ be any splitting chain with reduction $\spl^*$, let $\bm{n}=(\eta_0,\eta_1,\dots,\eta_{L(\spl)-1})$ be an arbitrary tuple of $L(\spl)$ positive integers, and define $G((\spl,\bm{n})):=[\spl^*,\bm{k}]$ by defining $k_\lambda\in\N$ for each $\lambda\in\mathcal{B}(\spl^*)=\mathcal{B}(\spl)$ via $$k_\lambda:=\begin{cases}\eta_0&\text{if }\lambda=[N],\\\displaystyle{\sum_{\ell=\ell_\spl(\lambda^*)+1}^{\ell_\spl(\lambda)}\eta_\ell}&\text{if $\lambda^*\in\mathcal{B}(\spl)$ is the smallest branch properly containing $\lambda$.}\end{cases}$$ Therefore we have a well-defined map $$G:\bigsqcup_{\substack{\spl\in\mathcal{S}_N\\\spl\simeq\spl^*}}\left\{(\spl,\bm{n}):\bm{n}=(\eta_0,\eta_1,\dots,\eta_{L(\spl)-1})\in\N^{L(\spl)}\right\}\longrightarrow\left\{[\spl^*,\bm{k}]:\bm{k}=(k_\lambda)\in\N^{\mathcal{B}(\spl^*)}\right\}~,$$ and it is immediate from \eqref{n_to_k_eqn} and the definition of $G$ that $G\circ F([\spl^*,\bm{k}])=[\spl^*,\bm{k}]$ for every $\bm{k}=(k_\lambda)$ indexed by $\lambda\in\mathcal{B}(\spl^*)$. It remains to show that $F\circ G((\spl,\bm{n}))=(\spl,\bm{n})$ for all level pairs in $$\bigsqcup_{\substack{\spl\in\mathcal{S}_N\\\spl\simeq\spl^*}}\left\{(\spl,\bm{n}):\bm{n}=(\eta_0,\eta_1,\dots,\eta_{L(\spl)-1})\in\N^{L(\spl)}\right\}~.$$ To this end, let $(\spl',\bm{n}')$ be such a level pair and suppose $[\spl^*,\bm{k}]=G((\spl',\bm{n}'))$, so that
\begin{equation}\label{gross}
k_\lambda=\begin{cases}\eta_0'&\text{if }\lambda=[N],\\\displaystyle{\sum_{\ell=\ell_{\spl'}(\lambda^*)+1}^{\ell_{\spl'}(\lambda)}\eta_\ell'}&\text{if $\lambda^*\in\mathcal{B}(\spl')$ is the smallest branch properly containing $\lambda$},\end{cases}
\end{equation}
for each $\lambda\in\mathcal{B}(\spl')$. Now suppose $(\spl,\bm{n})=F([\spl^*,\bm{k}])$ and recall the following details from our definition of $F$. The strictly increasing set of integers $\mathcal{M}=\{m_1,m_2,\dots,m_L\}$ is defined by $$\mathcal{M}=\left\{-1+\sum_{\substack{\lambda'\in\mathcal{B}(\spl^*)\\\lambda'\supset\lambda}}k_{\lambda'}:\lambda\in\mathcal{B}(\spl^*)\right\}$$ and satisfies $\eta_\ell=m_{\ell+1}-m_\ell$ for all $\ell\in\{0,1,\dots,L-1\}$, where $m_0=-1$. Moreover, recall that $\spl=(\ptn_0,\ptn_1,\dots,\ptn_L)$ is then completely determined using the integers defined for each $\lambda\in\mathcal{B}(\spl^*)$ by $$\ell_{[\spl^*,\bm{k}]}(\lambda)=\text{the unique $\ell\in\{0,1,\dots,L-1\}$ such that $\sum_{\substack{\lambda'\in\mathcal{B}(\spl^*)\\\lambda'\supset\lambda}}k_{\lambda'}=m_{\ell+1}+1~,$}$$ and we saw that $L(\spl)=L$, $\mathcal{B}(\spl)=\mathcal{B}(\spl^*)$, and $\ell_\spl(\lambda)=\ell_{[\spl^*,\bm{k}]}(\lambda)$ for all $\lambda\in\mathcal{B}(\spl)=\mathcal{B}(\spl^*)$. Now since $\mathcal{B}(\spl^*)=\mathcal{B}(\spl')$ and each integer $k_\lambda$ with $\lambda\in\mathcal{B}(\spl')$ is given by \eqref{gross}, we have $$\{m_1,m_2,\dots,m_L\}=\mathcal{M}=\left\{-1+\sum_{\substack{\lambda'\in\mathcal{B}(\spl')\\\lambda'\supset\lambda}}k_{\lambda'}:\lambda\in\mathcal{B}(\spl')\right\}=\left\{-1+\sum_{\ell=0}^{\ell_{\spl'}(\lambda)}\eta_\ell':\lambda\in\mathcal{B}(\spl')\right\}~.$$ In particular, for each $\lambda\in\mathcal{B}(\spl)=\mathcal{B}(\spl^*)=\mathcal{B}(\spl')$ we have
\begin{equation}\label{n2n'}
\sum_{\ell=0}^{\ell_\spl(\lambda)}\eta_\ell=m_{\ell_\spl(\lambda)+1}+1=\sum_{\substack{\lambda'\in\mathcal{B}(\spl^*)\\\lambda'\supset\lambda}}k_{\lambda'}=\sum_{\substack{\lambda'\in\mathcal{B}(\spl')\\\lambda'\supset\lambda}}k_{\lambda'}=\sum_{\ell=0}^{\ell_{\spl'}(\lambda)}\eta_\ell'~.
\end{equation}
Since $\spl'$ is a splitting chain, it must satisfy $\{[N]\}=\ptn_0'>\ptn_1'>\dots>\ptn_{L(\spl')}'=\underline{\ptn}$, and hence for each level index $\ell'\in\{0,1,2,\dots,L(\spl')-1\}$ we may select a branch $\lambda^{(\ell')}\in\mathcal{B}(\spl')\cap\ptn_{\ell'}'$ satisfying $\ell_{\spl'}(\lambda^{(\ell')})=\ell'$ and have $$L(\spl')-1=\ell_{\spl'}(\lambda^{(L(\spl')-1)})=\max\{\ell_{\spl'}(\lambda):\lambda\in\mathcal{B}(\spl')\}~.$$ Now since each $\eta_\ell'$ is positive, it follows that $$\{m_1,m_2,\dots,m_L\}=\left\{-1+\sum_{\ell=0}^{\ell_{\spl'}(\lambda)}\eta_\ell':\lambda\in\mathcal{B}(\spl')\right\}=\left\{-1+\sum_{\ell=0}^{\ell'}\eta_\ell':\ell'\in\{0,1,\dots,L(\spl')-1\}\right\}~.$$ But the values $m_1,m_2,\dots,m_L$ strictly increase and the sums $-1+\sum_{\ell=0}^{\ell'}\eta_\ell'$ also strictly increase with $\ell'$, so it must be the case that $L(\spl')=L=L(\spl)$ and moreover, $$m_{\ell'+1}=-1+\sum_{\ell=0}^{\ell'}\eta_\ell'\quad\text{for all $\ell'\in\{0,1,\dots,L(\spl')-1\}$}~.$$ Thus $\eta_0'=m_1+1=\eta_0$, and for every $\ell'\in\{1,\dots,L(\spl)-1\}$ we have $$\eta_{\ell'}=m_{\ell'+1}-m_{\ell'}=\left(-1+\sum_{\ell=0}^{\ell'}\eta_\ell'\right)-\left(-1+\sum_{\ell=0}^{\ell'-1}\eta_\ell'\right)=\eta_{\ell'}'~,$$ so we conclude that $\bm{n}=\bm{n}'$. Now \eqref{n2n'} and positivity of $\eta_\ell=\eta_\ell'$ imply $\ell_{\spl'}(\lambda)=\ell_\spl(\lambda)=\ell_{[\spl^*,\bm{k}]}(\lambda)$ for all $\lambda\in\mathcal{B}(\spl')=\mathcal{B}(\spl^*)=\mathcal{B}(\spl)$, so each partition $\ptn_\ell$ defined via the set $\mathcal{B}_\ell(\spl^*)$ above is precisely $\ptn_\ell'$. Therefore $\spl=\spl'$, so $$F\circ G((\spl',\bm{n}'))=F([\spl^*,\bm{k}])=(\spl,\bm{n})=(\spl',\bm{n}')$$ and we conclude that $G=F^{-1}$.\\
\end{proof}

\begin{figure}[h]
\begin{center}
\caption{Recall that the splitting pair $(\spl,\bm{n})$ associated to the tree in \Cref{Extree_1} had $\bm{n}=(2,1,3,2)$ in \Cref{fig_2}. By \Cref{bijection_thm}, $(\spl,\bm{n})$ corresponds to $[\spl^*,\bm{k}]$ where $\spl^*$ is the reduction computed in \Cref{reduction_alg} and $\bm{k}$ is displayed in the diagram below. Note that these $\bm{k}$ and $\bm{n}$ indeed satisfy \eqref{n_to_k_eqn}.}\label{fig_3}
\vspace{0.5cm}

\includegraphics[scale=1.2]{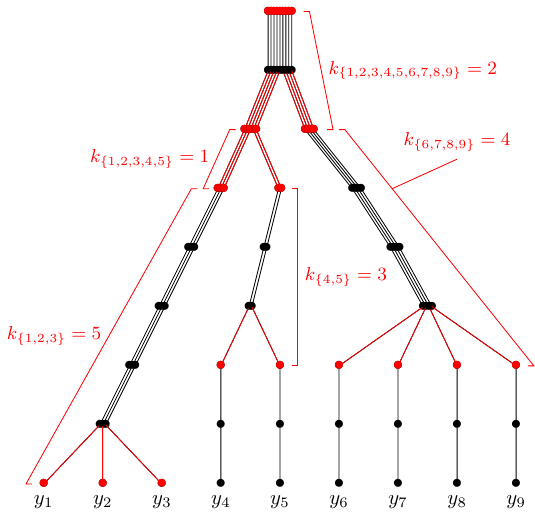}
\end{center}
\end{figure}

With \Cref{reduction} and \Cref{bijection_thm} in hand, we may now give a ``branch-centric" reinterpretation of \Cref{pairset_integral} in the $b=0$ case.

\begin{corollary}\label{bpair_integral}
If $a\in\C$, $[\spl^*,\bm{k}]$ is a branch pair, and $(\spl,\bm{n})$ is the level pair corresponding to $[\spl^*,\bm{k}]$, then for every $\bm{s}\in\C^{\binom{N}{2}}$ we have
\begin{align*}\int_{\mathcal{T}(\spl,\bm{n})}\big(\max_{i<j}|x_i-x_j|\big)^a&\prod_{i<j}|x_i-x_j|^{s_{ij}}\,|dx|\\
&=q^{-(N-1+a+\sum_{i<j}s_{ij})(k_{[N]}-1)}\cdot\frac{M_{\spl^*\!}(q)}{q^{N-1}}\cdot\prod_{\lambda\in\mathcal{B}(\spl^*)\setminus\overline{\ptn}}q^{-e_\lambda(\bm{s})k_\lambda}~.
\end{align*}
\end{corollary}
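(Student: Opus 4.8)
The plan is to obtain the formula directly from \Cref{pairset_integral} with $b=0$ and then translate the right-hand side from level-pair data $(\spl,\bm n)$ to branch-pair data $[\spl^*,\bm k]$ via the correspondence of \Cref{bijection_thm}. Setting $b=0$ in \Cref{pairset_integral}, the integral equals $q^{-(N-1+a+\sum_{i<j}s_{ij})(n_0-1)}\cdot\frac{M_{\spl,q}}{q^{N-1}}\cdot\prod_{\ell=1}^{L(\spl)-1}q^{-E_{\spl,\ell}(\bm s)n_\ell}$, so it suffices to rewrite this expression. Three observations do the job: (i) $n_0=k_{[N]}$ by \Cref{bijection_thm}, and $N-1+a+\sum_{i<j}s_{ij}=a+e_{[N]}(\bm s)$ by \Cref{splfstat}, which converts the leading factor; (ii) $\spl\simeq\spl^*$ forces $\mathcal B(\spl)=\mathcal B(\spl^*)$, hence $M_{\spl,q}=M_{\spl^*\!\!,q}$ by part (a) of \Cref{reduction}; and (iii) the product $\prod_{\ell=1}^{L(\spl)-1}q^{-E_{\spl,\ell}(\bm s)n_\ell}$ must be reorganized as $\prod_{\lambda\in\mathcal B(\spl^*)\setminus\overline{\ptn}}q^{-e_\lambda(\bm s)k_\lambda}$.

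For (iii) I would take logarithms base $q$, expand $E_{\spl,\ell}(\bm s)=\sum_{\lambda\in\mathcal B(\spl)\cap\ptn_\ell}e_\lambda(\bm s)$, and swap the order of summation to get $\sum_{\ell=1}^{L(\spl)-1}n_\ell E_{\spl,\ell}(\bm s)=\sum_{\lambda\in\mathcal B(\spl)}e_\lambda(\bm s)\big(\sum_\ell n_\ell\big)$, where for each fixed $\lambda$ the inner sum runs over those $\ell\in\{1,\dots,L(\spl)-1\}$ with $\lambda\in\ptn_\ell$. The branch $\overline{\ptn}$ lies only in $\ptn_0$ and so contributes nothing. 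For $\lambda\neq\overline{\ptn}$, the set of levels containing $\lambda$ is exactly the block $\{\ell_\spl(\lambda^*)+1,\dots,\ell_\spl(\lambda)\}$, with $\lambda^*$ the smallest branch properly containing $\lambda$: contiguity holds because $\spl$ is a descending chain of refinements (a subset that is a part at two levels is a part at every level in between), the top index is $\ell_\spl(\lambda)$ by definition of branch depth, and the last level below the block is the final level at which the elements of $\lambda$ still lie in a strictly larger part, which one checks equals $\ell_\spl(\lambda^*)$. Hence $\sum_\ell n_\ell=\sum_{\ell=\ell_\spl(\lambda^*)+1}^{\ell_\spl(\lambda)}n_\ell=k_\lambda$ by \eqref{n_to_k_eqn}, so $\sum_{\ell=1}^{L(\spl)-1}n_\ell E_{\spl,\ell}(\bm s)=\sum_{\lambda\in\mathcal B(\spl)\setminus\overline{\ptn}}e_\lambda(\bm s)k_\lambda$, and using $\mathcal B(\spl)=\mathcal B(\spl^*)$ this is the desired exponent. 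Exponentiating and combining with (i) and (ii) finishes the proof.

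The only step that requires care is the combinatorial claim in (iii) identifying the range of levels occupied by a given branch with the interval $\{\ell_\spl(\lambda^*)+1,\dots,\ell_\spl(\lambda)\}$. This fact is essentially what \eqref{n_to_k_eqn} records, so it is already implicit in the proof of \Cref{bijection_thm}; depending on how explicit one wishes to be, one can either cite that argument or insert the short verification sketched above. All remaining manipulations are formal rearrangements of exponents.
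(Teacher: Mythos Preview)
Your proof is correct and follows essentially the same route as the paper: start from \Cref{pairset_integral} with $b=0$, use $k_{[N]}=n_0$ and $M_{\spl,q}=M_{\spl^*\!,q}$ via \Cref{bijection_thm} and \Cref{reduction}, then establish $\sum_{\ell=1}^{L(\spl)-1}E_{\spl,\ell}(\bm s)n_\ell=\sum_{\lambda\in\mathcal B(\spl)\setminus\overline{\ptn}}e_\lambda(\bm s)k_\lambda$ by expanding $E_{\spl,\ell}$, swapping the order of summation, and invoking \eqref{n_to_k_eqn}. The paper states the level-range characterization $\lambda\in\mathcal B(\spl)\cap\ptn_\ell\iff\ell_\spl(\lambda^*)+1\leq\ell\leq\ell_\spl(\lambda)$ directly, while you sketch its justification, but otherwise the arguments are the same.
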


\begin{proof}
If $b=0$, \Cref{pairset_integral} gives
\begin{align*}\int_{\mathcal{T}(\spl,\bm{n})}\big(\max_{i<j}|x_i-x_j|\big)^a&\prod_{i<j}|x_i-x_j|^{s_{ij}}\,|dx|\\
&=q^{-(N-1+a+\sum_{i<j}s_{ij})(\eta_0-1)}\cdot\frac{M_\spl(q)}{q^{N-1}}\cdot\prod_{\ell=1}^{L(\spl)-1}q^{-E_{\ptn_\ell}(\bm{s})\eta_\ell}~.
\end{align*}
Since $\spl\simeq\spl^*$, part (a) of \Cref{reduction} implies $M_{\spl^*\!}(q)=M_\spl(q)$ and $\mathcal{B}(\spl^*)=\mathcal{B}(\spl)$. We also have $k_{[N]}=\eta_0$ by \Cref{bijection_thm}, so it suffices to show that
\begin{equation}\label{exp_k_to_n}
\sum_{\ell=1}^{L(\spl)-1}E_{\ptn_\ell}(\bm{s})\eta_\ell=\sum_{\lambda\in\mathcal{B}(\spl)\setminus\overline{\ptn}}e_\lambda(\bm{s})k_\lambda~.
\end{equation} To see why \eqref{exp_k_to_n} is true, note that $$E_{\ptn_\ell}(\bm{s})=\sum_{\lambda\in\mathcal{B}(\spl)\cap\ptn_\ell}e_\lambda(\bm{s})~,$$ and for $\ell\in\{1,2,\dots,L(\spl)-1\}$ we have $\lambda\in\mathcal{B}(\spl)\cap\ptn_\ell$ if and only if $\ell_\spl(\lambda^*)+1\leq\ell\leq\ell_\spl(\lambda)$, where $\lambda^*$ denotes the smallest branch in $\mathcal{B}(\spl)$ properly containing $\lambda$. Therefore if $\lambda\in\mathcal{B}(\spl)\setminus\overline{\ptn}$, then the branch exponent $e_\lambda(\bm{s})$ is a summand of $E_{\ptn_\ell}(\bm{s})$ if and only if $\ell_\spl(\lambda^*)+1\leq\ell\leq\ell_\spl(\lambda)$, so we have $$\sum_{\ell=1}^{L(\spl)-1}E_{\ptn_\ell}(\bm{s})\eta_\ell=\sum_{\lambda\in\mathcal{B}(\spl)\setminus\overline{\ptn}}\left(\sum_{\ell=\ell_\spl(\lambda^*)+1}^{\ell_\spl(\lambda)}e_\lambda(\bm{s})\eta_\ell\right)=\sum_{\lambda\in\mathcal{B}(\spl)\setminus\overline{\ptn}}e_\lambda(\bm{s})\left(\sum_{\ell=\ell_\spl(\lambda^*)+1}^{\ell_\spl(\lambda)}\eta_\ell\right)~.$$ But $k_\lambda=\sum_{\ell=\ell_\spl(\lambda^*)+1}^{\ell_\spl(\lambda)}\eta_\ell$ by \eqref{n_to_k_eqn} in \Cref{bijection_thm}, so \eqref{exp_k_to_n} is proved and the corollary follows.\\
\end{proof}

We continue our ``branch-centric" discussion with an analogue of \Cref{level_reason}.

\begin{remark}
Note that the integral formula in \Cref{bpair_integral} provides yet another method for computing $\mu^N(\mathcal{T}(\spl,\bm{n}))$, but now in terms of the branch pair $[\spl^*,\bm{k}]$ corresponding to $(\spl,\bm{n})$. Indeed, setting $s_{ij}=a=0$ for all $i<j$ gives $e_\lambda(\bm{s})=\#\lambda-1$ by part (b) of \Cref{branchlevel}, and then the formula in \Cref{bpair_integral} simplifies very nicely:
\begin{equation}\label{branch_measure}
\mu^N(\mathcal{T}(\spl,\bm{n}))=M_{\spl^*\!}(q)\cdot\prod_{\lambda\in\mathcal{B}(\spl^*)}q^{-(\#\lambda-1)k_\lambda}~.
\end{equation}
The exponential factors in the formula in \Cref{bpair_integral} are completely determined by the branch pair $[\spl^*,\bm{k}]$ corresponding to the level pair $(\spl,\bm{n})$. Since $k_{[N]}=\eta_0$ in this case, the leftmost factor $q^{-(N-1+a+\sum_{i<j}s_{ij})(k_{[N]}-1)}$ pertains to ``root data" and the root polytope (just as in \Cref{level_reason}), with $b=0$. The ``branch data" that determine the factor $q^{-e_\lambda(\bm{s})k_\lambda}$ is comprised of the branch $\lambda\in\mathcal{B}(\spl^*)\setminus\overline{\ptn}=\mathcal{B}(\spl)\setminus\overline{\ptn}$ and the integer $k_\lambda$, which have clear visual interpretations in the tree diagram for any $x\in\mathcal{T}(\spl,\bm{n})$ (recall \Cref{fig_3}). In analogy with \eqref{level_criterion} in \Cref{level_reason}, we have
\begin{equation}\label{branch_criterion}
|q^{-e_\lambda(\bm{s})}|_{\C}<1\quad\text{for all}\quad\lambda\in\mathcal{B}(\spl^*)\setminus\overline{\ptn}\qquad\iff\qquad\bm{s}\in\mathcal{BP}_{\spl^*}~,
\end{equation}
which is precisely why we call $\mathcal{BP}_{\spl^*}$ the branch polytope.
\end{remark}

We now give the ``branch-centric" analogue of \Cref{spl_integral}, which will have a similar proof and a similar purpose. Just as for level functions in \Cref{spl_integral}, this is where branch functions enter the picture.

\begin{proposition}\label{bspl_integral}
Suppose $\spl^*\in\mathcal{R}_N$ and $a\in\C$. If $M_{\spl^*\!}(q)>0$, then for every $\spl\simeq\spl^*$ the integral $$\int_{R_\spl^N}\big(\max_{i<j}|x_i-x_j|\big)^a\prod_{i<j}|x_i-x_j|^{s_{ij}}\,|dx|$$ converges absolutely for all $\bm{s}\in\mathcal{RP}_N(a,0)\cap\mathcal{BP}_{\spl^*}$, and for such $\bm{s}$ we have $$\sum_{\substack{\spl\in\mathcal{S}_N\\\spl\simeq\spl^*}}\int_{R_\spl^N}\big(\max_{i<j}|x_i-x_j|\big)^a\prod_{i<j}|x_i-x_j|^{s_{ij}}\,|dx|=\frac{1}{1-q^{-(N-1+a+\sum_{i<j}s_{ij})}}\cdot I_{\spl^*\!\!,q}(\bm{s})~.$$ Otherwise $M_{\spl^*\!}(q)=0$, in which case $R_\spl^N=\varnothing$ for all $\spl\simeq\spl^*$ and all integrals above are zero.
\end{proposition}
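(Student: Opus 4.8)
\emph{Proof proposal.} The plan is to rerun the proof of \Cref{spl_integral}, but organized around branch pairs rather than level pairs. First I would dispose of the degenerate case: if $M_{\spl^*\!\!,q}=0$, then part (a) of \Cref{reduction} gives $M_{\spl,q}=0$ for every $\spl\simeq\spl^*$, so \Cref{pairset_measure} forces $\mathcal{T}(\spl,\bm n)=\varnothing$ for all $\bm n$, hence $R_\spl^N=\varnothing$ and every integral in the statement vanishes.

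Assume now $M_{\spl^*\!\!,q}>0$, so $M_{\spl,q}>0$ for all $\spl\simeq\spl^*$. Since $R_\spl^N=\bigsqcup_{\bm n\in\N^{L(\spl)}}\mathcal{T}(\spl,\bm n)$ by definition, the union $\bigsqcup_{\spl\simeq\spl^*}R_\spl^N$ is the disjoint union of all $\mathcal{T}(\spl,\bm n)$ over level pairs $(\spl,\bm n)$ with $\spl\simeq\spl^*$, and by \Cref{bijection_thm} these level pairs are re-indexed bijectively by the branch pairs $[\spl^*,\bm k]$, $\bm k\in\N^{\mathcal{B}(\spl^*)}$. Applying Tonelli's theorem to the nonnegative integrand $\big(\max_{i<j}|x_i-x_j|\big)^{\re(a)}\prod_{i<j}|x_i-x_j|^{\re(s_{ij})}$ and then \Cref{bpair_integral} term by term, the integral of $\left|\big(\max_{i<j}|x_i-x_j|\big)^a\prod_{i<j}|x_i-x_j|^{s_{ij}}\right|_\infty$ over $\bigsqcup_{\spl\simeq\spl^*}R_\spl^N$ equals $\frac{M_{\spl^*\!\!,q}}{q^{N-1}}$ times a sum over $\bm k\in\N^{\mathcal{B}(\spl^*)}$ whose summand factors as one exponential in $k_{[N]}$ times one exponential in each $k_\lambda$, $\lambda\in\mathcal{B}(\spl^*)\setminus\overline{\ptn}$. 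Because the index set factors accordingly, this sum splits into the product of the geometric series $\sum_{k\ge1}|q^{-(N-1+a+\sum_{i<j}s_{ij})}|_\infty^{k-1}$ and the series $\sum_{k\ge1}|q^{-e_\lambda(\bm s)}|_\infty^{k}$ over $\lambda\in\mathcal{B}(\spl^*)\setminus\overline{\ptn}$, and by \eqref{root_criterion} and \eqref{branch_criterion} these all converge exactly when $\bm s\in\mathcal{RP}_N(a,0)\cap\mathcal{BP}_{\spl^*}$. Since each $R_\spl^N$ is one piece of the disjoint union, the integral over $R_\spl^N$ is absolutely convergent on that region for every $\spl\simeq\spl^*$, in accordance with part (c) of \Cref{reduction}.

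For $\bm s\in\mathcal{RP}_N(a,0)\cap\mathcal{BP}_{\spl^*}$ the integrand now lies in $L^1$, so the Dominated Convergence Theorem and Fubini's theorem permit dropping the absolute values and interchanging the summations, giving
$$\sum_{\substack{\spl\in\mathcal{S}_N\\\spl\simeq\spl^*}}\int_{R_\spl^N}\big(\max_{i<j}|x_i-x_j|\big)^a\prod_{i<j}|x_i-x_j|^{s_{ij}}\,|dx|=\frac{M_{\spl^*\!\!,q}}{q^{N-1}}\sum_{k_{[N]}=1}^\infty q^{-(N-1+a+\sum_{i<j}s_{ij})(k_{[N]}-1)}\prod_{\lambda\in\mathcal{B}(\spl^*)\setminus\overline{\ptn}}\sum_{k_\lambda=1}^\infty q^{-e_\lambda(\bm s)k_\lambda}~.$$
Summing these geometric series---the $k_{[N]}$ series to $\big(1-q^{-(N-1+a+\sum_{i<j}s_{ij})}\big)^{-1}$ and each $k_\lambda$ series to $\big(q^{e_\lambda(\bm s)}-1\big)^{-1}$---and recognizing the resulting product as $\big(1-q^{-(N-1+a+\sum_{i<j}s_{ij})}\big)^{-1}I_{\spl^*\!\!,q}(\bm s)$ completes the proof.

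The one point I expect to require care is the order of the measure-theoretic steps: Tonelli should be applied first to the real-exponent integrand so the precise convergence region $\mathcal{RP}_N(a,0)\cap\mathcal{BP}_{\spl^*}$ is pinned down before Fubini and dominated convergence are invoked on the complex integrand. One should also observe that it is exactly the merging of the double union over all $\spl\simeq\spl^*$ that makes the branch-pair reindexing of \Cref{bijection_thm} surjective onto $\N^{\mathcal{B}(\spl^*)}$, which is what turns the resulting sum into a genuine product of geometric series; everything else is routine bookkeeping with \Cref{bijection_thm} and \Cref{bpair_integral}.
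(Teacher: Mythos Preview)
Your proposal is correct and follows essentially the same route as the paper: dispose of the $M_{\spl^*\!\!,q}=0$ case via \Cref{reduction}(a) and \Cref{pairset_measure}, then reindex the double union $\bigsqcup_{\spl\simeq\spl^*}\bigsqcup_{\bm n}\mathcal{T}(\spl,\bm n)$ by branch pairs via \Cref{bijection_thm}, apply \Cref{bpair_integral} term by term, and factor the resulting sum over $\N^{\mathcal{B}(\spl^*)}$ into a product of geometric series whose convergence is governed by \eqref{root_criterion} and \eqref{branch_criterion}. The only organizational difference is that the paper establishes absolute convergence of each individual $\int_{R_\spl^N}$ by citing \Cref{reduction}(c) together with \Cref{spl_integral}, whereas you obtain it as a byproduct of Tonelli on the full union; both are fine, and your emphasis that the branch-pair reindexing only becomes surjective onto $\N^{\mathcal{B}(\spl^*)}$ after merging all $\spl\simeq\spl^*$ is exactly the point.
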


\begin{proof}
The $M_{\spl^*\!}(q)=0$ case is immediate from \eqref{branch_measure} and the definition of $R_\spl^N$, so suppose $M_{\spl^*\!}(q)>0$. The first claim follows from part (c) of \Cref{reduction} and \Cref{spl_integral}. To prove the second claim, suppose $\bm{s}\in\mathcal{RP}_N(a,0)\cap\mathcal{BP}_{\spl^*}$, note that the function $$x\mapsto\sum_{\substack{\spl\in\mathcal{S}_N\\\spl\simeq\spl^*}}\bm{1}_{R_\spl^N}(x)\left|\big(\max_{i<j}|x_i-x_j|\big)^a\prod_{i<j}|x_i-x_j|^{s_{ij}}\right|_{\C}$$ is in $L^1(K^N,\mu^N)$ by \Cref{spl_integral}, and that it dominates every partial sum of the function $$x\mapsto\sum_{\substack{\spl\in\mathcal{S}_N\\\spl\simeq\spl^*}}\sum_{\bm{n}\in\N^{L(\spl)}}\bm{1}_{\mathcal{T}(\spl,\bm{n})}(x)\big(\max_{i<j}|x_i-x_j|\big)^a\prod_{i<j}|x_i-x_j|^{s_{ij}}~.$$ Then the Dominated Convergence Theorem, \Cref{bijection_thm}, \Cref{bpair_integral}, Fubini's Theorem for absolutely convergent sums, \eqref{decomp}, \eqref{root_criterion}, and \eqref{branch_criterion} imply
\begin{align*}
\sum_{\substack{\spl\in\mathcal{S}_N\\\spl\simeq\spl^*}}&\int_{R_\spl^N}\big(\max_{i<j}|x_i-x_j|\big)^a\prod_{i<j}|x_i-x_j|^{s_{ij}}\,|dx|\\
&=\sum_{\substack{\spl\in\mathcal{S}_N\\\spl\simeq\spl^*}}\sum_{\bm{n}\in\N^{L(\spl)}}\int_{\mathcal{T}(\spl,\bm{n})}\big(\max_{i<j}|x_i-x_j|\big)^a\prod_{i<j}|x_i-x_j|^{s_{ij}}\,|dx|\\
&=\sum_{\bm{k}\in\N^{\mathcal{B}(\spl^*)}}q^{-(N-1+a+\sum_{i<j}s_{ij})(k_{[N]}-1)}\cdot\frac{M_{\spl^*\!}(q)}{q^{N-1}}\cdot\prod_{\lambda\in\mathcal{B}(\spl^*)\setminus\overline{\ptn}}q^{-e_\lambda(\bm{s})k_\lambda}\\
&=\sum_{k_{[N]}=1}^\infty q^{-(N-1+a+\sum_{i<j}s_{ij})(k_{[N]}-1)}\cdot\frac{M_{\spl^*\!}(q)}{q^{N-1}}\cdot\prod_{\lambda\in\mathcal{B}(\spl^*)\setminus\overline{\ptn}}\sum_{k_\lambda=1}^\infty q^{-e_\lambda(\bm{s})k_\lambda}\\
&=\frac{1}{1-q^{-(N-1+a+\sum_{i<j}s_{ij})}}\cdot I_{\spl^*\!\!,q}(\bm{s})~.
\end{align*}
\end{proof}

\Cref{bspl_integral} is the second of the three main components of the proof of \Cref{main}. In fact, we can easily prove the first statement in part (c) of \Cref{main} now: Given $\spl^*\in\mathcal{R}_N$ with $M_{\spl^*\!}(q)>0$ and $a=b=0$, the two formulas in \Cref{spl_integral} in \Cref{bspl_integral} imply $$\sum_{\substack{\spl\in\mathcal{S}_N\\\spl\simeq\spl^*}}J_{\spl,q}(0,\bm{s})=(1-q^{-(N-1+\sum_{i<j}s_{ij})})\cdot\sum_{\substack{\spl\in\mathcal{S}_N\\\spl\simeq\spl^*}}\int_{R_\spl^N}\prod_{i<j}|x_i-x_j|^{s_{ij}}\,|dx|=I_{\spl^*\!\!,q}(\bm{s})$$ for all $\bm{s}\in\mathcal{RP}_N(0,0)\cap\mathcal{BP}_{\spl^*}$. The left-hand and right-hand expressions above are both holomorphic in the open set $\mathcal{BP}_{\spl^*}$, which is also simply connected because it is convex. Therefore since the two expressions agree on $\mathcal{RP}_N(0,0)\cap\mathcal{BP}_{\spl^*}$, they must in fact agree on all of $\mathcal{BP}_{\spl^*}$. Otherwise $M_{\spl^*\!}(q)=0$ implies all three expressions above are identically zero on $\mathcal{BP}_{\spl^*}$, so the first statement in part (c) of \Cref{main} is proved in all cases. We conclude this subsection with the following analogue of \Cref{pair_conclusion}, which is immediate from \Cref{bspl_integral} and part (c) of \Cref{reduction}:

\begin{corollary}\label{bspl_conclusion}
Suppose the residue field of $K$ has cardinality $q$ and suppose $a\in\C$. The integral $$\int_{R^N}\big(\max_{i<j}|x_i-x_j|\big)^a\prod_{i<j}|x_i-x_j|^{s_{ij}}\,|dx|$$ converges absolutely for all $\bm{s}\in\mathcal{RP}_N(a,0)\cap\bigcap_{\spl^*\in\mathcal{R}_N}\mathcal{BP}_{\spl^*}=\Omega_N(a,0)$, and for such $\bm{s}$ it converges to $$\frac{1}{1-q^{-(N-1+a+\sum_{i<j}s_{ij})}}\cdot\sum_{\spl^*\in\mathcal{R}_N}I_{\spl^*\!\!,q}(\bm{s})~.$$
\end{corollary}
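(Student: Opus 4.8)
The plan is to obtain \Cref{bspl_conclusion} by decomposing the domain $R^N$ and summing the contribution of each $\simeq$-class that \Cref{bspl_integral} already packages for us. First I would rewrite \eqref{decomp} in the form $R^N = V_0 \sqcup \bigsqcup_{\spl \in \mathcal{S}_N} R_\spl^N$ and recall that $\mu^N(V_0) = 0$. Since this is a countable disjoint union, for the nonnegative measurable function $g(x) := \big(\max_{i<j}|x_i-x_j|\big)^{\re(a)}\prod_{i<j}|x_i-x_j|^{\re(s_{ij})}$ the monotone convergence theorem (equivalently, countable additivity of $A \mapsto \int_A g\,|dx|$) gives $\int_{R^N} g\,|dx| = \sum_{\spl \in \mathcal{S}_N}\int_{R_\spl^N} g\,|dx|$, an identity of elements of $[0,\infty]$ whose finiteness is exactly absolute convergence of the integral over $R^N$.

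Next I would regroup the finite index set $\mathcal{S}_N$ by the relation $\simeq$: by \Cref{reduction}(b), $\mathcal{R}_N$ is a complete set of representatives, so $\sum_{\spl\in\mathcal{S}_N} = \sum_{\spl^*\in\mathcal{R}_N}\sum_{\spl\simeq\spl^*}$. For each $\spl^*\in\mathcal{R}_N$ with $M_{\spl^*\!\!,q} = 0$, \Cref{reduction}(a) forces $M_{\spl,q}=0$ for all $\spl\simeq\spl^*$, hence $R_\spl^N = \varnothing$ (by \eqref{branch_measure} or \Cref{bspl_integral}) and that class contributes $0$. For each $\spl^*\in\mathcal{R}_N$ with $M_{\spl^*\!\!,q} > 0$, \Cref{bspl_integral} does the analytic work directly: provided $\bm{s}\in\mathcal{RP}_N(a,0)\cap\mathcal{BP}_{\spl^*}$, the inner sum $\sum_{\spl\simeq\spl^*}\int_{R_\spl^N}\big(\max_{i<j}|x_i-x_j|\big)^a\prod_{i<j}|x_i-x_j|^{s_{ij}}\,|dx|$ converges absolutely with value $\frac{1}{1-q^{-(N-1+a+\sum_{i<j}s_{ij})}}\cdot I_{\spl^*\!\!,q}(\bm{s})$.

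Then I would restrict $\bm{s}$ to $\mathcal{RP}_N(a,0)\cap\bigcap_{\substack{\spl^*\in\mathcal{R}_N\\M_{\spl^*\!\!,q}>0}}\mathcal{BP}_{\spl^*}$, so that every one of the finitely many relevant inner sums converges absolutely; since $\mathcal{R}_N$ is finite, adding these finitely many finite contributions shows $\int_{R^N} g\,|dx| < \infty$, i.e. the original integral converges absolutely. Its value is then the sum of the (now absolutely convergent) signed integrals, namely $\sum_{\substack{\spl^*\in\mathcal{R}_N\\M_{\spl^*\!\!,q}>0}}\frac{1}{1-q^{-(N-1+a+\sum_{i<j}s_{ij})}}\cdot I_{\spl^*\!\!,q}(\bm{s})$, and pulling the common factor $\frac{1}{1-q^{-(N-1+a+\sum_{i<j}s_{ij})}}$ — which is nonzero on $\mathcal{RP}_N(a,0)$ by \eqref{root_criterion} — out of the sum yields the claimed identity. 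I do not expect a genuine obstacle here: all analytic content already lives in \Cref{bspl_integral} and \Cref{reduction}, so the only care needed is the bookkeeping — using finiteness of $\mathcal{S}_N$ and $\mathcal{R}_N$ to pass the integral through the sum via monotone convergence, and observing that absolute convergence of finitely many pieces forces absolute convergence of their union.
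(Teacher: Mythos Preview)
Your proposal is correct and follows essentially the same approach as the paper, which simply declares the corollary ``immediate from \Cref{bspl_integral}'' after already noting (just before \Cref{pair_conclusion}) that $R^N=V_0\sqcup\bigsqcup_{\spl\in\mathcal{S}_N}R_\spl^N$ with $\mu^N(V_0)=0$. You have merely made explicit the bookkeeping (regrouping $\mathcal{S}_N$ by $\simeq$-classes via \Cref{reduction}(b), discarding classes with $M_{\spl^*\!\!,q}=0$, and summing the finitely many contributions) that the paper leaves implicit.
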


\subsection{The final step}\label{3_5}

We are now ready for the third and final part of the proof of \Cref{main}, which is the following:

\begin{lemma}\label{rho_thm}
Suppose $K$ is a $p$-field with residue field cardinality $q$, suppose $a,b\in\C$, suppose $\rho:\mathcal{N}\to\C$ satisfies \eqref{rhogro}, and define $$Z_N(K,a,b,\bm{s}):=\int_{R^N}\big(\max_{i<j}|x_i-x_j|\big)^a\big(\min_{i<j}|x_i-x_j|\big)^b\prod_{i<j}|x_i-x_j|^{s_{ij}}\,|dx|$$ for all $\bm{s}\in\Omega_N(a,b)$. Then for all such $\bm{s}$ we have $$Z_N^\rho(K,a,b,\bm{s})=\left(\sum_{m\in\Z}\frac{\rho(q^{-m})}{q^{m(N+a+b+\sum_{i<j}s_{ij})}}\right)\left(1-\frac{1}{q^{N+a+b+\sum_{i<j}s_{ij}}}\right)Z_N(K,a,b,\bm{s})~,$$ and the sum over $m\in\Z$ converges absolutely uniformly on each compact subset of $\Omega_N(a,b)$.
\end{lemma}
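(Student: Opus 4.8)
The plan is to foliate $K^N$ by the ``shells'' on which the norm (and hence $\rho(\|x\|)$) is constant, integrate over each shell by rescaling it onto $R^N\setminus P^N$, and then sum over shells. Concretely, since $\{x\in K^N:\|x\|=0\}=\{0\}$ has measure zero and $\|x\|=q^{-m}$ exactly when $x\in\pi^mR^N\setminus\pi^{m+1}R^N$, we have the disjoint decomposition $K^N=\{0\}\sqcup\bigsqcup_{m\in\Z}\big(\pi^mR^N\setminus\pi^{m+1}R^N\big)$. On the $m$th shell $\rho(\|x\|)=\rho(q^{-m})$ is constant, and the ultrametric substitution $x=\pi^m u$ carries the shell bijectively onto $R^N\setminus P^N$ with $|dx|=q^{-mN}|du|$ and $|x_i-x_j|=q^{-m}|u_i-u_j|$ for all $i<j$; thus $\big(\max_{i<j}|x_i-x_j|\big)^a$, $\big(\min_{i<j}|x_i-x_j|\big)^b$, and $\prod_{i<j}|x_i-x_j|^{s_{ij}}$ each pick up a power of $q^{-m}$, and the integral over the $m$th shell equals $\rho(q^{-m})\,q^{-m(N+a+b+\sum_{i<j}s_{ij})}\cdot W$, where $W:=\int_{R^N\setminus P^N}\big(\max_{i<j}|u_i-u_j|\big)^a\big(\min_{i<j}|u_i-u_j|\big)^b\prod_{i<j}|u_i-u_j|^{s_{ij}}\,|du|$.

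Next I would justify summing these shell contributions term by term. Applying Tonelli to the absolute values, it suffices to bound $\sum_{m\in\Z}|\rho(q^{-m})|_\infty\,q^{-m\re(N+a+b+\sum_{i<j}s_{ij})}\cdot W_{\re}$, where $W_{\re}$ is $W$ with $a,b,\bm{s}$ replaced by their real parts; this is finite because $\Omega_{N,q}(a,b)$ is a tube domain over real parts, so $W_{\re}\le Z_N(K,\re(a),\re(b),\re(\bm{s}))<\infty$ by \Cref{pair_conclusion}. For the series in $m$: on the tail $m\to+\infty$ the first half of \eqref{rhogro} gives $|\rho(q^{-m})|_\infty\le q^{m(1+\varepsilon)}$ for any $\varepsilon>0$ and large $m$, so the terms are bounded by $q^{-m(\re(N-1+a+b+\sum_{i<j}s_{ij})-\varepsilon)}$, which is summable once $\varepsilon$ is taken smaller than $\re(N-1+a+b+\sum_{i<j}s_{ij})>0$ (this is precisely the root-polytope condition $\bm{s}\in\mathcal{RP}_N(a,b)\supset\Omega_{N,q}(a,b)$); on the tail $m\to-\infty$ the second half of \eqref{rhogro} gives $|\rho(q^{-m})|_\infty\le q^{mC}$ for any $C$ and $m\ll0$, making those terms summable as well. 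On a compact $\mathcal{K}\subset\Omega_{N,q}(a,b)$ one has $\re(N-1+a+b+\sum_{i<j}s_{ij})\ge\delta>0$ and $\re(N+a+b+\sum_{i<j}s_{ij})$ bounded, so the same estimates hold with constants uniform over $\mathcal{K}$, giving the claimed uniform absolute convergence. Consequently $Z_N^\rho(K,a,b,\bm{s})=\Big(\sum_{m\in\Z}\rho(q^{-m})\,q^{-m(N+a+b+\sum_{i<j}s_{ij})}\Big)\cdot W$.

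It remains to identify $W$ with $\big(1-q^{-(N+a+b+\sum_{i<j}s_{ij})}\big)Z_N(K,a,b,\bm{s})$. Write $R^N=(R^N\setminus P^N)\sqcup P^N$ and note $P^N=\pi R^N$; applying the substitution $u=\pi w$ to the $P^N$-piece (exactly as above, with $m=1$) yields $\int_{P^N}(\cdots)\,|du|=q^{-(N+a+b+\sum_{i<j}s_{ij})}\,Z_N(K,a,b,\bm{s})$, whence $Z_N(K,a,b,\bm{s})=W+q^{-(N+a+b+\sum_{i<j}s_{ij})}Z_N(K,a,b,\bm{s})$ and so $W=\big(1-q^{-(N+a+b+\sum_{i<j}s_{ij})}\big)Z_N(K,a,b,\bm{s})$. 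Substituting this into the previous display and reindexing $m\mapsto -m$ (to match the form $\sum_{m}\rho(q^{-m})/q^{m(\cdots)}$) gives the stated identity.

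I expect the only real obstacle to be the second step: interchanging $\sum_{m\in\Z}$ with the integral and proving uniformity on compacts. This is exactly where \emph{both} growth/decay hypotheses in \eqref{rhogro} and the inclusion $\Omega_{N,q}(a,b)\subset\mathcal{RP}_N(a,b)$ are used—the $m\to+\infty$ tail needs the root-polytope lower bound to beat the (at most linear) growth of $\rho$ near $0$, and the $m\to-\infty$ tail needs the super-polynomial decay of $\rho$ near $\infty$. Everything else is routine bookkeeping with the ultrametric scaling of Haar measure and the absolute values $|x_i-x_j|$, together with the already-established convergence of $Z_N(K,a,b,\bm{s})$ on $\Omega_{N,q}(a,b)$.
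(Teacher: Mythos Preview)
Your proposal is correct and follows essentially the same route as the paper: decompose $K^N$ into norm-shells $(P^m)^N\setminus(P^{m+1})^N$, rescale each by $x\mapsto\pi^m u$, and control the resulting series in $m$ via Tonelli together with the two halves of \eqref{rhogro} and the root-polytope inequality. The only cosmetic difference is the order of bookkeeping---the paper first computes $\int_{(P^m)^N}$ and obtains the shell integral by subtraction, whereas you rescale the shell directly to $R^N\setminus P^N$ and only afterward identify your auxiliary $W$ with $\big(1-q^{-(N+a+b+\sum s_{ij})}\big)Z_N$; this is the same computation reorganized. One small slip: your final ``reindexing $m\mapsto-m$'' is unnecessary, since your expression $\sum_{m}\rho(q^{-m})q^{-m(\cdots)}$ is already literally the series in the statement.
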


\begin{proof}
We first prove the following claim: For each $m\in\Z$ and every $\bm{s}\in\Omega_N(a,b)$ the integral $$\int_{(P^m)^N\setminus(P^{m+1})^N}\rho(\|x\|)\big(\max_{i<j}|x_i-x_j|\big)^a\big(\min_{i<j}|x_i-x_j|\big)^b\prod_{i<j}|x_i-x_j|^{s_{ij}}\,|dx|$$ converges absolutely to $$\frac{\rho(q^{-m})}{q^{m(N+a+b+\sum_{i<j}s_{ij})}}\left(1-\frac{1}{q^{N+a+b+\sum_{i<j}s_{ij}}}\right)Z_N(K,a,b,\bm{s})~.$$ To see why this claim holds, note that $Z_N(K,a,b,\bm{s})$ is defined for all $\bm{s}\in\Omega_N(a,b)$ by \Cref{pair_conclusion}. Then for any $m\in\Z$, the change of variables $R^N\to(P^m)^N$ defined by $x\mapsto\pi^my$ gives
\begin{align*}
\int_{(P^m)^N}&\big(\max_{i<j}|x_i-x_j|\big)^a\big(\min_{i<j}|x_i-x_j|\big)^b\prod_{i<j}|x_i-x_j|^{s_{ij}}\,|dx|\\
&=\frac{1}{q^{mN}}\int_{R^N}\big(\max_{i<j}|\pi^my_i-\pi^my_j|\big)^a\big(\min_{i<j}|\pi^my_i-\pi^my_j|\big)^b\prod_{i<j}|\pi^my_i-\pi^my_j|^{s_{ij}}\,|dy|\\
&=\frac{1}{q^{m(N+a+b+\sum_{i<j}s_{ij})}}\int_{R^N}\big(\max_{i<j}|y_i-y_j|\big)^a\big(\min_{i<j}|y_i-y_j|\big)^b\prod_{i<j}|y_i-y_j|^{s_{ij}}\,|dy|\\
&=\frac{1}{q^{m(N+a+b+\sum_{i<j}s_{ij})}}\cdot Z_N(K,a,b,\bm{s})
\end{align*}
for all $\bm{s}\in\Omega_N(a,b)$. But the norm $\|x\|=\max_{1\leq i\leq N}|x_i|$ takes the constant value $q^{-m}$ at every $x\in(P^m)^N\setminus(P^{m+1})^N$, so for every $m\in\Z$ and every $\bm{s}\in\Omega_N(a,b)$ we have
\begin{align*}
&\int_{(P^m)^N\setminus(P^{m+1})^N}\rho(\|x\|)\big(\max_{i<j}|x_i-x_j|\big)^a\big(\min_{i<j}|x_i-x_j|\big)^b\prod_{i<j}|x_i-x_j|^{s_{ij}}\,|dx|\\
&~~~~~~=\rho(q^{-m})\left(\frac{1}{q^{m(N+a+b+\sum_{i<j}s_{ij})}}\cdot Z_N(K,a,b,\bm{s})-\frac{1}{q^{(m+1)(N+a+b+\sum_{i<j}s_{ij})}}\cdot Z_N(K,a,b,\bm{s})\right)\\
&~~~~~~=\frac{\rho(q^{-m})}{q^{m(N+a+b+\sum_{i<j}s_{ij})}}\left(1-\frac{1}{q^{N+a+b+\sum_{i<j}s_{ij}}}\right)Z_N(K,a,b,\bm{s})
\end{align*}
and the desired claim is proved. In particular, since $(\re(s_{ij}))_{i<j}\in\Omega_N(\re(a),\re(b))$ whenever $\bm{s}\in\Omega_N(a,b)$, note that the claim also holds if $\rho(\cdot)$, $a$, $b$, and $s_{ij}$ are replaced by $|\rho(\cdot)|_{\C}$, $\re(a)$, $\re(b)$, and $\re(s_{ij})$. Now for the main claim, note that
\begin{align*}
\rho(\|x\|)&\big(\max_{i<j}|x_i-x_j|\big)^a\big(\min_{i<j}|x_i-x_j|\big)^b\prod_{i<j}|x_i-x_j|^{s_{ij}}\\
&=\sum_{m\in\Z}\left(\rho(q^{-m})\big(\max_{i<j}|x_i-x_j|\big)^a\big(\min_{i<j}|x_i-x_j|\big)^b\prod_{i<j}|x_i-x_j|^{s_{ij}}\right)\bm{1}_{(P^m)^N\setminus(P^{m+1})^N}(x)
\end{align*}
for all $x\in K^N\setminus\{0\}$, and therein each partial sum is dominated by the function
\begin{align*}
x\mapsto&\left|\rho(\|x\|)\big(\max_{i<j}|x_i-x_j|\big)^a\big(\min_{i<j}|x_i-x_j|\big)^b\prod_{i<j}|x_i-x_j|^{s_{ij}}\right|_{\C}\\
&~~~=\sum_{m\in\Z}|\rho(q^{-m})|_{\C}\big(\max_{i<j}|x_i-x_j|\big)^{\re(a)}\big(\min_{i<j}|x_i-x_j|\big)^{\re(b)}\prod_{i<j}|x_i-x_j|^{\re(s_{ij})}\bm{1}_{(P^m)^N\setminus(P^{m+1})^N}(x)~.
\end{align*}
Now Fubini's Theorem for sums of nonnegative terms and the claim we just proved give
\begin{align*}
\int_{K^N}&\left|\rho(\|x\|)\big(\max_{i<j}|x_i-x_j|\big)^a\big(\min_{i<j}|x_i-x_j|\big)^b\prod_{i<j}|x_i-x_j|^{s_{ij}}\right|_{\C}|dx|\\
&=\sum_{m\in\Z}\int_{(P^m)^N\setminus(P^{m+1})^N}|\rho(q^{-m})|_{\C}\big(\max_{i<j}|x_i-x_j|\big)^{\re(a)}\big(\min_{i<j}|x_i-x_j|\big)^{\re(b)}\prod_{i<j}|x_i-x_j|^{\re(s_{ij})}\,|dx|\\
&=\sum_{m\in\Z}\frac{|\rho(q^{-m})|_{\C}}{q^{m(\re(N+a+b+\sum_{i<j}s_{ij}))}}\left(1-\frac{1}{q^{\re(N+a+b+\sum_{i<j}s_{ij})}}\right)Z_N(K,\re(a),\re(b),(\re(s_{ij}))_{i<j})
\end{align*}
for every $\bm{s}\in\Omega_N(a,b)$. Now suppose $C$ is any compact subset of $\Omega_N(a,b)$. Since $C$ is therefore a compact subset of $\mathcal{RP}_N(a,b)=\{\bm{s}\in\C^{\binom{N}{2}}:\re(N-1+a+b+\sum_{i<j}s_{ij})>0\}$, there exist real numbers $\sigma_1$ and $\sigma_2$ satisfying $$\limsup_{n\to\infty}\frac{\log|\rho(\frac{1}{n})|_{\C}}{\log(n)}\leq 1<\sigma_1\leq\re\bigg(N+a+b+\sum_{i<j}s_{ij}\bigg)\leq\sigma_2<\infty=-\limsup_{n\to\infty}\frac{\log|\rho(n)|_{\C}}{\log(n)}$$ for all $\bm{s}\in C$. Then to see that the preceding sum over $m\in\Z$ converges uniformly on $C$, it suffices to verify the convergence of the two series $$\sum_{m=0}^\infty\frac{|\rho(q^{-m})|_{\C}}{q^{m\sigma_1}}\qquad\text{and}\qquad\sum_{m=1}^\infty|\rho(q^m)|_{\C} q^{m\sigma_2}~.$$ Indeed, if $\log:[0,\infty]\to[-\infty,\infty]$ is the extended logarithm we have
\begin{align*}
\log\left(\limsup_{m\to\infty}\sqrt[m]{\frac{|\rho(q^{-m})|_{\C}}{q^{m\sigma_1}}}\right)&=\log(q)\cdot\left(\limsup_{m\to\infty}\frac{\log|\rho(q^{-m})|_{\C}}{\log(q^m)}-\sigma_1\right)\\
&\leq\log(q)\cdot\left(\limsup_{n\to\infty}\frac{\log|\rho(\frac{1}{n})|_{\C}}{\log(n)}-\sigma_1\right)<0
\end{align*}
and
\begin{align*}
\log\left(\limsup_{m\to\infty}\sqrt[m]{|\rho(q^m)|_{\C} q^{m\sigma_2}}\right)&=\log(q)\cdot\left(\limsup_{m\to\infty}\frac{\log|\rho(q^m)|_{\C}}{\log(q^m)}+\sigma_2\right)\\
&\leq\log(q)\cdot\left(\limsup_{n\to\infty}\frac{\log|\rho(n)|_{\C}}{\log(n)}+\sigma_2\right)<0~,
\end{align*}
so the series both converge by the root test and we conclude that our series expansion for $$\int_{K^N}\left|\rho(\|x\|)\big(\max_{i<j}|x_i-x_j|\big)^a\big(\min_{i<j}|x_i-x_j|\big)^b\prod_{i<j}|x_i-x_j|^{s_{ij}}\right|_{\C}|dx|$$ converges uniformly on $C$. Thus by the dominated convergence theorem we have
\begin{align*}
Z_N^\rho(K,a,b,\bm{s})&=\int_{K^N}\rho(\|x\|)\big(\max_{i<j}|x_i-x_j|\big)^a\big(\min_{i<j}|x_i-x_j|\big)^b\prod_{i<j}|x_i-x_j|^{s_{ij}}|dx|\\
&=\left(\sum_{m\in\Z}\frac{\rho(q^{-m})}{q^{m(N+a+b+\sum_{i<j}s_{ij})}}\right)\left(1-\frac{1}{q^{N+a+b+\sum_{i<j}s_{ij}}}\right)Z_N(K,a,b,\bm{s})~,
\end{align*}
and hence the sum over $m\in\Z$ converges absolutely uniformly on $C$.\\
\end{proof}

Finally, we combine \Cref{rho_thm} with \Cref{pair_conclusion} and \Cref{bspl_integral} to finish the proof of \Cref{main}:

\begin{proof}[Proof of \Cref{main}]\
\begin{itemize}
\item[(a)] Since $\rho$ is not identically zero, there exists $m\in\Z$ such that $\rho(q^{-m})\neq 0$. Moreover, the quantity $1-\frac{1}{q^{N+a+b+\sum_{i<j}s_{ij}}}$ attains nonzero values on every open subset $U\subset\C^{\binom{N}{2}}$, so term $$\frac{\rho(q^{-m})}{q^{m(N+a+b+\sum_{i<j}s_{ij})}}\left(1-\frac{1}{q^{N+a+b+\sum_{i<j}s_{ij}}}\right)Z_N(K,a,b,\bm{s})$$ appearing in the proof above may converge absolutely at every point of an open set $U\subset\C^{\binom{N}{2}}$ only if the integral $Z_N(K,a,b,\bm{s})$ does. But \Cref{pair_conclusion} says that the integral defining $Z_N(K,a,b,\bm{s})$ converges absolutely if and only if $\bm{s}\in\Omega_N(a,b)$, and we know that the parenthetical sum over $m\in\Z$ in \Cref{rho_thm} converges absolutely uniformly on $\Omega_N(a,b)$. Thus $Z_N^\rho(K,a,b,\bm{s})$ converges absolutely for every $\bm{s}\in\Omega_N(a,b)$, and $\Omega_N(a,b)$ is the largest open set with this property.
\item[(b)] If $C$ is a compact subset of $\Omega_N(a,b)$, then $Z_N(K,a,b,\bm{s})$ restricts to a continuous and hence bounded function on $C$, and note that the same is true for the function $\bm{s}\mapsto1-\frac{1}{q^{N+a+b+\sum_{i<j}s_{ij}}}$. We already showed that the parenthetical sum in \Cref{rho_thm} converges uniformly on $C$, so by \Cref{rho_thm}, \Cref{pair_conclusion}, and \Cref{rootdef} we have
\begin{align*}
Z_N^\rho(K,a,b,\bm{s})&=\left(\sum_{m\in\Z}\rho(q^m)q^{m(N+a+b+\sum_{i<j}s_{ij})}\right)\cdot\frac{1-q^{-(N+a+b+\sum_{i<j}s_{ij})}}{1-q^{-(N-1+a+b+\sum_{i<j}s_{ij})}}\cdot\sum_{\spl\in\mathcal{S}_N}J_{\spl,q}(b,\bm{s})\\
&=H_q^\rho\left(N+a+b+\sum_{i<j}s_{ij}\right)\cdot\sum_{\spl\in\mathcal{S}_N}J_{\spl,q}(b,\bm{s})~,
\end{align*}
and the sum converges uniformly on $C$.\\
\item[(c)] We already proved the first claim relating level and branch functions immediately after the proof of \Cref{bspl_integral}. If $C$ is a compact subset of $\mathcal{RP}_N(a,0)\cap\bigcap_{\spl^*\in\mathcal{R}_N}\mathcal{BP}_{\spl^*}$, then $Z_N(K,a,0,\bm{s})$ (i.e., the value of the integral from \Cref{bspl_conclusion}) restricts to a continuous function on $C$. But $$\mathcal{RP}_N(a,0)\cap\bigcap_{\spl^*\in\mathcal{R}_N}\mathcal{BP}_{\spl^*}=\Omega_N(a,0)~,$$ so \Cref{rho_thm}, \Cref{bspl_conclusion}, and \Cref{rootdef} similarly imply
\begin{align*}
Z_N^\rho(K,a,0,\bm{s})&=\left(\sum_{m\in\Z}\rho(q^m)q^{m(N+a+\sum_{i<j}s_{ij})}\right)\cdot\frac{1-q^{-(N+a+\sum_{i<j}s_{ij})}}{1-q^{-(N-1+a+\sum_{i<j}s_{ij})}}\cdot\sum_{\spl^*\in\mathcal{R}_N}I_{\spl^*\!\!,q}(\bm{s})\\
&=H_q^\rho\left(N+a+\sum_{i<j}s_{ij}\right)\cdot\sum_{\spl^*\in\mathcal{R}_N}I_{\spl^*\!\!,q}(\bm{s})~,
\end{align*}
and the sum converges uniformly on $C$.
\end{itemize}
\end{proof}

\appendix

\section{Explicit computation for $N=4$}
Let $N=4$ and fix $a$, $b$, and $\rho$ as in \Cref{Ex_N2n3}. Then $\bm{s}=(s_{12},s_{13},s_{14},s_{23},s_{24},s_{34})\in\C^6$ and we have the root polytope $$\mathcal{RP}_4(a,b)=\{\bm{s}\in\C^6:\re(3+a+b+\sum_{1\leq i<j\leq 4}s_{ij})>0\}~,$$ on which the root function is holomorphic and defined by $$\bm{s}\mapsto\frac{1-q^{-(4+a+b+\sum_{1\leq i<j\leq 4}s_{ij})}}{1-q^{-(3+a+b+\sum_{1\leq i<j\leq 4}s_{ij})}}\cdot\sum_{m\in\Z}\rho(q^m)q^{m(4+a+b+\sum_{1\leq i<j\leq 4}s_{ij})}~.$$ There are 32 splitting chains $\spl\in\mathcal{S}_4$, so we will save table space below by suppressing the partition labels ``$\ptn_\ell=$" and expressing each level polytope $\bigcap_{\ell=1}^{L(\spl)-1}\{\bm{s}\in\C^6:\re(b+E_{\ptn_\ell}(\bm{s}))>0\}$ by simply listing the conditions $\re(b+E_{\ptn_\ell}(\bm{s}))>0$. Given $\spl\in\mathcal{S}_4$, the level $\ptn_1$ can either contain one part of size 3 (and a singleton), one part of size 2 (and two singletons), two parts of size 2, or four singletons. Thus it will be practical to sort $\spl\in\mathcal{S}_4$ according to the form of $\ptn_1$: 

\begin{itemize}
\item[(1)] There are four $\spl\in\mathcal{S}_4$ with $\ptn_1=\{1,2,3\}\{4\}$. Unsurprisingly, they form a table very similar to the one for $\mathcal{S}_3$ in \Cref{Ex_N2n3}:
\begin{center}
\def\arraystretch{1.2}
\begin{tabular}{c||c|c}
$\spl$ & $J_{\spl,q}(b,\bm{s})$ & $\mathcal{LP}_\spl(b)$\\
\hline\hline
\begin{tabular}{c}
$\{1,2,3,4\}$\\
$\{1,2,3\}\{4\}$\\
$\{1\}\{2\}\{3\}\{4\}$
\end{tabular}
& $\dfrac{(q-1)^2(q-2)}{q^3}\cdot\dfrac{1}{q^{2+b+s_{12}+s_{13}+s_{23}}-1}$ & $\re(2+b+s_{12}+s_{13}+s_{23})>0$\\

\hline
\begin{tabular}{c}
$\{1,2,3,4\}$\\
$\{1,2,3\}\{4\}$\\
$\{1,2\}\{3\}\{4\}$\\
$\{1\}\{2\}\{3\}\{4\}$
\end{tabular}
&
\begin{tabular}{r}
$\dfrac{(q-1)^3}{q^3}\cdot\dfrac{1}{q^{2+b+s_{12}+s_{13}+s_{23}}-1}$\\
$\cdot\dfrac{1}{q^{1+b+s_{12}}-1}$
\end{tabular}
&
\begin{tabular}{c}
$\re(2+b+s_{12}+s_{13}+s_{23})>0$\\
and\\
$\re(1+b+s_{12})>0$
\end{tabular}\\

\hline
\begin{tabular}{c}
$\{1,2,3,4\}$\\
$\{1,2,3\}\{4\}$\\
$\{1,3\}\{2\}\{4\}$\\
$\{1\}\{2\}\{3\}\{4\}$
\end{tabular}
&
\begin{tabular}{r}
$\dfrac{(q-1)^3}{q^3}\cdot\dfrac{1}{q^{2+b+s_{12}+s_{13}+s_{23}}-1}$\\
$\cdot\dfrac{1}{q^{1+b+s_{13}}-1}$
\end{tabular}
&
\begin{tabular}{c}
$\re(2+b+s_{12}+s_{13}+s_{23})>0$\\
and\\
$\re(1+b+s_{13})>0$
\end{tabular}\\

\hline
\begin{tabular}{c}
$\{1,2,3,4\}$\\
$\{1,2,3\}\{4\}$\\
$\{2,3\}\{1\}\{4\}$\\
$\{1\}\{2\}\{3\}\{4\}$
\end{tabular}
&
\begin{tabular}{r}
$\dfrac{(q-1)^3}{q^3}\cdot\dfrac{1}{q^{2+b+s_{12}+s_{13}+s_{23}}-1}$\\
$\cdot\dfrac{1}{q^{1+b+s_{23}}-1}$
\end{tabular}
&
\begin{tabular}{c}
$\re(2+b+s_{12}+s_{13}+s_{23})>0$\\
and\\
$\re(1+b+s_{23})>0$
\end{tabular}
\end{tabular}
\end{center}
All four of the splitting chains in the table are reduced and each satisfies $J_{\spl,q}(0,\bm{s})=I_{\spl,q}(\bm{s})$. There are also four $\spl\in\mathcal{S}_4$ satisfying $\ptn_1=\{1,2,4\}\{3\}$, and their table is obtained by simply transposing the indices $3$ and $4$ in the table above. Similarly, there are another four $\spl\in\mathcal{S}_4$ satisfying $\ptn_1=\{1,3,4\}\{2\}$ and another four satisfying $\ptn_1=\{2,3,4\}\{1\}$. Thus there are 16 distinct $\spl\in\mathcal{S}_4$ such that $\ptn_1$ has a part of size 3, and all of them are reduced.

\item[(2)] There are six $\spl\in\mathcal{S}_4$ such that $\ptn_1$ contains a single part of size 2. All six are reduced and satisfy $J_{\spl,q}(0,\bm{s})=I_{\spl,q}(\bm{s})$:

\begin{center}
\def\arraystretch{1.2}
\begin{tabular}{c||c|c}
$\spl$ & $J_{\spl,q}(b,\bm{s})$ & $\mathcal{LP}_\spl(b)$\\
\hline\hline
\begin{tabular}{c}
$\{1,2,3,4\}$\\
$\{1,2\}\{3\}\{4\}$\\
$\{1\}\{2\}\{3\}\{4\}$
\end{tabular}
& $\dfrac{(q-1)^2(q-2)}{q^3}\cdot\dfrac{1}{q^{1+b+s_{12}}-1}$ & $\re(1+b+s_{12})>0$\\

\hline
\begin{tabular}{c}
$\{1,2,3,4\}$\\
$\{1,3\}\{2\}\{4\}$\\
$\{1\}\{2\}\{3\}\{4\}$
\end{tabular}
& $\dfrac{(q-1)^2(q-2)}{q^3}\cdot\dfrac{1}{q^{1+b+s_{13}}-1}$ & $\re(1+b+s_{13})>0$\\

\hline
\begin{tabular}{c}
$\{1,2,3,4\}$\\
$\{1,4\}\{2\}\{3\}$\\
$\{1\}\{2\}\{3\}\{4\}$
\end{tabular}
& $\dfrac{(q-1)^2(q-2)}{q^3}\cdot\dfrac{1}{q^{1+b+s_{14}}-1}$ & $\re(1+b+s_{14})>0$\\

\hline
\begin{tabular}{c}
$\{1,2,3,4\}$\\
$\{2,3\}\{1\}\{4\}$\\
$\{1\}\{2\}\{3\}\{4\}$
\end{tabular}
& $\dfrac{(q-1)^2(q-2)}{q^3}\cdot\dfrac{1}{q^{1+b+s_{23}}-1}$ & $\re(1+b+s_{23})>0$\\

\hline
\begin{tabular}{c}
$\{1,2,3,4\}$\\
$\{2,4\}\{1\}\{3\}$\\
$\{1\}\{2\}\{3\}\{4\}$
\end{tabular}
& $\dfrac{(q-1)^2(q-2)}{q^3}\cdot\dfrac{1}{q^{1+b+s_{24}}-1}$ & $\re(1+b+s_{24})>0$\\

\hline
\begin{tabular}{c}
$\{1,2,3,4\}$\\
$\{3,4\}\{1\}\{2\}$\\
$\{1\}\{2\}\{3\}\{4\}$
\end{tabular}
& $\dfrac{(q-1)^2(q-2)}{q^3}\cdot\dfrac{1}{q^{1+b+s_{34}}-1}$ & $\re(1+b+s_{34})>0$
\end{tabular}
\end{center}

\item[(3)] There are three splitting chains $\spl\in\mathcal{S}_4$ with $\ptn_1=\{1,2\}\{3,4\}$:

\begin{center}
\def\arraystretch{1.2}
\begin{tabular}{c||c|c}
$\spl$ & $J_{\spl,q}(b,\bm{s})$ & $\mathcal{LP}_\spl(b)$\\
\hline\hline
\begin{tabular}{c}
$\{1,2,3,4\}$\\
$\{1,2\}\{3,4\}$\\
$\{1\}\{2\}\{3\}\{4\}$
\end{tabular}
& $\dfrac{(q-1)^3}{q^3}\cdot\dfrac{1}{q^{2+b+s_{12}+s_{34}}-1}$ & $\re(2+b+s_{12}+s_{34})>0$\\

\hline
\begin{tabular}{c}
$\{1,2,3,4\}$\\
$\{1,2\}\{3,4\}$\\
$\{1,2\}\{3\}\{4\}$\\
$\{1\}\{2\}\{3\}\{4\}$
\end{tabular}
&
\begin{tabular}{r}
$\dfrac{(q-1)^3}{q^3}\cdot\dfrac{1}{q^{2+b+s_{12}+s_{34}}-1}$\\
$\cdot\dfrac{1}{q^{1+b+s_{12}}-1}$
\end{tabular}
&
\begin{tabular}{c}
$\re(2+b+s_{12}+s_{34})>0$\\
and\\
$\re(1+b+s_{12})>0$
\end{tabular}\\

\hline
\begin{tabular}{c}
$\{1,2,3,4\}$\\
$\{1,2\}\{3,4\}$\\
$\{1\}\{2\}\{3,4\}$\\
$\{1\}\{2\}\{3\}\{4\}$
\end{tabular}
&
\begin{tabular}{r}
$\dfrac{(q-1)^3}{q^3}\cdot\dfrac{1}{q^{2+b+s_{12}+s_{34}}-1}$\\
$\cdot\dfrac{1}{q^{1+b+s_{34}}-1}$
\end{tabular}
&
\begin{tabular}{c}
$\re(2+b+s_{12}+s_{34})>0$\\
and\\
$\re(1+b+s_{34})>0$
\end{tabular}
\end{tabular}
\end{center}
There are also three $\spl\in\mathcal{S}_4$ satisfying $\ptn_1=\{1,3\}\{2,4\}$, and their table is obtained by simply transposing the indices $2$ and $3$ in the table above. Similarly, there are another three $\spl\in\mathcal{S}_4$ satisfying $\ptn_1=\{1,4\}\{2,3\}$. Thus there are nine distinct $\spl\in\mathcal{S}_4$ such that $\ptn_1$ has a pair of parts of size 2, but only three of them are reduced (the three that have length 2).

\item[(4)] Finally, we have only one $\spl\in\mathcal{S}_4$ with $\ptn_1=\{1\}\{2\}\{3\}\{4\}$. It is $\spl=(\{1,2,3,4\},~\{1\}\{2\}\{3\}\{4\})$, which is clearly reduced with $$J_{\spl,q}(b,\bm{s})=\dfrac{(q-1)(q-2)(q-3)}{q^3}\qquad\text{and}\qquad\mathcal{LP}_\spl(b)=\C^6~.$$
\end{itemize}

Combining all the level polytope conditions ``$\re(\cdot)>0$" from (1)-(4) with the root polytope condition $\re(3+a+b+\sum_{1\leq i<j\leq 4})>0$, we conclude that $\bm{s}\in\Omega_4(a,b)$ if and only if $\bm{s}$ simultaneously satisfies the conditions
\begin{align*}
\re(1+b+s_{ij})>0&\qquad\text{for}\qquad 1\leq i<j\leq 4~,\\
\re(2+b+s_{12}+s_{34})>0&~,\\
\re(2+b+s_{13}+s_{24})>0&~,\\
\re(2+b+s_{14}+s_{23})>0&~,\\
\re(2+b+s_{ij}+s_{ik}+s_{jk})>0&\qquad\text{for}\qquad 1\leq i<j<k\leq 4~,\qquad\text{and}\\
\re(3+a+b+s_{12}+s_{13}+s_{14}+s_{23}+s_{24}+s_{34})>0&~.
\end{align*} 
For every $\bm{s}$ in this region, the integral $Z_4^\rho(K,a,b,\bm{s})$ converges absolutely to
\begin{align*}
Z_4^\rho(K,a,b,\bm{s})&=\frac{1-q^{-(4+a+b+\sum_{1\leq i<j\leq 4}s_{ij})}}{1-q^{-(3+a+b+\sum_{1\leq i<j\leq 4}s_{ij})}}\cdot\sum_{m\in\Z}\rho(q^m)q^{m(4+a+b+\sum_{1\leq i<j\leq 4}s_{ij})}\\
\cdot\frac{1}{q^3}\Bigg\{&(q-1)(q-2)(q-3)+(q-1)^2(q-2)\Bigg[\frac{1}{q^{2+b+s_{12}+s_{13}+s_{23}}-1}+\frac{1}{q^{2+b+s_{12}+s_{14}+s_{24}}-1}\\
&~~~~~~~~~~~~~~~~~~~~~~~~~~~~~~~~~~~~~~~~~~~~~~~~+\frac{1}{q^{2+b+s_{13}+s_{14}+s_{34}}-1}+\frac{1}{q^{2+b+s_{23}+s_{24}+s_{34}}-1}\\
&~~~~~~~~~~~~~~~~~~~~~~~~~~~~~~~~~~~~~~~~~~~~~~~~+\dfrac{1}{q^{1+b+s_{12}}-1}+\dfrac{1}{q^{1+b+s_{13}}-1}+\dfrac{1}{q^{1+b+s_{14}}-1}\\
&~~~~~~~~~~~~~~~~~~~~~~~~~~~~~~~~~~~~~~~~~~~~~~~~+\dfrac{1}{q^{1+b+s_{12}}-1}+\dfrac{1}{q^{1+b+s_{13}}-1}+\dfrac{1}{q^{1+b+s_{14}}-1}\Bigg]\\
&~~~~~~~~~~~~~~~~~~~~~~~+(q-1)^3\Bigg[\dfrac{1}{q^{2+b+s_{12}+s_{34}}-1}\left(1+\dfrac{1}{q^{1+b+s_{12}}-1}+\dfrac{1}{q^{1+b+s_{34}}-1}\right)\\
&~~~~~~~~~~~~~~~~~~~~~~~~~~~~~~~~~~~+\dfrac{1}{q^{2+b+s_{13}+s_{24}}-1}\left(1+\dfrac{1}{q^{1+b+s_{13}}-1}+\dfrac{1}{q^{1+b+s_{24}}-1}\right)\\
&~~~~~~~~~~~~~~~~~~~~~~~~~~~~~~~~~~~+\dfrac{1}{q^{2+b+s_{14}+s_{23}}-1}\left(1+\dfrac{1}{q^{1+b+s_{14}}-1}+\dfrac{1}{q^{1+b+s_{23}}-1}\right)\Bigg]\\
&~~~+(q-1)^3\Bigg[\dfrac{1}{q^{2+b+s_{12}+s_{13}+s_{23}}-1}\left(\dfrac{1}{q^{1+b+s_{12}}-1}+\dfrac{1}{q^{1+b+s_{13}}-1}+\dfrac{1}{q^{1+b+s_{23}}-1}\right)\\
&~~~~~~~~~~~~~~~+\dfrac{1}{q^{2+b+s_{12}+s_{14}+s_{24}}-1}\left(\dfrac{1}{q^{1+b+s_{12}}-1}+\dfrac{1}{q^{1+b+s_{14}}-1}+\dfrac{1}{q^{1+b+s_{24}}-1}\right)\\
&~~~~~~~~~~~~~~~+\dfrac{1}{q^{2+b+s_{13}+s_{14}+s_{34}}-1}\left(\dfrac{1}{q^{1+b+s_{13}}-1}+\dfrac{1}{q^{1+b+s_{14}}-1}+\dfrac{1}{q^{1+b+s_{34}}-1}\right)\\
&~~~~~~~~~~~~~~~+\dfrac{1}{q^{2+b+s_{23}+s_{24}+s_{34}}-1}\left(\dfrac{1}{q^{1+b+s_{23}}-1}+\dfrac{1}{q^{1+b+s_{24}}-1}+\dfrac{1}{q^{1+b+s_{34}}-1}\right)\Bigg]\Bigg\}~.
\end{align*}

The terms inside $\{\dots\}$ are grouped to emphasize several facts. The first group $(q-1)(q-2)(q-3)$ vanishes unless $q\geq 4$. The second group $(q-1)^2(q-2)\big[\dots\big]$ vanishes unless $q\geq 3$. The third group $(q-1)^3\big[\dots\big]$ is nonzero for all $q\geq 2$ (and hence all $K$), but collapses from nine terms down to three by part (c) of \Cref{main} when $b=0$ (recall \Cref{Ex_N4}). The last group $(q-1)^3\big[\dots\big]$ is also nonzero for all $q\geq 2$ and corresponds to the 12 splitting chains of length 3 from case (1). 

\newpage
\noindent\textbf{Acknowledgements}: I would like to sincerely thank my advisor Chris Sinclair for all of the support, advice, and stimulating conversations that first inspired this work and led to many improvements.


\bibliography{references}
\bibliographystyle{amsalpha}

\begin{center}
\noindent\rule{4cm}{.5pt}
\vspace{.25cm}

\noindent {\sc \small Joe Webster}\\
{\small Department of Mathematics, University of Oregon, Eugene OR 97403} \\
email: {\tt jwebster@uoregon.edu}
\end{center}

\end{document}